\theoremstyle{definition}
\newtheorem{theorem}{Theorem}
\numberwithin{theorem}{section}
\newtheorem{definition}[theorem]{Definition}
\newtheorem{lemma}[theorem]{Lemma}
\newtheorem{claim}[theorem]{Claim}
\newtheorem{corollary}[theorem]{Corollary}
\newtheorem{proposition}[theorem]{Proposition}
\newtheorem{example}[theorem]{Example}
\newtheorem{remark}[theorem]{Remark}
\numberwithin{equation}{section}
\newcommand{\lequardar}{\leq^{\uar\dar}}
\newcommand{\semantics}[1]{[\![\mbox{\em $ #1 $\/}]\!]}
\newcommand{\semanticsalt}[1]{\langle\!\langle\mbox{\em $ #1 $\/}\rangle\!\rangle}
\newcommand{\FF}{\mathbb{F}}
\newcommand{\GG}{\mathbb{G}}
\newcommand{\PP}{\mathbb{P}}
\newcommand{\RR}{\mathbb{R}}
\renewcommand{\AA}{\mathbb{A}}
\newcommand{\Xsyn}{\mathrm{X}^{syn}}
\newcommand{\Xsem}{\mathrm{X}^{sem}}
\newcommand{\PPsyn}{\mathbb{P}^{syn}}
\newcommand{\PPsem}{\mathbb{P}^{sem}}
\newcommand{\QQ}{\mathbb{Q}}
\newcommand{\LL}{\mathbb{L}}
\newcommand{\MM}{\mathbb{M}}
\renewcommand{\SS}{\mathbb{S}}
\newcommand{\TT}{\mathcal{T}}
\newcommand{\Model}{\mathcal{M}}
\newcommand{\Nodel}{\mathcal{N}}
\newcommand{\set}[1]{\{ #1 \}}
\newcommand{\arrowplus}{\overset{+}{\rightarrow} }
\newcommand{\arrowminus}{\overset{-}{\rightarrow} }
\newcommand{\arrowdot}{\overset{\cdot}{\rightarrow} }
\newcommand{\arrowm}{\overset{m}{\rightarrow} }
\newcommand{\arrowo}{\overset{\circ}{\rightarrow} }
\newcommand{\arrowcirc}{\arrowo}
\newcommand{\proves}{\vdash}
\newcommand{\proveswc}{\vdash_{\mbox{\sc wc}}}
\newcommand{\modelswc}{\models_{\mbox{\sc wc}}}
\newcommand{\uar}{+}
\newcommand{\dar}{-}
\newcommand{\arrowmprime}{\overset{m'}{\rightarrow} }
\newcommand{\quadiff}{\quad \mbox{ iff } \quad}
\renewcommand{\o}{\circ}
\renewcommand{\O}{\o}
\newcommand{\Types}{\mathcal{T}}
\newcommand{\rem}[1]{\relax}
\newcommand{\GammaBox}{\Gamma_{\Box}}
\newcommand{\Mar}{\mbox{\sf Mar}}
\renewcommand{\o}{\circ}
\newcommand{\wcintrees}{\mbox{\sc wc}}
\newcommand{\wcone}{\mbox{\sc wc}${}_1$}
\newcommand{\wctwo}{\mbox{\sc wc}${}_2$}
\newcommand{\wcthree}{{\sc wc}${}_3$}
\newcommand{\congrule}{\mbox{\sc cong}}
\newcommand{\monorule}{\mbox{\sc mono}}
\newcommand{\antirule}{\mbox{\sc anti}}
\newcommand{\pointrule}{\mbox{\sc point}}
\newcommand{\polrule}{\mbox{\sc pol}}
\newcommand{\transrule}{\mbox{\sc trans}}
\newcommand{\symmrule}{\mbox{\sc symm}}
\newcommand{\weakrule}{\mbox{\sc weak}}
\newcommand{\posrule}{\mbox{\sc pos}}
\newcommand{\posruleprime}{\mbox{\sc pos}'}
\newcommand{\wcrule}{\mbox{\sc wc}}
\newcommand{\reflrule}{\mbox{\sc refl}}
\newlength{\mathfrwidth}
\newsavebox{\mathfrbox}
\newenvironment{mathframe}
    {\begin{lrbox}{\mathfrbox}\begin{minipage}{\mathfrwidth}\begin{center}}
    {\end{center}\end{minipage}\end{lrbox}\noindent\fbox{\usebox{\mathfrbox}}}
 \renewcommand{\phi}{\varphi}
 \newbox\qqBoxA
\newdimen\qqCornerHgt
\newdimen\qqArgHgt
\def\Qq #1{%
    \setbox\qqBoxA=\hbox{$#1$}%
    \qqArgHgt=\ht\qqBoxA%
    \ifnum     \qqArgHgt<\qqCornerHgt \qqArgHgt=0pt%
    \else \advance \qqArgHgt by -\qqCornerHgt%
    \fi \raise\qqArgHgt\hbox{$\ulcorner$} \box\qqBoxA %
    \raise\qqArgHgt\hbox{$\urcorner$}}
 \newcommand{\Qq}[1]{q_{#1}} 
 \newcommand{\OP}{\mbox{\emph{OP}}}
\renewcommand{\arrowdot}{\to} 
\newcommand{\opp}{\mbox{\scriptsize \sf op}}  
\begin{document}

\title{A Completeness Result for Inequational Reasoning\\
 in a Full Higher-Order Setting}
\author{
  Lawrence S. Moss\thanks{Department of Mathematics, Indiana University, Bloomington IN 47401,  {\tt lmoss@indiana.edu}.
  This work was partially supported by a grant from the Simons Foundation ($\#$245591 to Lawrence Moss).}
 \and
 Thomas F. Icard\thanks{Department of Philosophy, Stanford University, Stanford, CA, USA,   {\tt icard@stanford.edu}} \and
}    
\date{ }

\maketitle

\begin{mathframe}
This paper will appear in a volume in College Publications' Tributes series, edited by Katalin Bimb\'{o}.\\
It is dedicated to the memory of J.~Michael Dunn.
\end{mathframe}

\begin{abstract}
This paper obtains a completeness result for inequational
 reasoning with applicative terms without variables in a setting where the 
intended semantic models are the full structures, the full type hierarchies over preorders for the base types.
The syntax allows for the 
specification that a given constant be interpreted as a monotone function, or an antitone function, or both.
There is a natural set of five rules for this inequational reasoning. 
One can add variables and also add a substitution rule, but we observe that this logic would be incomplete for full structures.
This is why the completeness result in this paper pertains to terms without variables.   
Since the completeness is already known for the class of general (Henkin) structures, 
we are interested in full structures.
We  obtain the first  
result on this topic.
Our result is not optimal because we restrict to base preorders which have a weak completeness property: 
every pair of elements has an
upper bound and a lower bound.  To compensate we add several rules to the logic.
We also present extensions and variations of our completeness result.
\end{abstract}


\section{Introduction}

\paragraph{Tonoids recast}
In his work on very general algebraic semantics of non-classical logics, Dunn~\cite{Dunn93} introduces the notion of a 
\emph{tonoid}.  This is a structure of the form $(A,\leq, \OP)$, where $\AA = (A,\leq)$ is a poset, and $\OP$ is a set of 
finite-arity \emph{function symbols}, each with a \emph{tonic type} $(s_1,\ldots, s_n)$, where each $s_i$ is either $+$ or $-$.
A familiar example done this way takes $\AA$ to be $2 = \set{0,1}$ with $0 < 1$, and $\OP = \set{\to}$, where $\to$
is taken as an operation with tonicity type $(-,+)$.  
The formal requirement is that if $f\in \OP$ is of arity $n$, then $f: A^n\to A$ is either isotone or antitone in the $i$th argument, depending 
on whether $s_i$ is $+$ or $-$.    To spell out the requirement in more detail, recall that a function $g:A\to A$ is \emph{isotone}
(here called \emph{monotone}) if $a\leq b$ implies $g(a) \leq g(b)$; and 
 $g:A\to A$ is \emph{antione} if $a\leq b$ implies $g(b) \leq g(a)$.   Suppose that $f$ is of arity $3$ and its tonic type is $(+,-,+)$.
 Then our requirement is:
\begin{equation}\label{req1}
\mbox{whenever $a_1 \leq a_2$, $b_2 \leq b_1$, and $c_1\leq c_2$, $f(a_1,b_1,c_1) \leq f(a_2, b_2, c_2)$}.
\end{equation}
The idea is to abstract a feature of material implication:
it is antitone in its first argument
and monotone in its second.
Here are two equivalent ways to state the general requirement (\ref{req1}).
The first uses the concept of the \emph{opposite} poset $\AA^{\opp}$; this is $\AA$
with the converse order.  Our requirement (\ref{req1}) now would say that 
\begin{equation}\label{req2}
f: \AA \times \AA^{\opp} \times \AA \to \AA .
\end{equation}
In this, $\times$ denotes the product operation on posets,  and the arrow $\to$ means ``monotone function.''
This formulation (\ref{req2})
 can be recast by currying,  replacing a function of arity $3$ by a higher-order function of the following form:
\begin{equation}\label{req3}
f: \AA \to \AA^{\opp} \to \AA \to \AA  .
\end{equation}
 So $\AA\to \AA$ is the set of monotone functions,
taken as a poset $\PP$ with the pointwise order. Then $\AA^{\opp} \to \AA\to \AA$ is 
the set of monotone functions from $\AA^{\opp}$ to $\PP$, again taken as a poset which we call $\QQ$.
  Finally, $\AA \to \AA^{\opp} \to \AA \to \AA$ is the set of monotone functions from $\AA$ to $\QQ$.
Going one step further from (\ref{req3}), our requirement may be rephrased once again.
\begin{equation}\label{req4}
f: \AA \arrowplus \AA\arrowminus \AA \arrowplus \AA  .
\end{equation}
In (\ref{req4}), the operative notation is that $\PP\arrowplus\QQ$ denotes the set of monotone functions from $\PP$ to $\QQ$,
and $\PP\arrowminus\QQ$ denotes the set of antitone functions from $\PP$ to $\QQ$.  In both cases, the order is pointwise.

Up until now, all we have done is to rephrase the definition of a tonoid in terms of higher-order functions in the realm of posets,
something that Dunn did not need to do.   We are indeed interested exactly in higher-order reasoning about ordered structures.
Instead our result is aimed at settings where reasoning about monotone/antitone functions plays a central role.  
One such setting is the area of programming language semantics where the order represents subtyping.
Another is natural language inference where higher-order functions are commonplace, following the tradition in 
Montague grammar and type-logical grammar.  Concerning inference, van Benthem~\cite{vanBenthemEssays86}
pointed out the usefulness of monotonicity in connection with the higher-order semantics of determiners
and saw that this topic would be a central part of logical studies connected to natural language.  
The connection to higher-order preorders in this area was first made in~\cite{moss:polarity},
and that paper is also the source of the observations behind the moves from (\ref{req1}) to (\ref{req4}).

\paragraph{Friedman's Theorem on the STLC}
The results that we are after in this paper are modeled on the completeness result established
 by Friedman~\cite{Friedman1975}
for the simply-typed lambda calculus (STLC).
To explain our contribution, let us review part of Friedman's contribution.
We change the notation and presentation of~\cite{Friedman1975} to set the stage for our work.

The STLC begins with a set $B$ of base types $\beta$.   The full set $\Types$ of types is the closure of $B$ under
the following rule: if $\sigma$ and $\tau$ are types, so is $\sigma\to\tau$.   Then one forms the set of typed terms
$t:\sigma$
 of the STLC
using application of one term to another, variables, and abstraction.
The main assertions in the STLC are identities $t=u$ between terms of the same type.
The semantics is of interest here.   The primary models are  \emph{full} (or \emph{standard}) type structures.
Beginning with sets $X_{\beta}$ for $\beta\in B$, one constructs sets $X_{\sigma}$ for all types $\sigma\in\Types$ by recursion:
$X_{\sigma\to\tau}$ is the set 
$(X_{\tau})^{X_{\sigma}}$ 
of \emph{all} functions from $X_{\sigma}$ to $X_{\tau}$.   Then one interprets each typed term $t:\sigma$ by an element $\semantics{t}\in X_{\sigma}$.
Naturally, one is interested in the relation on terms $\models t = u $ defined by:
\begin{equation}
\label{eq-consequence}
\models t = u \mbox{ iff } \semantics{t} = \semantics{u} \mbox{ in all full structures}
\end{equation}
The main completeness result from~\cite{Friedman1975} is that $\models t = u $ iff the statement $t = u$ can be proved in a certain logical system 
with very natural rules.
The rules of the system are the reflexive, symmetric, and transitive rules of identity, the congruence 
 rule for application, and the $\alpha$, $\beta$, and $\eta$ rules of the STLC.   So the completeness of the system tells us that 
 an identity assertion holds in all full structures iff it is provable
 from  $\alpha$, $\beta$, and $\eta$ on top of the expected rules of identity.

\paragraph{What we are doing}
Here is how things change in this paper.
We would like the main assertions in our system to be \emph{inequalities} $t\leq u$ instead of identites.
Thus, we want our semantic spaces to be \emph{preorders} rather than unstructured sets.
  Beginning with an assignment of preorders $(\PP_\beta)_{\beta\in B}$ to the base types,
we take preorders for function types $\PP_{\sigma\to\tau}$ to be the set of all functions, as above, but endowed with the 
pointwise order.
This is what we mean by \emph{full} models in our title and throughout the paper.
   Moreover, we allow our type system to insist that a given function symbol be interpreted by a 
monotone function, or an antitone one (or both).  

The logical systems in this paper are formulated \emph{without variables}: the only terms are those which can be 
constructed from the typed \emph{constants} using application.    This might seem to be a severe limitation,
so let us motivate it from several angles.   First of all, monotonicity calculi without variables 
are useful in several settings (see~\cite{IcardMoss2014}).
Second, the completeness results of interest in this paper are not available if one has variables
(see Section~\ref{section-digression}).
This is a parallel to the matter of equational reasoning with second-order terms (even without abstraction):
 the natural logical system would add substitution to the rules mentioned above.
  This system is not complete for full models.
Finally, the authors and William Tune have formulated ``order-aware'' versions of the lambda calculus (see~\cite{IcardMoss21,IMT,Tune2016}).
The 
type system expands that of the usual simply typed lambda calculus by permitting the formation of 
several additional kinds of function types: monotone functions $\sigma\arrowplus\tau$, antitone functions $\sigma\arrowminus\tau$,
and others.  
Tune~\cite{Tune2016} is a variation on this which incorporates something like the ``{\sf op}'' operation on preorders which we have seen above in
(\ref{req2}).
All work in this area expands the syntax of terms using variables and abstraction operations.  Finally, the basic assertions in the language
include inequalities between objects of the same type.  What is more, it includes some inequalities between objects of
different (but related) types.   The formulation of the syntax is non-trivial, and the same goes for the proof rules.
  In any case, as we already mentioned,
the logical systems in the area cannot be complete for full models.   They are complete for wider classes of 
``Henkin'' models. 
(The analogous structures for the STLC
 in~\cite{Friedman1975} are called \emph{pre-structures}, and sometimes they are called \emph{applicative structures}.)
 But this is rather an expected result, since one can build a model canonically from the proof system.
This is not what we are doing in this paper.  We are building full models,
and we are studying applicative order terms without variables or abstraction. 
 Our work is thus drastically simpler on the syntactic side, and more complex on the semantic side:
 we  call on and develop results specific 
to preorders (see Sections~\ref{section-background},
\ref{section-complete}, and \ref{section-technical}).  

The main logical system in this paper is presented in Sections~\ref{section-syntax} (the syntax and semantics)
and~\ref{section-proof-system} (the proof system).    Briefly, the syntax allows us to declare that a given function symbol
$f$ be interpreted as a monotone function by writing $f^\uar$.
We also might declare that
$f$ be interpreted as an antitone function by writing $f^\dar$. The basic assertions in the system are inequalities $t\leq u$
between terms of the same type.
The main semantic objects are full structures in the setting of preorders.   
The consequence relation $\Gamma\models t^* \leq u^*$ is defined much as in (\ref{eq-consequence}), except that we use
an order relation in the obvious way, and that we permit a set $\Gamma$ of extra hypotheses.

The main completeness result ought to be a completeness result for a logical system. 
We would like to have 
 $\Gamma\proves t^*\leq u^*$ 
 iff 
  $\Gamma\models t^*\leq u^*$.
 We have a sound logical system; the rules are in Figure~\ref{fig-rules1}. 
We did not obtain a completeness theorem for this system, though we believe it to hold. 
But we do have a related completeness result, Theorem~\ref{theorem-main-wc}.
The formulation restricts the full models
to full models whose base preorders are \emph{weakly complete} (every pair of elements has an upper bound and a lower bound)
and the logic adds a few rules to compensate.
Curiously, there is an echo here from tonoids.  We did not discuss the second requirement on tonoids that the 
underlying poset $\AA$ be a bounded (distributive) lattice and that the 
operation symbols either respect $0$ or $1$.  Every lattice is trivially a weakly complete preorder.

\paragraph{Related work}
We have already mentioned papers on monotonicity calculi.  This paper is the first in the area to present a completeness results for full structures,
the intended semantic models.

The original completeness theorem of Friedman which we mentioned above has been extended
in a few directions.     
Dougherty and Subrahmanyam~\cite{DoughertyS00} extend
the STLC by adding product and coproduct types and a terminal type,
 and they obtain the completeness theorem for full structures.
As far as we know, this is the only extension that obtains completeness for full models on sets.
Several papers move from sets to other categories in order to obtain completeness results,
and the completeness here is the strong completeness theorem $\Gamma\models t = u$ iff  $\Gamma\proves t = u$
which is not available in sets.  For more on this topic, see Awodey~\cite{Awodey00}.

\subsection{Background: preorders and polarized preorders}
\label{section-background}

\begin{definition}
A \emph{preorder} is a pair $\PP=(P,\leq)$, where $P$ is a set, and  $\leq$ is a reflexive and transitive relation on $P$.
Although we technically should use $P$ for the universe of the preorder, we sometimes write $p\in \PP$ when we mean $p\in P$.
If $p\leq q $ and $q\leq p$, then we write $p \equiv q$.   It is possible that $p\equiv q$ without having $p = q$.

Let $\PP$ and $\QQ$ be preorders, and consider a function $f: P\to Q$.
\begin{enumerate}
\item $f$ is \emph{monotone} if whenever $p\leq q$ in $\PP$, $f(p) \leq f(q)$ in $\QQ$.
We also say that $f$ is \emph{order-preserving} in this case.    We write $f^\uar:\PP\to\QQ$.
\item $f$ is  \emph{antitone} if whenever $p\leq q$ in $\PP$, $f(q) \leq f(p)$ in $\QQ$.
We write $f^\dar:\PP\to\QQ$.
\item $f$ is \emph{order-reflecting} whenever $f(p) \leq f(q)$ in $\QQ$, $p\leq q$ in $\PP$.
\item $f$ is an \emph{order embedding} if $f$ is one-to-one, and preserves and reflects the order.
\end{enumerate}
\end{definition}
The logical systems in this paper are about monotone and antitone functions.   But some of the proofs also use the concepts
of order-reflecting functions and order embeddings.

\begin{example}\label{example-up-down}
Here are some examples of the kinds of facts of interest in this paper.
\begin{enumerate}
\item If $f^\uar\leq g^\dar:\PP\to\QQ$, and $a \leq c \geq b$, then $f(a)\leq g(b)$.
This is because $f^\uar(a) \leq f^\uar(c) \leq g^\dar(c) \leq g^\dar(b)$.
\item  If $f^\dar\leq g^\uar:\PP\to\QQ$, and $a \geq c \leq b$,  then again $f(a)\leq g(b)$.
This is similar: $f^\dar(a) \leq f^\dar(c) \leq g^\uar(c) \leq g^\uar(b)$.

\item On the other hand, here 
is an example where $f^\dar \leq g^\uar$, $a \leq c \geq b$, but $f(a) \not\leq g(b)$.
Let $\PP$ be the poset $\set{a,b,c}$ with $a < c >  b$, and let $\QQ$ be  $\set{0,1}$ with $0 < 1$.
Let $f(a) = 1$, $f(b) = 0$, and $f(c) = 0$.   
Let $g(a) = 1$, $g(b) =0$, and $g(c) = 1$.
\label{notsoundalways}
\item  It is possible for a function to be both monotone and antitone.
In our notation, 
it is possible that $f^\uar:\PP\to \QQ$ and also $f^\dar:\PP\to \QQ$.
One way for this to happen is when $f$ is a constant function.
Another way is when $\PP$ is the flat preorder (also called the discrete preorder) on some set $S$: $p_1 \leq p_2$ iff $p_1 = p_2$.
\end{enumerate}
\end{example}

Some additional definitions and constructions concerning preorders appear 
later in this paper, closer to where they are used.   At this point, we introduce
\emph{polarized preorders}, a
type of structure that extends preorders.

\begin{definition}\label{def-ordered-fn-symbols}
A \emph{polarized preorder} is a tuple
 \[\FF= (F,\leq,\uar,\dar),\]
  where $F$ is a set, $\leq$ is a pre-order on $F$, and $\uar$ and $\dar$ are subsets of $F$.
\end{definition}

\begin{example}
\label{example-polarized-preorder}
For any preorders $\PP$ and $\QQ$, we have a polarized preorder $\QQ^{\PP}$ defined as follows.
The set of points of $\QQ^\PP$ is the set $Q^P$ of all functions from $P$ to $Q$.
The order is the pointwise order.   We take $\uar$ to be the set of monotone $f:\PP\to\QQ$,
and $\dar$ to be the set of antitone $f:\PP\to\QQ$.
\end{example}

We also have abstract examples.  
   In such polarized preorders, we think of the 
sets $\uar$ and $\dar$ as providing a specification for what we want them to be in an interpretation.
  Thus we think of them as ``tagged'' $\uar$ or $\dar$ (or possibly neither, or both).
To say that $f$ is tagged $\uar$ just means that $f\in\uar$; similarly for $\dar$.
(A given function symbol might thus be tagged with neither $\uar$ or $\dar$, and it might also be tagged with both
symbols.)
We use $f^\uar$ to range over elements  $f\in F$ which are tagged $\uar$, 
and we also use $f^\dar$ to range over elements  $f\in F$ which are tagged $\dar$.
(And when we write $f$ without $\uar$ or $\dar$, we mean an arbitrary element of $F$.)

\begin{definition} Let $\FF$ be a polarized preorder, and let $\PP$ and $\QQ$ be preorders.
An \emph{interpretation of $\FF$ in $\PP$ and $\QQ$} is a 
 function $\semanticsalt{\ }:\FF \to \QQ^{\PP}$ which is monotone and 
 preserves polarities.  That is, if $f^\uar$, then $\semanticsalt{f}$ is monotone, and
 if $f^\dar$, then $\semanticsalt{f}$ is antitone.
 \label{definition-interpretation}
\end{definition}

This definition will not be used much in this paper, but is shows where things are going.
We think of $\FF$ as ``syntax'' and $ \QQ^{\PP}$ as  the ``semantic space'', and $\semanticsalt{\ }$ as 
the interpretation of the syntax in that space.

\section{Syntax and Semantics}

This section sets the stage for the rest of the paper by presenting the syntax and semantics of our system.

\subsection{Syntax, and semantics in full structures}
\label{section-syntax}

We begin with a set $B$ of \emph{base types}.  We use the letter $\beta$ for these.   We make no assumption on the set $B$, and we also do not vary it in what follows.  
Henceforth we leave $B$ out of our notation.

The full set $\Types$ of types is the smallest set such that every base type $\beta$ belongs to $\Types$,
and if $\sigma$ and $\tau$ belong to $\TT$, then so does $\sigma\to\tau$.  The types which are not base types are called
\emph{function types}. 

\begin{definition}\label{D:syntax}
An \emph{(ordered) signature} is a family 
$(\FF_{\beta})_{\beta}$ of preorders, one for each base type, and a family
$(\FF_{\sigma})_{\sigma}$ of polarized preorders, one for each function type $\sigma\in\Types$.
We form \emph{typed terms} $t: \sigma$ by the following recursion:
\begin{enumerate}
\item If $f\in \FF_{\sigma}$, then $f:\sigma$ is a typed term.
\item If $t:\sigma\to\tau$ and $u:\sigma$ are typed terms, then $tu:\tau$ is a typed term.
\end{enumerate}
When we need notation for a signature, we usually write $\FF = (\FF_{\sigma})_{\sigma}$ and think of these
as polarized, except for the base types.
\end{definition}

We use  notation like $t:\sigma$, $u:\tau$, etc., for typed terms.
Usually we drop the types for readability.
Indeed, we only supply the types to make a point about them.   For example,
in (\ref{eq-semantics}) below, the second equation exhibits the types.
If we were to write $\semantics{tu} = \semantics{t}(\semantics{u})$ without the types,
it could cause a confusion on first reading.
We could use parentheses as well, but these will not be necessary.   When we speak of terms, we usually do not mention 
the underlying signature.

The assertions in the language are inequalities 
$t: \sigma\leq u: \sigma$
between terms of the same type.
Again, we usually drop the types and just write $t\leq u$.

\begin{example}
\label{example-first-signature}
For all relations $R$, we write $R^\star$ for the reflexive and transitive closure of $R$.  So $R^\star$ is the smallest preorder including $R$.

Let  $\beta$ be a base type, let
$\tau$ be any type, so that
 $\beta\to\tau$ is a function type, and let $\FF$ be the signature given by
\[
\begin{array}{lcl}
\FF_{\beta} & = & (\set{a,b,c}, \emptyset^{\star})\\
\FF_{\beta\to\tau} = (F_{\beta\to\tau},\leq, +,-)& = & (\set{f, g}, \set{(f,g)}^{\star}, \set{f}, \set{g})\\
\FF_{\tau\to(\beta\to\tau)} = (F_{\tau\to(\beta\to\tau)},\leq, +,-)& = & (\set{\phi},\emptyset^{\star}, \emptyset,\set{\phi})\\
\end{array}
\] 
For other function types $\mu$, we take $F_{\mu} = (\emptyset, \emptyset, \emptyset,\emptyset)$.
In this signature, we are taking $a$, $b$, and $c$ to be symbols of type $\beta$.  There is no order relation among
these, but our signature does have the reflexivity assertions $a\leq a$, $b\leq b$, $c\leq c$.   
Since $\beta$ is a base type, 
there is no polarization assertion for these symbols.
As for $\beta\to\tau$, we have two symbols $f$ and $g$.   Our signature records $f\leq g$ and that $f\in\uar$ and $g\in\dar$.
When working with this signature, we usually will keep the polarization assertions in mind by repeatedly tagging the symbols.
So we would summarize the polarized preorder $\FF_{\beta\to\tau}$ by simply writing $f^\uar \leq g^\dar$.

Typed terms in our signature include 
$\phi^\dar(f^\uar(a)): \beta\to\tau$.
We could omit the parentheses without risking confusion and also the type;
we then would just write $\phi^\dar f^\uar a$.
An example term of type $\tau$ is $(\phi^\dar f^\uar a)b$. 
\end{example}

\paragraph{Semantics: full structures}

Fix  a 
family of preorders $(\PP_{\beta})_{\beta}$, one for each base type $\beta$.
The family $(\PP_{\beta})_{\beta}$
induces a family  of preorders $(\PP_{\sigma})_{\sigma}$ by
\begin{equation}
\label{fulltypes}
 \PP_{\sigma\arrowdot\tau} = ((\PP_{\tau})^{\PP_{\sigma}}, \leq)
\end{equation}
where $\leq$ is the \emph{pointwise order} on the function set $(P_{\tau})^{P_{\sigma}}$.
For function types $\sigma\to\tau$
we use the polarized preorder structure mentioned in Example~\ref{example-polarized-preorder}:
 for
$f:  P_{\sigma}\to P_{\tau}$, 
we tag $f^\uar$ if $f$ is monotone, and we tag $f^\dar$ if $f$ is antitone.
If $f$ is neither monotone nor antitone, it would be 
 tagged with neither polarity,
If $f$ were both monotone and antitone (see Example~\ref{example-first-signature}(4)), 
it would be tagged both $\uar$ and $\dar$.
What we have built is called the \emph{full preorder type structure} over $(\PP_{\beta})_{\beta}$.

\begin{definition}
\label{def-full-model}
Fix a  signature  $\FF$. 
A \emph{full $\FF$-structure} is a family of preorders
\[ \Model = ((\PP_{\sigma})_{\sigma},  \semantics{\ }) \]
where $(\PP_{\sigma})_{\sigma}$
is the full preorder type structure over $(\PP_{\beta})_{\beta}$
together with 
 a function  $\semantics{\ }$ defined on the typed terms over $\FF$ with the following properties:
 \begin{enumerate}
 \item If $f$ in $\FF_{\sigma}$, then $\semantics{f}\in P_{\sigma}$.
\item For function types $\sigma$, $\semantics{ \ }$ restricts to a map
 $\semantics{\ }_{\sigma}: \FF_{\sigma}\to \PP_{\sigma}$ which is monotone and preserves polarity.
 \end{enumerate}
In other words: if $f\leq g$ in $\FF_{\sigma}$, then $\semantics{f}\leq \semantics{g}$ in $\PP_{\sigma} $;
if $f^{\uar}: \sigma\to\tau$, then 
$\semantics{f}: \PP_{\sigma}\to\PP_{\tau}$ is monotone;  and 
if $f^{\dar}: \sigma\to\tau$, then 
$\semantics{f}: \PP_{\sigma}\to\PP_{\tau}$ is antitone.

Let us emphasize that in a full structure,  (\ref{fulltypes}) holds.
Thus, in a full $\FF$-structure, each function type $\sigma\to\tau$ gives us an interpretation of $\FF_{\sigma\to\tau}$ in $\PP_{\sigma}$ and $\PP_{\tau}$
in the sense of Definition~\ref{definition-interpretation}.  Indeed, 
a full $\FF$-structure amounts to a family of such interpretations together with maps $\FF_{\beta}\to\PP_{\beta}$ for the base types
which preserve the order.

\paragraph{Interpreting typed terms in full structures} 
Fix a full $\FF$-structure $\Model$.  
By recursion on typed terms $t: \sigma$, we define $\semantics{t:\sigma}$:
\begin{equation}
\label{eq-semantics}
\begin{array}{lcl}
\semantics{f:\sigma}  &  &  \mbox{is given in $\Model$, when $f\in \FF_{\sigma}$}\\
\semantics{tu:\tau}  & = & \semantics{t:\sigma\to \tau} (\semantics{u:\sigma})
\end{array}
\end{equation} 
We are using (\ref{fulltypes}) when we see that $\semantics{t:\sigma\to\tau}$ is a function and hence may apply it to $\semantics{u:\sigma}$.
An easy induction shows that when $t:\sigma$, $\semantics{t:\sigma}\in P_{\sigma}$.
As mentioned before, we usually omit the types.
This holds when we use the $\semantics{\ }$ notation.

\end{definition}

\paragraph{Semantic assertions}

Let $t, u$ be terms of the same type $\sigma$.   
We say that $\Model\models t\leq u$ if   $\semantics{t}\leq \semantics{u}$.

Let $\Gamma$ be a set of inequalities $t\leq u$, and let $\Model$ be a full structure.
We say that $\Model\models \Gamma$ if
$\Model\models t\leq u$ whenever $\Gamma$ contains $t\leq u$.
We then speak of a \emph{full model} of $\Gamma$.

Let $\Gamma\cup\set{t^* \leq u^*}$ be a set of inequalities in our language, omitting the types.
We write $\Gamma\models t^* \leq u^*$ if every
full $\FF$-structure which satisfies $\Gamma$ also satisfies $t^* \leq u^*$.
(Incidentally, there is no real reason why we use the $*$ notation on the conclusion $t^*\leq u^*$.
It just permits us to use letters $t$ and $u$ in the rest of an argument, and it also focuses our attention
on one particular assertion of interest.)

In addition, we will need variations on this definition of $\Gamma\models t^*\leq u^*$.
For example, we will be contracting the class of preorders to \emph{weakly complete} preorders (see Section~\ref{section-completion}).
We will
change our notation slightly to clarify the meaning of semantic assertions.
For example
we write  $\Gamma\modelswc A$ if every \emph{weakly complete} model of $\Gamma$ is a model of $A$.

An important point is that our language is built on an ordered signature $\FF$,
and we do not display $\FF$ in our notation $\Gamma\models t^* \leq u^*$.  But this is something to keep in mind.

\begin{example}
Let $\FF$ be as in Example~\ref{example-first-signature}.
Example~\ref{example-up-down} shows that
\[
f^\uar\leq g^\dar, a\leq c, b\leq c \models fa \leq gb.
\]
\end{example}

\begin{remark}
This is perhaps a good place to mention a way in which our overall framework is more permissive than we need it to be.
We allow our signatures to have order assertions $f \leq g$, but all such assertions could be absorbed into a given set $\Gamma$.
So we could have just taken signatures to be families of polarized \emph{sets} rather than polarized preorders.
\end{remark}

\subsection{Proof system}
\label{section-proof-system}

The proof system for the basic logic (without the rules which we shall introduce in Figure~\ref{fig-rules2})
 is shown in Figure~\ref{fig-rules1}.
One point to highlight is that in the (\monorule) and (\antirule) rules,  we have assumptions $f^{\uar}$ and $f^{\dar}$ 
that are part of the underlying signature $\FF$.   There are two ways that we could take these assumptions.  First, we could 
take them to be \emph{side conditions} on the rules.  Doing things that way would mean that we would not show those assumptions
in examples.   The second way would be to take the polarity assumptions to be ``first class''.  This would mean that our proof trees
would not consist solely of inequalities: they could also have assertions from the signature.  This second alternative is the one we 
adopt.  (However, very little would change if we went the other way.) 
In Section~\ref{section-arrow-assertions}, we further 
 extend the proof system in order to 
infer polarity statements about terms; up until then, all polarity assertions in proof
trees occur at the leaves.
With this forward view, we are lead to the formulation which we chose.

\begin{definition}
We write $\Gamma\proves s^*\leq t^*$, where $\Gamma\cup\set{ s^*\leq t^*}$ is a set of assertions in our language, if there is a tree labeled by assertions in
the language whose root is labeled with $s^*\leq t^*$, whose leaves are labeled with elements of $\Gamma$
or with assertions from the underlying signature $\FF$, and such that every non-leaf-node is justified
by one of the rules in  Figure~\ref{fig-rules1}.
\end{definition}

\begin{example} This is a version of Example~\ref{example-up-down}, but done in our proof system.
Let $\FF$ be an ordered signature, and assume that for some type $\sigma\to\tau$, $\FF_{\sigma\to\tau}$ 
contains symbols  $f$ and $g$, and that 
 $f^\uar, g^{\dar}:\sigma\to \tau$.
 
Let $t$, $u$, and $v$ be terms of the same type $\sigma$. Then 
\[
f^\uar\leq g^{\dar}, t\leq v \geq u \proves ft \leq gu.
\]
via the following derivation:
\[
\infer[\transrule]{ft \leq gu}
{
\infer[\transrule]{ft \leq gv}
{
\infer[\monorule]{ft \leq fv}{f^{\uar} & t \leq v}
&
\infer[\pointrule]{fv \leq gv}{f \leq g}
}
&
\infer[\antirule]{gv \leq gu}{g^\dar & u \leq v}
}\]
Observe that the leaves of the tree are assertions in $\FF$.

Similarly, if $f^\dar, g^{\uar}:\sigma\to \tau$, then we have
\[
f^\dar\leq g^{\uar}, t\geq v\leq u \proves ft \leq gu.
\]
This assertion is more naturally made on top of a different ordered signature.
(However, our framework allows the symbols $f$ and $g$ to be declared as both $\uar$ and $\dar$ in a given signature.)
\label{example-key}
\end{example}

\begin{figure}[t]
\begin{mathframe}
\[
\begin{array}{c}
\begin{array}{l@{\qquad}c@{\qquad}l}
\infer[\reflrule]{t \leq t}{}
&
\infer[\transrule]{s \leq u}{s \leq t & t\leq u  }
&
\infer[\pointrule]{su \leq tu}{s \leq t }
\end{array}
\\ \\ 
\begin{array}{l@{\qquad}l}
\infer[\monorule]{ft  \leq fu }{f^{\uar}  & t\leq  u}
&
\infer[\antirule]{fu\leq ft}{f^{\dar} & t \leq u}

\end{array}
\end{array}
\]
\end{mathframe}
\caption{Basic rules of the logic for interpretations in full structures. 
 See Figure~\ref{fig-rules2} for 
additional rules sound for weakly complete preorder structures.\label{fig-rules1}}
\end{figure}

\subsection{The syntactic preorder of a set $\Gamma$, and a construction lemma}

In this section, we fix a  signature  $\FF$ and 
a set $\Gamma$ of inequalities over it. 

\begin{definition}
\label{def-syntactic-preorder}
 For each type $\sigma$, 
$\PPsyn_{\sigma}$ is the set of all terms of type $\sigma$ (not just the constants, the symbols in $F$), the order is provability from $\Gamma$, and $\uar$ and $\dar$
are the constant symbols with the relevant tagging:
\[
\begin{array}{lcl}
\PPsyn_{\sigma} & = & \set{t : \mbox{$t$ is an $\FF$-term of type $\sigma$}}\\
t \leq u & \quadiff & \Gamma\proves t  \leq u\\
t^{\uar} & \mbox{ iff } &  \mbox{$t\in\FF_{\sigma}$, and $t^{\uar}$ in $\FF$} \\
t^{\dar} & \mbox{ iff } & \mbox{$t\in\FF_{\sigma}$, and $t^{\dar}$ in $\FF$}
\end{array}
\]
We call $\PPsyn_{\sigma}$ the \emph{canonical polarized preorder of type $\sigma$}.

Doing this for all $\sigma$ gives
 a family $(\PPsyn_\sigma)_\sigma$ of polarized preorders.
Please note that the family $(\PPsyn_{\sigma})_{\sigma}$ is \emph{not} a full hierarchy over the base preorders.
\end{definition}

\begin{definition}
\label{def-applicative-family}
Let $(\QQ_{\sigma})_{\sigma}$ be a full hierarchy over the  base preorders.
An  \emph{applicative family of interpretations (of $\Gamma)$} is a family
$\Nodel= (\semanticsalt{\ }_{\sigma})_{\sigma}$ of functions indexed by the types
\begin{equation}
\label{eq-applicative-family}
\semanticsalt{\ }_{\sigma}: \PPsyn_{\sigma}\to \QQ_{\sigma}
 \end{equation}
such that each $\semanticsalt{\ }_{\sigma}$ is monotone and preserves polarities on the function types,
and with the following property:  
for all $t \in \PPsyn_{\sigma\to\tau}$ and  $u\in \PPsyn_\sigma$,
\begin{equation}
\label{interlace}
 \semanticsalt{t}_{\sigma\to\tau}(\semanticsalt{u}_{\sigma}) =
  \semanticsalt{tu}_{\tau} .
  \end{equation}
On the left we have function application in the usual sense, and on the right $tu$ is an application on the level of terms.
Please note that an applicative family $\Nodel$ depends on 
a full hierarchy  $(\QQ_{\sigma})_{\sigma}$, and as with everything in this section it 
depends on $\Gamma$ (and thus ultimately on $\FF$).
\end{definition}

\begin{lemma}
\label{lemma-model-construction}
Let $(\QQ_{\sigma})_{\sigma}$ be  a full hierarchy over the base preorders.
Let $\Nodel$ be an 
applicative family of interpretations of $\Gamma$ as in Definition~\ref{def-applicative-family},
 so that (\ref{interlace}) holds.
Then there is a full structure
\[ \Model  = ((\QQ_{\sigma})_{\sigma}, ( \semantics{f})_{f\in \FF}) \]
using the same preorders at each type,
such that the following hold: 
\begin{enumerate}
\item For all $t, u$ of the same type $\sigma$,  $\Model\models t\leq u$ iff in $\Nodel$, $\semanticsalt{t}_{\sigma}\leq \semanticsalt{u}_{\sigma}$.
\label{partoneMC}
\item If each function
$\semanticsalt{\ }_{\sigma}$ preserves the order, then $\Model\models\Gamma$.
\label{part-preserves}
\item If $t^*, u^*:\sigma$, and
$\semanticsalt{\ }_{\sigma}$ reflects the order, and $\Model\models t^*\leq u^*$,
we have $\Gamma\proves t^*\leq u^*$.
\label{part-reflect}
\end{enumerate}
\end{lemma}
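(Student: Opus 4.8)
The plan is to build $\Model$ in the only way available: a full $\FF$-structure is completely determined by the values it assigns to the constants, since the interpretation of every other term is forced by the application clause of~(\ref{eq-semantics}). So I would set $\semantics{f} := \semanticsalt{f}_\sigma$ for each constant $f \in \FF_\sigma$, keep the preorders $(\QQ_\sigma)_\sigma$, and extend $\semantics{\ }$ to all terms by the recursion $\semantics{su} = \semantics{s}(\semantics{u})$. This is legitimate precisely because $(\QQ_\sigma)_\sigma$ is a full hierarchy, so each $\semantics{s} \in \QQ_{\sigma\to\tau}$ is a genuine function to which $\semantics{u} \in \QQ_\sigma$ can be applied.

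First I would check that this $\Model$ is a genuine full $\FF$-structure in the sense of Definition~\ref{def-full-model}. The membership requirement is immediate, since $\semanticsalt{f}_\sigma \in \QQ_\sigma$. For the function-type requirement, the key observation is that the constants of type $\sigma$ sit inside $\PPsyn_\sigma$ as a sub-polarized-preorder: every constant is in particular a term; if $f \leq g$ is an order assertion recorded in the signature $\FF$, then it may occur as a leaf of a derivation, so $\Gamma \proves f \leq g$ and hence $f \leq g$ holds in $\PPsyn_\sigma$; and a tag $f^\uar$ or $f^\dar$ in $\FF$ is by definition a tag in $\PPsyn_\sigma$. Thus $\semantics{\ }$ restricted to $\FF_\sigma$ is the inclusion $\FF_\sigma \hookrightarrow \PPsyn_\sigma$ followed by $\semanticsalt{\ }_\sigma$, and since $\semanticsalt{\ }_\sigma$ is monotone and preserves polarities on the function types (Definition~\ref{def-applicative-family}), so is the composite.

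The heart of the argument is the claim that the interpretation $\Model$ induces on \emph{all} terms agrees with the family $\Nodel$: for every term $t : \sigma$,
\[
\semantics{t} = \semanticsalt{t}_\sigma .
\]
I would prove this by induction on the structure of $t$. The base case $t = f$ is exactly the definition of $\semantics{f}$. For the inductive step $t = su$ with $s : \sigma \to \tau$ and $u : \sigma$, I compute
\[
\semantics{su} = \semantics{s}(\semantics{u}) = \semanticsalt{s}_{\sigma\to\tau}(\semanticsalt{u}_\sigma) = \semanticsalt{su}_\tau,
\]
using the application clause of~(\ref{eq-semantics}), then the induction hypotheses on $s$ and $u$, then the interlacing equation~(\ref{interlace}). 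This is the one place where the defining property of an applicative family does real work; the main (modest) obstacle is just recognizing that~(\ref{interlace}) is exactly what makes the application step close.

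Given the identity $\semantics{t} = \semanticsalt{t}_\sigma$, the three conclusions are one-liners, the only thing to watch being the translation between the two readings of $\leq$, namely provability in $\PPsyn_\sigma$ versus the pointwise order in $\QQ_\sigma$. For Part~\ref{partoneMC}, by definition $\Model \models t \leq u$ iff $\semantics{t} \leq \semantics{u}$, which by the identity is iff $\semanticsalt{t}_\sigma \leq \semanticsalt{u}_\sigma$. For Part~\ref{part-preserves}, if each $\semanticsalt{\ }_\sigma$ preserves order and $t \leq u$ lies in $\Gamma$, then $\Gamma \proves t \leq u$ by a one-node derivation, so $t \leq u$ holds in $\PPsyn_\sigma$, hence $\semanticsalt{t}_\sigma \leq \semanticsalt{u}_\sigma$, and Part~\ref{partoneMC} gives $\Model \models t \leq u$; thus $\Model \models \Gamma$. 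For Part~\ref{part-reflect}, if $\semanticsalt{\ }_\sigma$ reflects order and $\Model \models t^* \leq u^*$, then $\semanticsalt{t^*}_\sigma \leq \semanticsalt{u^*}_\sigma$ by the identity, so reflection yields $t^* \leq u^*$ in $\PPsyn_\sigma$, which is precisely $\Gamma \proves t^* \leq u^*$.
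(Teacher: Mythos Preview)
Your proof is correct and follows essentially the same route as the paper: define $\semantics{f} := \semanticsalt{f}_\sigma$ on constants, extend recursively, prove by induction on terms that $\semantics{t} = \semanticsalt{t}_\sigma$ using~(\ref{interlace}) for the application step, and read off the three parts. Your explicit verification that $\Model$ is a full $\FF$-structure (via the inclusion $\FF_\sigma \hookrightarrow \PPsyn_\sigma$) is a welcome addition that the paper leaves implicit.
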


\begin{proof}
We define $\semantics{\ }$
by recursion on typed terms (see Definition~\ref{D:syntax}), starting with the case of elements $f\in \FF_{\sigma}$
\[\semantics{f} = \semanticsalt{f}_{\sigma} 
\]
Then we extend to all typed terms by $\semantics{tu:\tau}    = \semantics{t:\sigma\to\tau} (\semantics{u:\sigma}) $.
The difference between $\semantics{\ }$ and $(\semanticsalt{\ }_{\sigma})_{\sigma}$
is that the former is a single function defined on all typed terms by recursion on those terms, while 
$(\semanticsalt{\ }_{\sigma})_{\sigma}$ is a family of functions.
The content of our claim just below is that the two definitions agree. 

\begin{claim} For all $t\in \PPsyn_{\sigma}$,  $\semantics{t} = \semanticsalt{t}_{\sigma}$. 
\label{claim-two-semantics}
\end{claim}

\begin{proof}
By induction on the typed term $t:\sigma$.
The fact that $\semantics{f} = \semanticsalt{f}_{\sigma}$ is immediate for $f\in \FF_{\sigma}$.
Assuming our claim for $t:\sigma\to\tau$ and $u:\sigma$, we see that
\[
\semantics{tu} = \semantics{t}(\semantics{u}) = 
\semanticsalt{t}_{\sigma\to\tau}(\semanticsalt{u}_{\sigma})  = \semanticsalt{tu}_{\tau} 
\]
We used  (\ref{interlace})  at the end.
\end{proof}

This claim easily implies  part (\ref{partoneMC}): 
$\Model\models t\leq u$ iff in $\Nodel$, $\semanticsalt{t}_\sigma\leq \semanticsalt{u}_\sigma$.

For (\ref{part-preserves}),
suppose that $\Gamma$ contains an assertion $t\leq u$.
Let $\sigma$ be the type of these terms.
Then $t\leq u$ in $\PPsyn_{\sigma}$.  Since $\semanticsalt{\ }_{\sigma}$ preserves the order, 
$\semanticsalt{t}_{\sigma}\leq \semanticsalt{u}_{\sigma}$.
By part  (\ref{partoneMC}),
$\Model\models t\leq u$.  


For (\ref{part-reflect}), 
suppose that in $\Model$,  $\semantics{t^*}\leq \semantics{u^*}$.
Then by Claim~\ref{claim-two-semantics}, 
$\semanticsalt{t^*}_{\sigma}\leq \semanticsalt{u^*}_{\sigma}$.
Since  $\semanticsalt{\ }_{\sigma}$ reflects  the order,  
$t^*\leq u^*$ in  $\PPsyn_{\sigma}$.   By the definition of $\PPsyn_{\sigma}$, $\Gamma\proves t^*\leq u^*$.
\end{proof}

The reason that we will be using Lemma~\ref{lemma-model-construction}
in our main result, Theorem~\ref{theorem-main-wc},  is that 
it will be more natural for us to define the functions  $\semanticsalt{\ }_{\sigma}$ by recursion on $\sigma$
than to define $\semantics{ \ }$ by recursion on the typed terms $t$.

\subsection{Digression: incompleteness of the logic with variables on full structures}
\label{section-digression}

 Now that we have the semantics of our inequational typed lambda calculus and also the proof system, we can explain why
this paper is about a logic without variables.   The idea behind our construction comes from an example in
Awodey~\cite{Awodey00} concerning the usual typed lambda calculus: when formulated with variables, it cannot have set theoretic full models.
Take a base type $\beta$ and function symbols $i: (\beta\to\beta)\to \beta$ and $r:\beta$ and the equation $r(i(x)) = x$.
Any full model will interpret the base type $\beta$ by a singleton set.  This leads easily to an incompleteness result for the full semantics in sets.
Although our language does not have the identity symbol $=$, we still get the same result.

In this section, we allow variables and also the rule of substitution: from $t \leq u$, infer $t[s] \leq u[s]$, where $s$ is any substitution.
(That is, any map $s$ which maps variables to terms, respecting the types.)
Let us write $\Gamma\proves t\leq u$ for the proof relation which extends the main proof relation in this paper with this additional rule.

\newcommand{\psidar}{\protect{\psi^{\dar}}}
\newcommand{\phiuar}{\protect{\phi^{\uar}}}

We take one base type, $\beta$,
and symbols $\psidar$, $\phiuar$, $c$, and $d$ with the types shown below:
\[
\xymatrix{    c,d: \beta  \ar@/^.7pc/[rr]^-{\psidar}
 &     &
\ar@/^.7pc/[ll]^-{\phiuar}   \beta\to\beta
}
\]
%
%
For $\Gamma$ we take three inequalities:
\[
\phi(\psi(y)) \leq y \qquad  y \leq \phi(\psi(y)) \qquad \psi(\phi(x)) \leq x
\]
Here we are using a variable $x$  of type $\beta\to\beta$ and a variable $y$ of type $\beta$.

\begin{proposition}
$\Gamma\models c \leq d$, but $\Gamma\not\proves c \leq d$ in the logic using our rules, including substitution.
\end{proposition}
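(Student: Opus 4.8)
The plan is to prove the two halves separately. For $\Gamma \models c \leq d$ I will show that every full $\FF$-structure satisfying $\Gamma$ must interpret the base type $\beta$ by an \emph{indiscrete} preorder (any two elements mutually below one another), whence $\semantics{c} \equiv \semantics{d}$ and in particular $\semantics{c} \leq \semantics{d}$. For $\Gamma \not\proves c\leq d$ I will exhibit a single \emph{non-full} (general/Henkin) structure that validates all the rules together with substitution, satisfies $\Gamma$ under every assignment, yet has $\semantics{c}\not\leq\semantics{d}$; soundness of the proof system over such structures then gives non-derivability.

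For the semantic half, fix a full model $\Model\models\Gamma$ and write $X$ for the underlying set of $\PP_\beta$, $F=\semantics{\phi}\colon X^X\to X$ and $G=\semantics{\psi}\colon X\to X^X$, where $X^X$ carries the pointwise order. Since $\phi^{\uar}$ and $\psi^{\dar}$, $F$ is monotone and $G$ is antitone. Reading the three hypotheses of $\Gamma$ under all assignments turns them into $F(G(p))\equiv p$ for all $p\in X$ (the first two) and $G(F(h))\leq h$ for all $h\in X^X$ (the third). The crucial first step is that the order on $X$ is \emph{symmetric}: if $p\leq p'$ then $G(p')\leq G(p)$ by antitonicity, hence $F(G(p'))\leq F(G(p))$ by monotonicity, i.e.\ $p'\leq p$ after cancelling $F\circ G\equiv\mathrm{id}$. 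Symmetry of $X$ propagates pointwise to $X^X$, so the one-sided inequality $G(F(h))\leq h$ upgrades to $G(F(h))\equiv h$. Thus $F$ and $G$ descend to mutually inverse bijections between $X/{\equiv}$ and $X^X/{\equiv}$.

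It remains to run Cantor's argument on these quotients. A routine check shows $X^X/{\equiv}\cong (X/{\equiv})^{X}$ as sets, since two functions are $\equiv$-related exactly when they agree pointwise up to $\equiv$. Writing $n=|X/{\equiv}|$ and $m=|X|$, the bijection gives $n=n^{m}$; the quotient map is onto so $m\geq n$, and $F$ forces $X\neq\emptyset$ so $m\geq 1$. The only solution is $n=1$: if $n\geq 2$ then $m\geq n\geq 2$ yields $n^{m}>n$, a contradiction. Hence $X$ is indiscrete and $\semantics{c}\equiv\semantics{d}$, completing $\Gamma\models c\leq d$.

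For non-derivability I build a general structure that defeats this collapse by keeping the function space small. Take $X=\PP_\beta=\{0,1\}$ discrete, and interpret $\beta\to\beta$ \emph{not} by all functions but by the two constant functions $k_0,k_1$ (where $k_i(x)=i$), again with the induced discrete pointwise order; interpret the remaining function types as full over these chosen spaces. Set $\semantics{\psi}(0)=k_0$, $\semantics{\psi}(1)=k_1$ and $\semantics{\phi}(k_0)=0$, $\semantics{\phi}(k_1)=1$, two mutually inverse bijections. Because every source preorder involved is discrete, $\semantics{\phi}$ and $\semantics{\psi}$ are automatically both monotone and antitone, so the polarities $\phi^{\uar}$ and $\psi^{\dar}$ are respected; and one checks directly that $\semantics{\phi}(\semantics{\psi}(p))=p$ and $\semantics{\psi}(\semantics{\phi}(h))=h$ for all $p,h$, so all three inequalities of $\Gamma$ hold under every assignment. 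Finally, all rules of Figure~\ref{fig-rules1} and the substitution rule are sound in any such structure: \reflrule, \transrule, \pointrule\ are immediate from reflexivity, transitivity, and the pointwise order; \monorule/\antirule\ hold by the polarities; and substitution holds because evaluating a substituted term equals reassigning its variables to values that stay inside the structure. Since $\semantics{c}=0\not\leq 1=\semantics{d}$, soundness gives $\Gamma\not\proves c\leq d$. The main obstacle is the semantic half, and specifically the observation that the polarities do real work: an antitone section with a monotone retraction forces the base order to be symmetric, and this is exactly what converts the one-sided hypothesis $\psi\phi\leq\mathrm{id}$ into a genuine order-isomorphism $X\cong X^X$ up to $\equiv$ so that Cantor can bite. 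The only delicate point in the second half is that the counter-model must be non-full, restricting $\beta\to\beta$ to cardinality $|X|$ precisely to escape the equation $n=n^{m}$.
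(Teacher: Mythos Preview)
Your argument is correct in both halves, but each half takes a genuinely different route from the paper.

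For $\Gamma\models c\leq d$, you and the paper share the first step: the composite $\semantics{\phi}\circ\semantics{\psi}$ is antitone and $\equiv\mathrm{id}$, so the order on $\PP_\beta$ is symmetric. After that you diverge. You upgrade $\psi\phi\leq\mathrm{id}$ to an equivalence, pass to quotients, identify $X^X/{\equiv}$ with $(X/{\equiv})^X$, and finish with a cardinality count $n=n^m$. The paper instead runs the diagonal argument directly: assuming $a\not\equiv b$, it defines $f(x)=a$ if $\semantics{\psi}(x)(x)\equiv b$ and $f(x)=b$ otherwise, sets $x^*=\semantics{\phi}(f)$, and derives a contradiction from $\semantics{\psi}(x^*)\leq f$ together with symmetry. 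The paper's argument is a bit more elementary (no cardinal arithmetic, no choice-flavored surjection-to-injection step), while yours makes the structural reason for the collapse more visible; your cardinality step ``$n\geq 2$ and $m\geq n$ imply $n^m>n$'' is fine once one notes that $n^m\geq n^n\geq 2^n>n$ in the infinite case.

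For $\Gamma\not\proves c\leq d$, the contrast is sharper. You build a Henkin countermodel with $\PP_\beta=\{0,1\}$ discrete and $\PP_{\beta\to\beta}=\{k_0,k_1\}$, so that $\phi$ and $\psi$ become mutually inverse bijections and all of $\Gamma$ holds while $\semantics{c}=0\not\leq 1=\semantics{d}$; soundness of the rules (including substitution) over such structures then blocks derivability. The paper instead gives a purely syntactic argument: it observes by induction on derivations that any provable $t\leq u$ from this particular $\Gamma$ has the property that every base-type constant (and every variable) occurring in $t$ also occurs in $u$ and conversely; since $c\leq d$ violates this invariant, it is underivable. Your approach has the virtue of explaining \emph{why} fullness is essential---the countermodel is exactly the non-full structure that escapes Cantor---whereas the paper's approach is self-contained and avoids having to set up and verify soundness for general structures.
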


\begin{proof}
Let $\Model$ be a full model of $\Gamma$.   We first observe that for $p,q\in P_{\beta}$, if $p\leq q$, then $q\leq p$.
To see this, write $k$ for $\semantics{\phi}\o \semantics{\psi}$.  So $k$ is antitone, since it is the composition of a monotone and an antitone
function.  Notice that $k(r) \equiv r$ for all $r\in P_{\beta}$,
by our first two assertions in $\Gamma$. Hence if $p\leq q$ then also
$q \equiv k(q) \leq k(p) \equiv p$.

Next, we claim that for all elements  $a, b\in \PP_{\beta}$, $a \equiv b$  in $\PP_{\beta}$.
If not, let $a\not{\!\!\equiv}\, b$.
Define $f: P_{\beta} \to P_{\beta}$ by
\[
f(x)   =   \left\{
\begin{array}{ll} a  & \mbox{if $\semantics{\psi}(x)(x) \equiv b$} \\
b & \mbox{otherwise}
\end{array}
\right.
\]
Since $P_{\beta\to\beta}$ is the full function space, $f$ belongs to it.
Let $x^* = \semantics{\phi}(f)$.
By our last assertion in $\Gamma$,
$\semantics{\psi}(x^*) \leq f$.   
Then 
\[
\semantics{\psi}(x^*)(x^*) \leq  f(x^*)  = 
 \left\{
\begin{array}{ll} a  & \mbox{if $\semantics{\psi}(x^*)(x^*) \equiv b$} \\
b & \mbox{otherwise}
\end{array}
\right.
\]
If $\semantics{\psi}(x^*)(x^*) \equiv b$, then we would also have $\semantics{\psi}(x^*)(x^*) \leq a$. 
But by our first paragraph, we then would have $b \equiv \semantics{\psi}(x^*)(x^*) \equiv a$, and this is a contradiction
to our choice of $a$ and $b$.
So we have $\semantics{\psi}(x^*)(x^*) \not{\!\!\equiv}\,  b$, and thus $\semantics{\psi}(x^*)(x^*) \leq b$. 
Our first paragraph now shows that $\semantics{\psi}(x^*)(x^*) \equiv b$, giving a contradiction again.

It follows from this claim that in our model (hence in any model of $\Gamma$), $\semantics{c} \leq \semantics{d}$.

To complete the proof of our proposition, we make an observation about the particular set $\Gamma$ that we have 
and also our rules, including substitution: if $\Gamma\proves t\leq u$, and if either $t$ or $u$ contains some
given variable of either type
or constant symbol of base type $\beta$, 
then the other term contains it as well.
(For example, we can prove
 $\psi(\phi(x))(y) \leq x(y)$; both sides contain $x$ and $y$.
 We can also prove $\psi(\phi(\psi(d)))(y) \leq \psi(d)(y)$, and both sides contain $d$ and $y$.)
 This observation is proved by an easy induction.
 Thus $\Gamma\not\proves c \leq d$, since $c\leq d$ has $c$ on only one side.
\end{proof}

\subsection{Lemmas on new constants}
\label{section-lemmas-new-constants}

\begin{definition}
\label{definition-extend-F}
Let $\FF = (\FF_{\sigma})_{\sigma}$ be a signature, and write each $\FF_{\sigma}$ as $(F_{\sigma}, \leq, \uar, \dar)$.
For each $\sigma$, let $\Box_{\sigma}\notin F_{\sigma}$.
  Let
\[ \GG_{\sigma} = (F_{\sigma}\cup\set{\Box_{\sigma}}, \leq_{\sigma}^{\star}, \uar, \dar)\]
We have added the new symbol $\Box_{\sigma}$ to $\FF_{\sigma}$.
Notice that the reflexive-transitive closure $\leq_{\sigma}^{\star}$ of $\leq_{\sigma}$ just adds
to $\leq_{\sigma}$ the assertion $\Box_{\sigma} \leq \Box_{\sigma}$,
and $\uar$ and $\dar$ are exactly the same as in $\FF_{\sigma}$.  Thus, 
we add no monotonicity information about the new symbols;
for a type $\sigma\arrowdot\tau$, we do not add to $\Gamma$
assertions like $\Box^{\uar}_{\sigma\arrowdot\tau}$.
$ \GG_{\sigma} $ does not have any ordering relation between any new constant and any other symbol.

This gives a new signature $\GG =(\GG_{\sigma})_{\sigma}$.  Note that we have an inclusion map 
$\iota_{\sigma}:\FF_{\sigma}\to\GG_{\sigma}$ which preserves the order and polarities.
For a set $\Gamma$ of inequalities over $\FF$, we write $\GammaBox$ for the same set, but taking it to be a set of assertions over
 $\GG$.   
 \end{definition}

For a set $\Gamma$ over $\FF$, 
the syntactic and semantic consequence relations $\GammaBox\proves t^*\leq u^*$ 
and   $\GammaBox\models t^*\leq u^*$ are different 
from the ones involving $\Gamma$.
   The main point of the next results is that moving from $\Gamma$ to $\GammaBox$ is a
\emph{conservative extension} in the relevant senses. First, a semantic fact. 

Let $\Model = ((\PP_{\beta})_{\beta}, ( \semantics{\ }_{\sigma})_{\sigma}: \GG_{\sigma}\to \PP_{\sigma})$
 be a full model over $\GG$.  
Let $\Model^0$ be the \emph{reduct to $\FF$}. 
This is  
\[ \Model^0 = ((\PP_{\beta})_{\beta}, ( \semantics{\ }^0_{\sigma})_{\sigma}: \FF_{\sigma}\to \PP_{\sigma})\]
where $\semantics{\ }^0_{\sigma}: \FF_{\sigma} \to \PP_{\sigma}$,
is $ \semantics{\ }_{\sigma}\o \iota_{\sigma}$.

\begin{lemma}
\label{lemma-reduct}
For every inequality $t\leq u$ over $\FF$, $\Model\models t\leq u$ iff $\Model^0\models t\leq u$.
\end{lemma}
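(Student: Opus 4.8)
The plan is to show that the interpretations of $\FF$-terms agree in $\Model$ and $\Model^0$, after which the claimed biconditional is immediate. The essential observation is that the two structures are built over the very same full preorder type hierarchy $(\PP_{\sigma})_{\sigma}$ and carry the same pointwise orders at every type; they differ only in which constants they interpret. By the definition of the reduct we have $\semantics{\ }^0_{\sigma} = \semantics{\ }_{\sigma}\o \iota_{\sigma}$, and since $\iota_{\sigma}$ is the inclusion of $\FF_{\sigma}$ into $\GG_{\sigma}$, the two interpretation functions agree on every constant coming from $\FF$.

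The main step is then an induction on the typed term $t$ over $\FF$ (following the recursion in Definition~\ref{D:syntax}) showing that the value of $\semantics{t}$ computed in $\Model^0$ equals the value computed in $\Model$. For the base case $t = f$ with $f\in\FF_{\sigma}$, this is exactly the agreement on constants just noted. For the application case $t = t_1 t_2$ with $t_1:\sigma\to\tau$ and $t_2:\sigma$ both $\FF$-terms, the induction hypothesis gives that $t_1$ and $t_2$ receive the same values in the two structures; since application is computed identically in both models (the function space $(\PP_{\tau})^{\PP_{\sigma}}$ and its application operation are literally the same object in each), the values of $t_1 t_2$ coincide as well.

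With this agreement in hand the lemma follows at once. For an inequality $t\leq u$ over $\FF$, we have $\Model\models t\leq u$ iff $\semantics{t}\leq\semantics{u}$ as computed in $\Model$, iff $\semantics{t}\leq\semantics{u}$ as computed in $\Model^0$ (by the induction the two sides are literally equal and the order $\leq$ on $\PP_{\sigma}$ is the same in both structures), iff $\Model^0\models t\leq u$.

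I do not expect any genuine obstacle here; this is a routine conservativity-of-reduct argument. The only point that deserves an explicit word is that the new symbols $\Box_{\sigma}$ can never occur inside an $\FF$-term, so their interpretation in $\Model$ is simply never consulted when evaluating an $\FF$-term. This is immediate from the fact that $\FF$-terms are generated solely from the constants of $\FF$ by application, and it is exactly what makes the induction go through without any hypothesis about how $\Model$ treats the $\Box_{\sigma}$.
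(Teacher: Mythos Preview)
Your proof is correct and follows exactly the same approach as the paper: an easy induction on $\FF$-terms showing $\semantics{t} = \semantics{t}^0$, from which the biconditional is immediate. You have simply spelled out the induction that the paper leaves as a one-line remark.
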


\begin{proof}
An easy induction shows that for all terms $t$ over $\FF$, the interpretations of $t$ in $\Model$ and $\Model^0$ are the same:
$\semantics{t} = \semantics{t}^0$.
\end{proof}

\begin{lemma} If $\Gamma \models t^* \leq u^* $ and $\GammaBox$ comes from extending the underlying signature with 
new constants, then 
$\GammaBox \models t^* \leq u^* $.
\label{lemma-no-new-symbols-semantic-second}
\end{lemma}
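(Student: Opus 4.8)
The plan is to reduce the claim directly to the reduct lemma, Lemma~\ref{lemma-reduct}. The statement asserts a semantic consequence over the larger signature $\GG$, while the hypothesis gives us the same consequence over $\FF$. Since every full $\GG$-structure carries a reduct to $\FF$ over the same base preorders, and since all the terms appearing in $\Gamma$ and in $t^*\leq u^*$ are $\FF$-terms, the reduct lemma will let us transfer satisfaction back and forth between a $\GG$-model and its $\FF$-reduct. The whole argument is then just a matter of chaining these transfers together with the hypothesis.

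First I would fix an arbitrary full $\GG$-structure $\Model$ with $\Model\models\GammaBox$ and form its reduct $\Model^0$ to $\FF$ as in the discussion preceding Lemma~\ref{lemma-reduct}. This $\Model^0$ is a full $\FF$-structure over the same family of base preorders, so it is a legitimate witness for the $\FF$-level consequence relation. Because $\GammaBox$ is literally $\Gamma$ regarded as a set of assertions over $\GG$, every inequality in it is an inequality over $\FF$; hence Lemma~\ref{lemma-reduct} applies to each of them, yielding $\Model^0\models t\leq u$ for every $t\leq u$ in $\Gamma$. Thus $\Model^0\models\Gamma$.

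Next I would invoke the hypothesis $\Gamma\models t^*\leq u^*$. Since $\Model^0$ is a full $\FF$-model of $\Gamma$, we obtain $\Model^0\models t^*\leq u^*$. Finally, because $t^*$ and $u^*$ are themselves $\FF$-terms, Lemma~\ref{lemma-reduct} applies once more, this time in the other direction, to give $\Model\models t^*\leq u^*$. As $\Model$ was an arbitrary full $\GG$-model of $\GammaBox$, this establishes $\GammaBox\models t^*\leq u^*$.

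There is no genuine obstacle here: the argument is essentially immediate once Lemma~\ref{lemma-reduct} is in hand, and the only points requiring care are bookkeeping ones — confirming that the reduct of a full $\GG$-structure is again a full $\FF$-structure over the same base preorders, and that every term involved (both those in $\Gamma$ and the conclusion terms $t^*,u^*$) lies over $\FF$, so that the reduct lemma is genuinely applicable in both directions.
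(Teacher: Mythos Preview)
Your proof is correct and follows essentially the same approach as the paper: take an arbitrary full $\GG$-model of $\GammaBox$, pass to its reduct $\Model^0$ via Lemma~\ref{lemma-reduct} to get a model of $\Gamma$, apply the hypothesis, and use Lemma~\ref{lemma-reduct} again to return to $\Model$. The paper's proof is just a terser version of what you wrote.
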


\begin{proof}
Let $\Model$ be a  $\GG$-structure which satisfies $\GammaBox$.  
By Lemma~\ref{lemma-reduct},
$\Model^0 \models \Gamma$.   By hypothesis,  $\Model^0 \models t^* \leq u^* $.
Then by Lemma~\ref{lemma-reduct} again, $\Model\models  t^*\leq u^*$.
\end{proof}

We now turn to some syntactic results that again point to a conservative extension.

\begin{lemma} If $\GammaBox \proves t \leq \Box$ or $\GammaBox \proves \Box\leq t$, then $t = \Box$.
\label{lemma-Box-first}
\end{lemma}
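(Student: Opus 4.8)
The plan is to argue by induction on the structure of a derivation witnessing provability from $\GammaBox$, establishing the following slightly strengthened simultaneous statement: for every inequality $s\leq u$ labeling a node of a derivation from $\GammaBox$, if the entire right-hand term $u$ is the symbol $\Box$ then $s=\Box$, and if the entire left-hand term $s$ is the symbol $\Box$ then $u=\Box$. The lemma is then read off from the label of the root, taking $u=\Box$ for the first disjunct and $s=\Box$ for the second.

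First I would dispatch the base cases, namely the leaves and the reflexivity rule. A leaf is either an element of $\GammaBox$, and these are inequalities over $\FF$ which therefore do not mention $\Box$ at all, so both implications hold vacuously; or it is an order assertion of the signature $\GG$. By the construction in Definition~\ref{definition-extend-F}, the reflexive-transitive closure adds to the order of $\FF_\sigma$ only the single assertion $\Box\leq\Box$, so the one signature inequality involving $\Box$ is $\Box\leq\Box$, for which the claim is immediate. The polarity assertions $f^\uar$ and $f^\dar$ that may occur at leaves are not inequalities, and in any case $\Box$ carries no polarity; I will use this remark below. The reflexivity rule (\reflrule) produces $t\leq t$, for which both implications are trivial.

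The inductive step is where the definition of $\Box$ does the work, and the key observation is that $\Box$ is a \emph{constant} symbol, hence is never syntactically an application. Consequently the rules (\pointrule), (\monorule), and (\antirule) all have conclusions in which both sides are application terms (respectively $su\leq tu$, $ft\leq fu$, and $fu\leq ft$), so neither side can equal the constant $\Box$ and both implications hold vacuously at such nodes. The only rule that can propagate the property is transitivity (\transrule): given premises $s\leq t$ and $t\leq u$ with, say, $u=\Box$, the induction hypothesis applied to $t\leq u$ forces $t=\Box$, and then applied to $s\leq t$ forces $s=\Box$; the case $s=\Box$ is symmetric. This closes the induction.

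I do not expect a genuine obstacle here: the entire content is bookkeeping about which rules can place $\Box$ as a whole side of an inequality. The one point worth stating carefully is that $\Box$ is tagged with neither $\uar$ nor $\dar$ and stands in no nontrivial order relation in $\GG$, so it cannot serve as the active function symbol in (\monorule) or (\antirule) nor be related to anything but itself in a leaf. In the end this is subsumed by the cleaner remark that $\Box$ is a fresh, isolated constant, so the only way it can appear as an entire side of a derivable inequality is reflexively.
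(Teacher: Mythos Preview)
Your argument is correct and follows the same approach as the paper: induction on derivations, observing that (\pointrule), (\monorule), and (\antirule) yield application terms on both sides so only (\reflrule) and (\transrule) are relevant, with the (\transrule) step handled by two applications of the induction hypothesis. The paper states this in three sentences, but your fuller treatment of the leaf cases (distinguishing $\GammaBox$, signature order assertions, and polarity assertions) is a faithful unpacking of the same idea.
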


\begin{proof} By induction on derivations.  With a conclusion like $t\leq \Box$, the  derivation can only use (\reflrule) or (\transrule).
The inductive step for (\transrule) is trivial.
\end{proof}

\begin{lemma}\label{lemma-onebox}
 If $t:\sigma\to\tau$ and $v:\sigma$ and $\GammaBox\proves t\leq v\Box$, then   there is some 
$u: \sigma\to \tau$ 
 so that $t = u \Box$, and  $\GammaBox \proves u\leq v$.
 
 Similarly, if $\GammaBox\proves v\Box\leq t$, then   there is some 
   $u$ 
 so that $t = u \Box$, and  $\GammaBox \proves v\leq u$. 
  \end{lemma}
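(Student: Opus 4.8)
The plan is to prove the first statement by induction on the derivation witnessing $\GammaBox\proves t\leq v\Box$, and then obtain the second statement by a symmetric argument. The engine of the proof is the unique readability of applicative terms: since $v\Box$ is literally the application of $v$ to the fresh constant $\Box$, any syntactic identity $v\Box = s\,w$ forces $s = v$ and $w = \Box$. Combined with the freshness of $\Box$, this pins down exactly which rules can produce a conclusion whose right-hand side ends in $\Box$, and what their premises must look like.

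First I would dispatch the leaf cases. A leaf is either an element of $\GammaBox$ or an order assertion from $\GG$. Since $\Gamma$ consists of inequalities over $\FF$ and $\Box$ occurs in no $\FF$-term, no member of $\GammaBox$ has a right-hand side containing $\Box$, so none can equal $t\leq v\Box$; and the only order assertion in $\GG$ mentioning $\Box$ is $\Box\leq\Box$, whose right-hand side is a constant, not an application. Hence no leaf case applies, and every derivation of a conclusion of this shape must end in an inference rule.

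Then I would case on the last rule. For (\reflrule) the conclusion is $v\Box\leq v\Box$, so $t=v\Box$ and we take $u=v$, with $\GammaBox\proves v\leq v$ by (\reflrule). For (\pointrule), whose conclusion has the form $s\,w\leq s'\,w$, unique readability gives $w=\Box$ and $s'=v$, so $t=s\Box$; taking $u=s$ we read off $\GammaBox\proves u\leq v$ directly from the premise $s\leq v$. The (\monorule) and (\antirule) cases are where Lemma~\ref{lemma-Box-first} earns its keep: matching the conclusion against $f\,p\leq f\,q$ (respectively $f\,q\leq f\,p$) forces the head $v$ to be the constant $f$, and the surviving premise has the form $p\leq\Box$ (respectively $\Box\leq q$); Lemma~\ref{lemma-Box-first} collapses this to $p=\Box$ (respectively $q=\Box$), whence $t=f\Box=v\Box$ and $u=v$ works by reflexivity.

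The main obstacle is the (\transrule) case, which is the only genuinely recursive step and must invoke the induction hypothesis twice in sequence. From premises $t\leq w$ and $w\leq v\Box$, I would first apply the induction hypothesis to $w\leq v\Box$ to learn that $w$ itself has the form $u'\Box$ with $\GammaBox\proves u'\leq v$; only with this shape established can I apply the induction hypothesis to $t\leq u'\Box$, obtaining $t=u\Box$ with $\GammaBox\proves u\leq u'$, and then close with (\transrule) to get $\GammaBox\proves u\leq v$. Finally, the second statement is proved by the mirror-image induction, reversing every inequality and using the dual half of Lemma~\ref{lemma-Box-first} (namely $\GammaBox\proves\Box\leq t$ implies $t=\Box$) in the monotonicity and antitonicity cases.
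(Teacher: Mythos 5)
Your proof is correct and follows essentially the same route as the paper's: induction on the derivation, with (\pointrule) yielding the conclusion directly, Lemma~\ref{lemma-Box-first} collapsing the (\monorule)/(\antirule) cases, and the (\transrule) case handled by two successive appeals to the induction hypothesis (the step the paper simply calls ``easy''). Your version merely spells out the leaf cases and the double use of the induction hypothesis that the paper leaves implicit.
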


\begin{proof}
Each part is proved
by induction on the derivation.  The step for (\transrule) is easy. 
 If the root uses (\monorule), or (\antirule), then $t\Box$ is $v\Box$ by Lemma~\ref{lemma-Box-first};
 in this case, $t= v$.
 If it uses (\pointrule), then we directly have that $t\leq v$.
 \end{proof}

\begin{lemma}  
\label{lemma-richness} 
  If $\GammaBox\proves t\Box\leq u\Box$, then $\GammaBox\proves t\leq u$.
\end{lemma}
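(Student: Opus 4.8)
The plan is to argue by cases on the final rule applied in a derivation witnessing $\GammaBox \proves t\Box \leq u\Box$; no separate induction on this derivation is needed beyond what is already packaged inside Lemma~\ref{lemma-onebox}. First I would observe that the root of such a derivation cannot be a leaf. A member of $\GammaBox$ is an inequality between $\FF$-terms and so contains no occurrence of $\Box$, whereas $t\Box \leq u\Box$ plainly does; and a signature assertion from $\GG$ is either an order assertion between two constants or a polarity tag, neither of which is an inequality between two applications. Hence the last step must be one of the five rules of Figure~\ref{fig-rules1}.

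Next I would dispose of the four ``easy'' rules directly, by inspecting the shape of the conclusion $t\Box \leq u\Box$. If the last rule is (\reflrule), then $t\Box$ and $u\Box$ are literally the same term, so $t = u$ and $\GammaBox \proves t \leq u$ holds by (\reflrule) again. If it is (\pointrule), the conclusion has the form $s w \leq s' w$ from a premise $s \leq s'$; matching this against $t\Box \leq u\Box$ forces the common argument $w$ to be $\Box$ and $s = t$, $s' = u$, so the premise $\GammaBox \proves t \leq u$ is exactly what is wanted. If it is (\monorule) or (\antirule), both sides of the conclusion share the same head constant $f$, so matching against $t\Box \leq u\Box$ forces $t = f = u$, and then $\GammaBox \proves t \leq u$ holds by reflexivity.

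The one genuinely interesting case, and the place I expect the real work to sit, is (\transrule): here $t\Box \leq u\Box$ comes from premises $\GammaBox \proves t\Box \leq w$ and $\GammaBox \proves w \leq u\Box$ for some intermediate term $w$ which need not \emph{visibly} end in an application to $\Box$. This is precisely where Lemma~\ref{lemma-onebox} does the work. Applying its second part to $t\Box \leq w$ produces a term $c$ with $w = c\Box$ and $\GammaBox \proves t \leq c$, and applying its first part to $w \leq u\Box$ produces a term $d$ with $w = d\Box$ and $\GammaBox \proves d \leq u$. Since $c\Box$ and $d\Box$ are the same term $w$, their function parts coincide, so $c = d$; then (\transrule) gives $\GammaBox \proves t \leq c = d \leq u$, as desired. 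Thus the crux is not a delicate induction on the main derivation but rather having the ``$\Box$-cancellation'' content of Lemma~\ref{lemma-onebox} available to recover the hidden final $\Box$ in the transitivity witness $w$.
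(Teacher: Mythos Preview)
Your proof is correct and follows essentially the same approach as the paper: case analysis on the last rule, with Lemma~\ref{lemma-onebox} doing the work in the (\transrule) case to expose the hidden $\Box$ in the middle term. Your observation that no genuine induction on the main derivation is needed is accurate; the paper says ``by induction on the derivation'' but never invokes the induction hypothesis, relying instead on Lemma~\ref{lemma-onebox} for the subproofs, exactly as you do.
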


\begin{proof}
By induction on the derivation.
If the root uses (\pointrule), $t\leq u$.
   If it uses (\reflrule),  (\monorule) or (\antirule), $t$ is $u$.
  Suppose that the root uses (\transrule), say
  \[ \infer[\transrule]{t\Box \leq u\Box}{t\Box \leq v & v\leq u\Box }\]
The previous lemma applies to both subproofs.
  There is some $t\leq x$ so that $v = x \Box$. 
  There is also some $w\leq u$ so that $v = w \Box$. 
So $x = w$.  And then $t\leq w \leq u$ tells us that $t\leq u$.
\end{proof}

\begin{lemma} If $\GammaBox \proves t^* \leq u^* $ with none of the new symbols $\Box_{\sigma}$ occurring in $t^* $ or $u^* $,
then $\Gamma\proves t^*  \leq u^* $.
\label{lemma-no-new-symbols}
\end{lemma}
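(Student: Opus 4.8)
The plan is to argue by induction on the derivation witnessing $\GammaBox \proves t^* \leq u^*$, after first isolating the one fact that makes the induction go through: that a provable inequality never relates a $\Box$-free term to one containing a new constant. I would record this as an auxiliary observation — call it \emph{$\Box$-invariance} — stating that whenever $\GammaBox \proves s \leq t$, the term $s$ is $\Box$-free if and only if $t$ is. (Lemma~\ref{lemma-Box-first} is exactly the special case in which one side is a bare $\Box$.) This is proved by a routine induction on the derivation of $s \leq t$: no rule of Figure~\ref{fig-rules1} ever deletes a subterm, the congruence rules (\pointrule), (\monorule), (\antirule) build their conclusions by application from the premise terms, so the $\Box$-content of the two sides changes in lockstep; and since no $\Box_\sigma$ carries a polarity tag in $\GG$, a new constant can never be the principal symbol $f$ of (\monorule) or (\antirule). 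Thus each rule preserves the biconditional, and the base cases are immediate because $\Gamma$ and the $\Box$-free signature assertions of $\GG$ relate only $\Box$-free terms.

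With $\Box$-invariance in hand, I would run the main induction on the derivation of $\GammaBox \proves t^* \leq u^*$, assuming $t^*$ and $u^*$ $\Box$-free, and produce a derivation over $\FF$. For a leaf, a $\Box$-free inequality coming from $\GammaBox$ or from the signature $\GG$ is already an inequality of $\Gamma$ or of $\FF$ (the only signature inequality lost is $\Box_\sigma \leq \Box_\sigma$, which is not $\Box$-free), so it is a legitimate leaf over $\FF$. The (\reflrule) case is trivial. For the congruence rules, the hypothesis that the conclusion is $\Box$-free forces every premise term to be $\Box$-free: a conclusion $su \leq tu$ of (\pointrule) makes $s$, $t$, and $u$ $\Box$-free, and a conclusion $ft \leq fu$ of (\monorule) makes $t$ and $u$ $\Box$-free while its principal symbol $f$, being polarity-tagged, already lies in $\FF$. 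In each such case I apply the induction hypothesis to the premises and reapply the same rule over $\FF$, reusing the $\FF$-signature leaf $f^\uar$ (resp. $f^\dar$) where needed.

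The only case with genuine content is (\transrule), and this is where $\Box$-invariance does the work. If the last step infers $t^* \leq u^*$ from $t^* \leq v$ and $v \leq u^*$, the middle term $v$ could a priori contain new constants even though $t^*$ and $u^*$ do not. But $\Box$-invariance applied to the left premise forces $v$ to be $\Box$-free, since $t^*$ is. Hence both premises have $\Box$-free endpoints, the induction hypothesis applies to each, and one reapplies (\transrule) over $\FF$. This completes the induction and the proof.

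I expect the transitivity step to be the only real obstacle: the conclusion is $\Box$-free, yet a derivation may route through intermediate terms that carry the auxiliary constants, and nothing local to the (\transrule) rule rules this out. The entire difficulty is therefore concentrated in the global invariant that a $\Box$-free term and a $\Box$-laden term are never provably comparable. One could instead dispatch this case by peeling trailing arguments and invoking Lemmas~\ref{lemma-onebox} and~\ref{lemma-richness} to cancel matching occurrences of a new constant, but the uniform $\Box$-invariance statement is the cleanest lever and avoids a nested induction on the structure of $v$.
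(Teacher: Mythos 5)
Your proof is correct, and its engine---the lockstep $\Box$-invariance, applied exactly where it is needed, to the middle term of a (\transrule) step---is the same fact as the paper's central claim. But your architecture is genuinely different and leaner. The paper does not induct directly on the given derivation: it first picks, for each inhabited type $\sigma$, a term $t_\sigma$ other than $\Box_\sigma$, applies the substitution $\Box_\sigma \mapsto t_\sigma$ throughout the proof tree, proves the invariance for the substituted tree, and then argues by contradiction (via a lowest node where a $\Box$ vanishes, necessarily a (\transrule) step) that the substituted tree mentions only inhabited types, hence contains no $\Box$ at all and is the desired proof over $\Gamma$. You skip the substitution entirely and run a root-to-leaves induction: in every rule of Figure~\ref{fig-rules1} a $\Box$-free conclusion forces $\Box$-free premises, with invariance supplying the one non-local case, (\transrule). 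For the present lemma your route loses nothing, since your invariance does hold for arbitrary derivations over $\GammaBox$; the only point to make explicit is that the base case must also cover the $\GG$-signature leaf $\Box_\sigma \leq \Box_\sigma$, where the biconditional holds because both sides contain a $\Box$. What the paper's heavier machinery buys is reuse: Lemma~\ref{lemma-no-new-symbols-wc} is proved by elaborating this proof, and there your method breaks in two places. Plain invariance fails once the (\wcrule) rules are present, since (\wcone) and (\wctwo) conclude $ft \leq gu$ with $t$ and $u$ unconstrained by the premises, so one side may carry a $\Box$ the other lacks; and (\wcthree) can derive a $\Box$-free conclusion $g \leq h$ from a premise $ft \leq ku$ that contains $\Box$'s, so a $\Box$-free conclusion no longer forces $\Box$-free premises. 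The substitution (which replaces any such mismatched $\Box$ by an inhabitant) together with the rewriting of (\wcthree) applications via Lemma~\ref{lemma-richness-wc} is what rescues the argument in that setting. In short: correct, cleaner here, but less portable to the weakly complete extension than the paper's version.
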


\begin{proof}
Call a type $\sigma$ \emph{inhabited} (in a given signature)
if there is a term of type $\sigma$ other than $\Box_{\sigma}$.
For each inhabited type, pick a term $t_{\sigma}$ of that type.
Consider the following substitution:
\[
s(\Box_{\sigma}) = \left\{
\begin{array}{ll}
t_\sigma & \mbox{if $\sigma$ is inhabited}\\
\Box_{\sigma} & \mbox{otherwise}
\end{array}
\right.
\]

We claim that if we take any proof tree $\TT$ over $\GammaBox$ and apply this substitution to all terms,
the result $\TT[s]$ is a valid proof tree over $\GammaBox$. 
The proof of this is by induction. 

We next claim that in $\TT[s]$   every assertion $t\leq u$ has the property that some $\Box_{\sigma}$ occurs
in $t$ iff it occurs in $u$.   The proof is by induction, and the main interesting steps  are for (\transrule).

We now fix a proof tree $\TT$ showing that $\GammaBox \proves t^* \leq u^* $.
Now none of the $\Box$ symbols occur in the root $t^*\leq u^*$, or in any of the leaves of the tree.
So the leaves and root of $\TT[s]$ are the same as those of $\TT$.
We claim that 
in $\TT[s]$, 
  every $\sigma$ which occurs  is inhabited.  For this we argue by contradiction; suppose it is false.
    Since the root
has no $\Box$ occurrences, there must be a node in the proof tree  which does have a $\Box$-occurrence
but whose child (downward) in the tree has no $\Box$-occurrences.  The only way this can happen is at the transitivity
step:
\[
\infer{t\leq v}{t\leq u & u  \leq v}
\]
But the observation above applies (twice) and tells us that both $t$ and $v$ have a $\Box$-subterm;
hence $t\leq v$ has at least two of them -- a contradiction! 
Therefore every type in $\TT[s]$ is inhabited.   And then in passing from $\TT$ to $\TT[s]$, we removed $\Box_{\sigma}$ in favor of a term $t_{\sigma}$.
  We conclude that $\TT[s]$ has no $\Box$-terms.
Thus,  $\TT[s]$  is a proof tree over $\Gamma$.   And as we have seen, its leaves and root are the same as those of $\TT$.
\end{proof}

\section{Completeness for Full Weakly Complete Structures in the Extended Logic}
\label{section-completion}

The work in the previous section suggests that
we should prove a completeness theorem for reasoning
in full structures $\Gamma\proves t^*\leq u^*$ iff $\Gamma\models t^*\leq u^*$, where the proof system is the one in Figure~\ref{fig-rules1}
and the semantic notion is based on the full structures which we have introduced in Definition~\ref{def-full-model}.
We have not been able to obtain this result.   On the other hand, we have related results.  First, we might well relax the condition
of fullness to the natural weaker condition associated with Henkin-like models of the typed lambda calculus.   Doing this leads to a
completeness result fairly easily, not just for the logic of this paper but for much more expressive formalisms that have 
a richer type system, variables, abstraction, and arbitrary sets of hypotheses.   This is not the topic of this paper, but for work in this
area, see~\cite{IcardMoss21,IMT,Tune2016}.  (We should mention that \cite{IMT} has an error that will be fixed in a follow-up publication.)

\begin{definition}
\label{def-wc}
A preorder is \emph{weakly complete} if every $x$ and $y$ have some upper bound $z$
and also some lower bound $w$.  The bounds required need not be least upper bounds or greatest lower bounds.
A full structure is called \emph{weakly complete} if every base preorder $\PP_{\beta}$ is weakly complete.
(It follows that each $\PP_{\sigma}$ is weakly complete.)
\end{definition}

As the name suggests, weak completeness is a fairly weak property.   Every lattice has this property, for example.
Every preorder with a greatest and a least element is weakly complete.
On the other hand,
a flat preorder containing two or more points is not weakly complete.
A disjoint union of two non-empty preorders is also not weakly complete.

\begin{figure}[t]
\begin{mathframe}
\[
\begin{array}{c}
\infer[\wcintrees_1]{ft \leq gu}{f^{\uar} : \sigma\to\tau & g^{\dar}: \sigma\to\tau & f\leq g  & }
\qquad
\infer[\wcintrees_2]{ft\leq gu}{f^{\dar} : \sigma\to\tau & g^{\uar}: \sigma\to\tau & f\leq g  & }
\\ \\
\infer[\wcintrees_3]{g \leq h}
{ f^{\dar},g^{\uar}, h^{\uar}, k^{\dar} : \sigma\to\tau  & g\leq f  & k \leq h & ft \leq ku}
\end{array}
\]
\end{mathframe}
\caption{Additional rules of the logic which are sound  for weakly complete preorders. 
(\wcthree) 
stands in for  four rules; we could also have the following arrangements at the front:
(a) $ f^{\uar},g^{\dar}, h^{\uar}, k^{\dar}$;
(b) $ f^{\dar},g^{\uar}, h^{\dar}, k^{\uar}$;
(c)  $f^{\uar},g^{\dar}, h^{\dar}, k^{\uar}$.
\label{fig-rules2}}
\end{figure}

The logic relevant to weakly complete full structures is given in Figure~\ref{fig-rules2}, 
taken in addition to the rules which we saw in Figure~\ref{fig-rules1}.

Suppose that $f$ and $g$ are function symbols of the same type, say $\sigma$.
We write $f\lequardar g$ to mean that either 
$ f^{\uar} \leq g^{\dar}$ 
or else that 
$ f^{\dar} \leq g^{\uar}$.
With this notation, the six (\wcrule) rules maybe written as two:
\[
\infer[\mbox{\sc wc${}_{1,2}$}]{fa \leq gb}{f \lequardar g  &a \leq b}
\qquad\qquad
\infer[\mbox{\sc wc}_3]{g \leq h}
{ g\lequardar f  & k \lequardar h  &  ft\leq ku  }
\]

We write $\Gamma\proveswc  s^*\leq t^*$ if there is a derivation (a proof tree)
  that also allows the weak completeness rules in Figure~\ref{fig-rules2}.
 And $\Gamma\modelswc  s^*\leq t^*$ means that every weakly complete full model of $\Gamma$ is also a model of $ s^*\leq t^*$.

\begin{proposition}
 If $\Gamma\proveswc  s^*\leq t^*$, then $\Gamma\modelswc  s^*\leq t^*$.
\label{prop-soundness-wc}
\end{proposition}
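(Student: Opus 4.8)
The plan is to prove soundness rule by rule. Since the rules of Figure~\ref{fig-rules1} are already sound for all full structures (and hence for weakly complete ones), the work is to verify that each of the three new schemes in Figure~\ref{fig-rules2} preserves truth in every weakly complete full model $\Model$. The overall argument is the usual induction on the height of a $\proveswc$-derivation: at each non-leaf node we check the rule applied, using the induction hypothesis that the premises hold semantically. So the entire content reduces to three local soundness claims, one per rule scheme.

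First I would dispatch (\wcintrees$_1$) and (\wcintrees$_2$). For (\wcintrees$_1$), assume $f^{\uar}, g^{\dar}:\sigma\to\tau$ and $\semantics{f}\leq\semantics{g}$ pointwise, and fix terms $t,u:\sigma$. Here weak completeness enters: by Definition~\ref{def-wc} the preorder $\PP_{\sigma}$ is weakly complete, so $\semantics{t}$ and $\semantics{u}$ have a common upper bound $c$, giving $\semantics{t}\leq c\geq\semantics{u}$. This is exactly the configuration of Example~\ref{example-up-down}(1), so $\semantics{f}(\semantics{t})\leq\semantics{g}(\semantics{u})$, i.e.\ $\Model\models ft\leq gu$. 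The rule (\wcintrees$_2$) is symmetric: now $f^{\dar},g^{\uar}$, and weak completeness supplies a common \emph{lower} bound $c$ of $\semantics{t},\semantics{u}$, so $\semantics{t}\geq c\leq\semantics{u}$ and Example~\ref{example-up-down}(2) applies to give $ft\leq gu$. (These two correspond to the combined scheme $f\lequardar g$, $a\leq b\vdash_{\wcintrees_{1,2}}fa\leq gb$.)

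The main obstacle is (\wcthree), which is genuinely less routine because its hypothesis $ft\leq ku$ must be \emph{used} rather than merely combined. Take the displayed case $f^{\dar},g^{\uar},h^{\uar},k^{\dar}:\sigma\to\tau$ with $\semantics{g}\leq\semantics{f}$, $\semantics{k}\leq\semantics{h}$, and $\semantics{f}(\semantics{t})\leq\semantics{k}(\semantics{u})$; we must show $\semantics{g}\leq\semantics{h}$ pointwise, i.e.\ $\semantics{g}(p)\leq\semantics{h}(p)$ for every $p\in\PP_{\sigma}$. The idea is, given $p$, to route through the fixed witnesses $\semantics{t},\semantics{u}$ using weak completeness of $\PP_{\sigma}$: choose a lower bound $w$ of $p$ and $\semantics{t}$ and an upper bound $z$ of $p$ and $\semantics{u}$. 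Then I would chain the available (anti)tonicities and pointwise inequalities: from $w\leq p$ and monotonicity/antitonicity plus $\semantics{g}\leq\semantics{f}$, push $\semantics{g}(p)$ down to a value of $\semantics{f}$ at $w\leq\semantics{t}$, land below $\semantics{f}(\semantics{t})\leq\semantics{k}(\semantics{u})$, and then climb back up through $\semantics{k}\leq\semantics{h}$ and $p\leq z\geq\semantics{u}$ to reach $\semantics{h}(p)$. Assembling these steps into a single transitive chain $\semantics{g}(p)\leq\cdots\leq\semantics{h}(p)$ is the heart of the matter, and the correct placement of the bounds $w,z$ relative to the polarities of $f,g,h,k$ is what makes weak completeness indispensable. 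The three variant polarity arrangements (a)--(c) listed after Figure~\ref{fig-rules2} are handled by the same chaining argument with the directions of the intermediate bounds adjusted to match the tags, so I would write the displayed case in full and note that the remaining three follow by the identical bookkeeping.
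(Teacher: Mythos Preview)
Your treatment of (\wcone) and (\wctwo) is correct and matches the paper's proof exactly: weak completeness supplies the upper/lower bound and Example~\ref{example-up-down} finishes.

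For (\wcthree) your overall strategy is right and is essentially the paper's argument, but the specific bounds you name are reversed. In the displayed case $g^{\uar}\leq f^{\dar}$ and $k^{\dar}\leq h^{\uar}$, to get $\semantics{g}(p)\leq\semantics{f}(\semantics{t})$ you need an \emph{upper} bound $c$ of $p$ and $\semantics{t}$ (so that $\semantics{g}(p)\leq\semantics{g}(c)\leq\semantics{f}(c)\leq\semantics{f}(\semantics{t})$, using $g^{\uar}$, $g\leq f$, $f^{\dar}$ in turn); with your lower bound $w$ the chain breaks at $\semantics{f}(w)$, since $w\leq\semantics{t}$ and $f^{\dar}$ give $\semantics{f}(\semantics{t})\leq\semantics{f}(w)$, the wrong direction. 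Symmetrically, to get $\semantics{k}(\semantics{u})\leq\semantics{h}(p)$ you need a \emph{lower} bound of $p$ and $\semantics{u}$, not an upper bound. Once the directions are corrected, your seven-step chain is exactly the content of the paper's proof; the paper just compresses the two ends by citing Example~\ref{example-up-down} (with the bounds supplied implicitly by weak completeness) rather than writing out the intermediate values.
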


\begin{proof}
By induction on proofs in the system. 
We only consider the (\wcrule) rules. 
For (\wcone), fix a 
weakly complete full structure $\Model$.
We know that $\semantics{f}: \PP_{\sigma}\to \PP_{\tau}$ is a monotone function,
$\semantics{g}: \PP_{\sigma}\to \PP_{\tau}$ is an antitone function,
and also $\semantics{t}, \semantics{u}\in P_{\sigma}$.
By weak completeness of $\PP_{\sigma}$,
let $x\in P_{\sigma}$ be such that $\semantics{t}, \semantics{u}\leq x$.
Then by Example~\ref{example-up-down}, $ \semantics{f}(\semantics{t}) \leq  \semantics{g}(\semantics{u})$.
Thus $\semantics{ft}\leq \semantics{gu}$.

The soundness of (\wctwo) is similar, and it uses the fact that every pair of elements of $P_{\sigma}$ have some lower bound.

Next, let us consider  (\wcthree) with the same notation as just above.
The important thing is that the premises do not include $f\leq k$, just the much weaker assertion that for particular terms $t$ and $u$,
$ft \leq ku$.
But this is enough: take any $x\in P_{\sigma}$ and observe
\[
\begin{array}{lcll}
\semantics{g}(x) & \leq & \semantics{f}(\semantics{t}) & \mbox{by Example~\ref{example-up-down}} \\
& = & \semantics{ft} & \mbox{by the recursive clauses in the semantics}\\
& \leq & \semantics{ku}   & \mbox{by the overall induction hypothesis, and $ft\leq ku$} \\
& = & \semantics{k}(\semantics{u})\\
& \leq & \semantics{h}(x) & 
\end{array}
\]
Since $x$ was arbitrary, we have shown that $\semantics{g}\leq \semantics{h}$ pointwise.
\end{proof}

The calculation just above makes it clear that the last two premises could be changed.    For example, we could 
have $g^{\dar}$, $h^{\dar}$, $f^{\uar}$, and $k^{\uar}$.
The only thing that matters is that the arrow directions $g$ and $f$ have to be opposite, and the same goes for $h$ and $k$.
So there are four (\wcthree) rules.

\subsection{Additional lemmas on new constants}
\label{section-lemmas-new-constants-weakly-complete}

We proved results in 
Section~\ref{section-lemmas-new-constants} that showed how adding fresh constants to a signature
gives a 
conservative extension both for the semantics and the proof theory.
At this point, we need to re-work that section
in light of the new (\wcrule) rules.  
Definition~\ref{definition-extend-F} mentioned notation having to do with new constants.  This needs no change.
The semantic results in 
Lemma~\ref{lemma-reduct} and~\ref{lemma-no-new-symbols-semantic-second} do not change:
the reduct of a weakly complete model is weakly complete.
No change is needed in Lemma~\ref{lemma-Box-first}, since none of the (\wcrule) rules allow us to conclude an inequality 
whose left- or right-hand side is a new symbol $\Box_{\sigma}$ by itself.
Lemma~\ref{lemma-onebox} does need to change.

\begin{lemma}\label{lemma-onebox-wc}
 If $\GammaBox\proveswc t\leq v\Box$, then one of the following holds: 
\begin{enumerate}
    \item 
   There is some $u\leq v$ so that $t = u \Box$.  
   \item There is a term  $s:\sigma$, 
   and constants $f,g:\sigma\to\tau$ such that $\GammaBox\proveswc t\leq f s$, and
   $f\lequardar  g\leq v$.  \end{enumerate}
\end{lemma}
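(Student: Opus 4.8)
The plan is to argue by induction on the derivation of $\GammaBox\proveswc t\leq v\Box$, splitting on the rule used at the root. The guiding observation is that the right-hand side of the conclusion is an \emph{application} $v\Box$, and only certain rules can produce a conclusion of this shape. Every leaf of the derivation is either an inequality of $\GammaBox$ (which is a $\Gamma$-inequality over $\FF$, hence contains no occurrence of $\Box$) or an assertion of the signature $\GG$ (an inequality between two constants, or a polarity tag); by unique readability of applicative terms, none of these can match $t\leq v\Box$, since $v\Box$ is an application. Likewise, the conclusion of (\wcthree) is an inequality $g\leq h$ between two \emph{constants} of type $\sigma\to\tau$, so it too cannot match $t\leq v\Box$. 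Thus the leaf cases and the (\wcthree) case are vacuous, and only (\reflrule), (\pointrule), (\monorule), (\antirule), (\wcone), (\wctwo), and (\transrule) remain to be analyzed.

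For the non-transitivity rules I would match the conclusion against $t\leq v\Box$ by unique readability. If the root is (\reflrule), then $t=v\Box$ and option~1 holds with $u=v$ (using (\reflrule) for $v\leq v$). If it is (\pointrule), the conclusion is $pr\leq qr$ from a derived $p\leq q$; matching forces $q=v$, $r=\Box$, and $t=p\Box$, so option~1 holds with $u=p\leq v$. If it is (\monorule) or (\antirule), the conclusion is $fp\leq fq$ or $fq\leq fp$ with $f$ tagged appropriately; matching forces $f=v$ and one of $p,q$ to be $\Box$, whence Lemma~\ref{lemma-Box-first} (which is unchanged under $\proveswc$) forces the other to be $\Box$ as well, giving $t=v\Box$ and option~1 with $u=v$. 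If the root is (\wcone) or (\wctwo), the conclusion has the form $fa\leq gb$ with $f\lequardar g$; matching forces $g=v$ and $b=\Box$, while $t=fa$. Then option~2 holds with $s=a$, witnessed by $t=fa=fs$ and $g=v\leq v$ (both via (\reflrule)) together with $f\lequardar g$. Note that $f,g$ here are genuine constants, as required by option~2.

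The transitivity case is where the two options genuinely interact, and it is the main obstacle. Here the root is a (\transrule) step with premises $t\leq w$ and $w\leq v\Box$ for some term $w$. I would first apply the induction hypothesis to the right premise $w\leq v\Box$. If it returns option~2 — constants $f,g$ and a term $s$ with $w\leq fs$ and $f\lequardar g\leq v$ — then (\transrule) gives $t\leq fs$ from $t\leq w$, and option~2 holds directly. If instead it returns option~1 — some $u'\leq v$ with $w=u'\Box$ — then the left premise $t\leq w$ is literally a derivation of $t\leq u'\Box$, which is a \emph{strictly smaller} subtree, so the induction hypothesis applies again. A second option~1 yields $u\leq u'$ with $t=u\Box$, and (\transrule) gives $u\leq u'\leq v$, so option~1 holds; a second option~2 yields constants $f,g$ and a term $s$ with $t\leq fs$ and $f\lequardar g\leq u'$, and chaining $g\leq u'\leq v$ through (\transrule) gives $f\lequardar g\leq v$, so option~2 holds. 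The delicate points are verifying that the second appeal to the induction hypothesis is to a proper subderivation (it is, being the left subtree) and that in each surviving case the final witnessing inequality is obtained by composing the two bounds through (\transrule); both are routine once the case analysis is organized so that an option~2 outcome is always propagated outward and an option~1 outcome triggers the recursion on the left premise.
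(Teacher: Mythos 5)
Your proposal is correct and follows essentially the same route as the paper: the paper's (very terse) proof is likewise an induction on the derivation---counting (\transrule) steps rather than proceeding by structural induction, an inessential difference---whose only highlighted point is exactly your observation that (\wcthree) cannot conclude an inequality whose right-hand side is an application $v\Box$. Your case analysis (leaves vacuous, (\monorule)/(\antirule) handled via the $\proveswc$-version of Lemma~\ref{lemma-Box-first}, (\wcone)/(\wctwo) yielding option~2, and the nested use of the induction hypothesis in the (\transrule) case) fills in precisely the details the paper leaves implicit.
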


\begin{proof}
By induction on the number of ({\sc trans}) steps in the derivation.
We cannot have a derivation where the root is  $t\leq u\Box$ justified by (\wcthree).
   Applications of (\wcthree) conclude an inequation between function symbols which have a declared $\uar$ or $\dar$ marking.
\end{proof}

We also have a parallel result for the situation  $\GammaBox\proves u\Box \leq t$.

\begin{lemma}  
\label{lemma-richness-wc} 
  If $\GammaBox\proveswc t\Box\leq u\Box$, then $\GammaBox\proveswc t\leq u$.
\end{lemma}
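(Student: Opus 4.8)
The plan is to follow the proof of Lemma~\ref{lemma-richness} line for line, arguing by induction on the derivation that witnesses $\GammaBox\proveswc t\Box\leq u\Box$ and splitting on the rule used at the root. Two things change relative to the basic logic: the root may now be one of the (\wcrule) rules of Figure~\ref{fig-rules2}, and in the transitivity step I must invoke Lemma~\ref{lemma-onebox-wc} (together with its left–right dual, noted immediately after it) in place of Lemma~\ref{lemma-onebox}. Since the conclusion $t\Box\leq u\Box$ is an inequality between two applications ending in $\Box$, unique readability of terms will do most of the routine work.

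First I would dispose of the cases carried over from Lemma~\ref{lemma-richness}. If the root is (\reflrule), (\monorule), or (\antirule), the two sides of the conclusion share a head, so matching against $t\Box\leq u\Box$ forces $t$ and $u$ to be the same term and $\GammaBox\proveswc t\leq u$ follows by (\reflrule). If the root is (\pointrule), the premise is literally $t\leq u$. For the new leaves: if the root is (\wcone) or (\wctwo), the conclusion has the shape $f a\leq g b$ with $f,g$ tagged constants, and unique readability against $t\Box\leq u\Box$ forces $a=b=\Box$, $t=f$, and $u=g$; the premise $f\leq g$ is then exactly $\GammaBox\proveswc t\leq u$. The root cannot be (\wcthree), because (\wcthree) concludes an inequality $g\leq h$ between atomic tagged constants, which can never have the form of an application inequality $t\Box\leq u\Box$.

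The only real work is the transitivity case: the root is (\transrule) with premises $\GammaBox\proveswc t\Box\leq w$ and $\GammaBox\proveswc w\leq u\Box$. Here I would apply Lemma~\ref{lemma-onebox-wc} to the right premise and its dual form to the left premise, and then combine the resulting alternatives. When both analyses land in the first alternative, $w$ itself has the form $w'\Box$, the heads match exactly as in Lemma~\ref{lemma-richness}, and transitivity of the head inequalities ($t\leq w'$ and $w'\leq u$) yields $\GammaBox\proveswc t\leq u$ at once.

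The main obstacle is the second alternative of Lemma~\ref{lemma-onebox-wc}, in which the derivation does not keep $w$ literally of the form $(\cdot)\Box$ but instead routes through a polarity conversion $f\lequardar g$. In that situation the bridging information is not a plain $\leq$ between heads but the weaker $\lequardar$ data together with an application inequality, so the clean ``cancel the $\Box$'' argument is unavailable and I expect to have to rebuild a head-level derivation of $t\leq u$ using precisely the (\wcrule) rules — turning the soundness reasoning behind (\wcone)--(\wcthree) into a syntactic move. The delicate point will be checking that the $\lequardar$ passages produced by the two premises are jointly exhaustive and compatible, so that every combination of alternatives actually closes; everything outside this case is a routine transcription of the earlier proof.
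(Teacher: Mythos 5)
Your setup coincides with the paper's own proof: induction on the derivation, the same disposal of (\reflrule), (\monorule), (\antirule), (\pointrule), the same reading of (\wcone)/(\wctwo) at the root, the same impossibility argument for (\wcthree) at the root, and Lemma~\ref{lemma-onebox-wc} together with its dual applied to the two premises of a (\transrule) step. But the proposal stops exactly where the lemma becomes nontrivial, so there is a genuine gap: you only complete the sub-case in which \emph{both} premises fall under the first alternative of Lemma~\ref{lemma-onebox-wc}. The mixed sub-cases are never mentioned, and the both-second-alternative sub-case is explicitly deferred (``I expect to have to rebuild a head-level derivation \ldots''), with the key compatibility question left open. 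Those two sub-cases are where the content of the lemma lives.

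Here is what is missing. For the mixed sub-cases no new idea is needed, only the induction hypothesis: if, say, the left premise $t\Box\leq v$ falls under the first alternative, then $v$ is literally $w\Box$ with $\GammaBox\proveswc t\leq w$, so the right premise is a strictly smaller derivation of $w\Box\leq u\Box$; the induction hypothesis gives $\GammaBox\proveswc w\leq u$, and (\transrule) finishes. (This is also why the induction should be on the height of the derivation, so that it can be invoked on a premise and not only read off the root rule; your write-up never actually appeals to the induction hypothesis in the transitivity case.) For the both-second-alternative sub-case, nothing needs to be ``rebuilt'': the data returned by the two applications of Lemma~\ref{lemma-onebox-wc} are \emph{exactly} the premises of (\wcthree). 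From the left premise $t\Box\leq v$ one gets constants $z, x$ and a term $y$ with $\GammaBox\proveswc t\leq z$, $z\lequardar x$, and $\GammaBox\proveswc xy\leq v$; from the right premise $v\leq u\Box$ one gets constants $f, e$ and a term $d$ with $\GammaBox\proveswc v\leq fd$, $f\lequardar e$, and $\GammaBox\proveswc e\leq u$. Transitivity gives $\GammaBox\proveswc xy\leq fd$, and then (\wcthree) applied to $z\lequardar x$, $f\lequardar e$, and $xy\leq fd$ yields $\GammaBox\proveswc z\leq e$, whence $t\leq z\leq e\leq u$. Your worry that the $\lequardar$ passages might fail to be ``jointly exhaustive and compatible'' dissolves here, because (\wcthree) is stated in four polarity variants precisely so that every combination of the two $\lequardar$ facts is admissible. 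Until these two sub-cases are written out, the proposal is a proof plan rather than a proof.
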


\begin{proof}
By induction on the the height of the   derivation.    If the root is (\reflrule),  (\monorule) or (\antirule), $t$ is $u$.
 If the root is (\pointrule), we see that $t\leq u$.
If the root is (\wcone) or (\wctwo),  then we have $t \lequardar u$.  In particular,  $t\leq y$.
As in Lemma~\ref{lemma-onebox-wc}, we cannot have a derivation where the root is (\wcthree) and where the assertion at the root is  $t\Box\leq u\Box$.

The main work is when the root is (\transrule), say
  \begin{equation}
  \label{boxcases}
   \infer[\transrule]{t\Box \leq u\Box}{t\Box \leq v & v\leq u\Box}
   \end{equation}
The first case is when we have 
   two instances of the first option in Lemma~\ref{lemma-onebox-wc}.
   The proof works as in  Lemma~\ref{lemma-onebox}.

Suppose first that we have  the first option in Lemma~\ref{lemma-onebox-wc} on the left  premise 
of (\ref{boxcases}), 
 say with $v$ being $w\Box$ and  $t\leq w$.
 Then the right premise above is $w\Box \leq u\Box$.
 By induction hypothesis $\GammaBox\proves w \leq u$.
 But then using (\transrule) we have $t \leq u$, as desired.
  
 The same  reasoning applies if the first option in Lemma~\ref{lemma-onebox-wc} 
 applied to the right premise  of (\ref{boxcases}).
 
The most interesting case is when 
both premises of (\ref{boxcases}) give instances of the 
 second option in Lemma~\ref{lemma-onebox-wc}.
From the left premise  $t\Box \leq v$,  we get
   $x$,  $y$, and $z$ such that $ xy\leq v$, and 
   $t \leq z\lequardar x $. 
From the right premise $v\leq u\Box$ we get
 $f$,  $d$, and $e$ such that $ v\leq f d$, and
   $f\lequardar  e\leq u$. 
Then    $xy\leq v \leq fd$, and also  $z\lequardar x $ and   $f\lequardar  e$.
From (\wcthree), we get $z\leq e$.
By this fact together with $t\leq z$ and $e\leq u$, we have $t\leq u$.
\end{proof}

\begin{lemma} If $\GammaBox \proveswc t^* \leq u^* $ with none of the new symbols $\Box_{\sigma}$ occurring in $t^* $ or $u^* $,
then $\Gamma\proveswc t^*  \leq u^* $.
\label{lemma-no-new-symbols-wc}
\end{lemma}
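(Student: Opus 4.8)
The plan is to mirror the substitution argument used for Lemma~\ref{lemma-no-new-symbols}, adding the weak-completeness rules to the two inductions that drive it. As there, I would call a type $\sigma$ \emph{inhabited} when it has a term other than $\Box_\sigma$, pick for each inhabited $\sigma$ a term $t_\sigma\ne\Box_\sigma$, and let $s$ be the substitution sending $\Box_\sigma$ to $t_\sigma$ on inhabited types and fixing the other new constants. Given a tree $\TT$ witnessing $\GammaBox\proveswc t^*\le u^*$ with $t^*,u^*$ box-free, I would first re-establish the two claims from that proof for the extended system: (i) $\TT[s]$ is again a valid $\GammaBox$-derivation, and (ii) in $\TT[s]$ every assertion has a $\Box$ on its left side iff it has one on its right. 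The only genuinely new instances to check are (\wcone), (\wctwo), and (\wcthree), and here the crucial point is supplied by Definition~\ref{definition-extend-F}: the fresh constants carry \emph{no} polarity, so none of the tagged symbols $f,g,h,k$ occurring in these rules can be a $\Box_\sigma$. Consequently $s$ leaves those principal symbols untouched and the side premises $f\lequardar g$ etc.\ unchanged, so the rule instances survive substitution; moreover (\wcone) and (\wctwo) move box-occurrences from premise to conclusion exactly as (\monorule) and (\pointrule) do, and the conclusion $g\le h$ of (\wcthree) is an inequality between constants, hence box-free and trivially balanced. So (i) and (ii) go through.

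In the basic proof the argument then finishes by observing that the \emph{only} rule able to discard a $\Box$ between a premise and its conclusion is (\transrule), through its vanishing middle term, and that claim (ii), applied to both premises of such a step, forces the conclusion to carry a $\Box$ as well, contradicting box-freeness of the root. This is precisely where the new logic resists: (\wcthree) is a \emph{second} box-discarding rule, since its premise $ft\le ku$ mentions terms $t,u$ that do not occur in the conclusion $g\le h$, and claim (ii) yields no contradiction there. Thus $\TT[s]$ need not be box-free after all; what one can still say is that every surviving $\Box$ lives inside the subderivation of some (\wcthree)-premise.

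To repair this I would induct on the number of (\wcthree)-steps whose premise carries a $\Box$, clearing an innermost such step. At it the conclusion $g\le h$ and the order-premises $g\le f$ and $k\le h$ are box-free, so it is enough to connect $f$ to $k$ by a box-free derivation and then chain $g\le f\le k\le h$ by (\transrule) --- a route that uses no polarity and no (\wcthree). When the domain type $\sigma$ of the premise is \emph{uninhabited}, the only term of type $\sigma$ is $\Box_\sigma$, so the premise is literally $f\Box_\sigma\le k\Box_\sigma$, and Lemma~\ref{lemma-richness-wc} delivers $\GammaBox\proveswc f\le k$; the inductive hypothesis (the endpoints now being box-free) turns this and the two order-premises into $\Gamma$-derivations, completing the repair. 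Iterating removes all box-bearing (\wcthree)-steps, after which the box-freeness argument of the basic proof applies and yields a $\Gamma$-derivation of $t^*\le u^*$.

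The hard part is the remaining case, where $\sigma$ is inhabited yet the premise $ft\le ku$ still carries \emph{nested} boxes of uninhabited types: then one can neither substitute those boxes away nor read off $f\le k$ from $ft\le ku$, since the rule is sound precisely because a single witness $t,u$ suffices and so $ft\le ku$ does not entail $f\le k$. I expect the correct treatment to peel these boxes off one at a time with Lemma~\ref{lemma-onebox-wc} and Lemma~\ref{lemma-richness-wc}, reducing the premise to a box-free one before re-applying (\wcthree) over $\Gamma$; controlling this peeling --- and verifying that it strictly decreases the chosen induction measure rather than reintroducing boxes through the outputs of those lemmas --- is where the real bookkeeping, and the main risk of the argument, lies.
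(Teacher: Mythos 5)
Your proposal follows the paper's own route, and its completed core coincides with the paper's proof: the same substitution $s$, the same two claims (validity of $\TT[s]$ and balance of box-occurrences), the identification of (\wcthree) as the one new rule that can discard boxes, and the same repair, namely using Lemma~\ref{lemma-richness-wc} to turn a premise $f\Box_\sigma\le k\Box_\sigma$ into $\GammaBox\proveswc f\le k$ and then chaining $g\le f\le k\le h$ by two applications of (\transrule). However, one step of your verification is wrong as stated: (\wcone) and (\wctwo) do \emph{not} ``move box-occurrences from premise to conclusion exactly as (\monorule) and (\pointrule) do.'' In (\pointrule) the same fresh term $u$ is appended to both sides, and in (\monorule) the terms come from the premise $t\le u$; by contrast, the conclusion $ft\le gu$ of (\wcone)/(\wctwo) contains two terms $t,u$ that occur in \emph{no} premise and may differ from each other, so balance there is exactly what requires an argument. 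The paper supplies it: after the substitution, if exactly one of the two argument terms were $\Box_\sigma$, the other would witness that $\sigma$ is inhabited, so that occurrence of $\Box_\sigma$ would already have been replaced --- contradiction. Without this argument, your claim (ii) is unsupported at precisely the place where the new rules threaten it.

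As for the ``hard part'' you could not finish --- a (\wcthree) premise $ft\le ku$ of inhabited domain type whose arguments carry nested boxes of uninhabited types --- you should know that the paper does not handle it either: the published proof performs the replacement only for premises literally of the form $f\Box_\sigma\le k\Box_\sigma$ and then asserts that the induction of Lemma~\ref{lemma-no-new-symbols} finishes the job, tacitly assuming that after substitution every surviving box stands alone as an argument of uninhabited type (equivalently, that the chosen inhabitants $t_\sigma$ may be taken box-free, which is not automatic). The same goes for your worry about the induction measure: the paper, like you, inserts the derivation of $f\le k$ produced by Lemma~\ref{lemma-richness-wc} without any control over its contents. So, judged against the paper's own level of rigor, your argument (once the (\wcone)/(\wctwo) balance step is fixed as above) is the paper's proof; the peeling problem you flag is a genuine subtlety that the published argument glosses over rather than resolves.
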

\begin{proof}
The proof of this result elaborates the proof of Lemma~\ref{lemma-no-new-symbols}.
We begin again with the observation that 
 if we take any proof tree $\TT$ over $\GammaBox$ in this system and replace, for every inhabited type $\sigma$, 
  every occurrence of $\Box_{\sigma}$ by a fixed term $t:\sigma$
which is not $\Box_{\sigma}$,
the result is a valid proof tree $\TT[s]$  over $\GammaBox$.

 We also claim that in $\TT[s]$ every assertion $t\leq u$ has the following property: for all types $\sigma$,   $\Box_{\sigma}$ occurs
in $t$ iff it occurs in $u$.  
In the induction this time, we do not have to worry about (\wcthree), since conclusions of (\wcthree) cannot involve a new symbol.
But we do need to think about (\wcone) and (\wctwo).  It allows us to conclude an inequality $fx \leq gy$ where $x$ and $y$ are 
possibly new (not in the subterm above) terms of the same type $\sigma$.   Indeed, $x$ and $y$ might possibly be  $\Box$.
If both or neither is $\Box$, then we are done.  And we cannot have one being $\Box$ and the other not, since this would imply that 
$\sigma$ is inhabited and that the $\Box$-occurrence would have been replaced.

The end of the proof expands on that of  
 Lemma~\ref{lemma-no-new-symbols}.
In the other proof, we took a proof tree $\TT$ showing that
$\GammaBox \proveswc t^* \leq u^* $ and directly showed that $\TT[s]$ could have no $\Box$-occurrences.
This time we might have applications of (\wcthree) that get rid of two $\Box$-occurrences as we go from top to bottom
on the left below:
\[\infer[\wcintrees_3]{g \leq h}
{ \infer*{g\lequardar f}{}  & \infer*{k \lequardar h}{}  & \infer*{f\Box_{\sigma} \leq k\Box_{\sigma}}{} }
\qquad
\infer[\transrule]{g \leq h}{ \infer[\transrule]{g \leq k}{ \infer*{g \leq f}{} &  \infer*{f \leq k}{}}
& 
 \infer*{k\leq h}{}}
\]
However,  in view of  Lemma~\ref{lemma-richness-wc}, 
the third premise implies that $\GammaBox \proves f \leq k$.
We can thus replace the entire application of (\wcthree) above by two applications of (\transrule) 
in order to conclude that $\GammaBox \proves g \leq h$, as on the right above. 
We do this replacement for every  application of (\wcthree) that dropped two $\Box$-occurrences.  After that, the same proof by induction 
as in  Lemma~\ref{lemma-no-new-symbols} shows that  $\TT[s]$  has no $\Box$-occurrences.  This completes the proof.
\end{proof}

\subsection{Complete preorders}
\label{section-complete}

Our completeness theorem is for full structures which use weakly complete preorders for every type.
But the proof uses the stronger notion of a complete preorder.  The high-level reason is that 
in building a model of some set $\Gamma$ of assumptions, it is very useful to define functions using
joins of sets of elements.   The kind of definition we have in mind would not work out in general on weakly complete preorders.
The pleasant fact is that every preorder has an order embedding into some complete preorder.  Our eventual proof 
strategy will involve taking the syntactic preorders for the base types
$\PPsyn_{\beta}$ determined by $\Gamma$, choosing completions for them, and then building the full hierarchy over the completions.

\begin{definition} A \emph{complete preorder} is a preorder  $\PP$ with the property that every subset $S\subseteq P$ has a 
least upper bound.  This is an element $\bigvee S\in P$ with the property that 
for all $x\in S$, $x\leq \bigvee S$; and if $y$ is such that 
for all $x\in S$, $x\leq y$, then $\bigvee S \leq y$.

The least upper bound of a set $S$ is not in general unique, but any two least upper bounds $x$ and $y$
have the property that $x\equiv y$.
\end{definition}

In a complete preorder we can fix an operation $\bigvee$ on subsets which gives the least upper bound.
This uses the Axiom of Choice.  Our definition does not build in $\bigvee$ as part of the structure of a complete preorder.
That is, we did not take a complete preorder to be a structure $(P,\leq,\bigvee)$.  But nothing much would change if we had done so.

Notice that if $\PP$ is a complete preorder then 
$\bigvee \emptyset \leq x$ for all $x$, and $x \leq \bigvee P$.
So $\bigvee P$ is  a ``top''.  Similarly  $\bigvee \emptyset$ is a ``bottom.''
In particular, every complete preorder is weakly complete.

\begin{proposition}
If $X$ is a set and 
$\LL = (L,\leq)$ is a  complete preorder, then for all sets $X$, the function set $L^X$ is a 
 complete preorder under the pointwise $\leq$ relation.
 To see this, fix  an $\bigvee$ operation for the subsets of $L$.
For 
  $S\subseteq L^X$, 
 we define 
 $$\begin{array}{lcl}
 (\bigvee S)x & = & \bigvee (\set{f(x): f\in S})
 \end{array}$$
 Then it is easy to see that  $\bigvee$ turns $L^X$ into a complete preorder.
 \label{prop-complete-functions}
\end{proposition}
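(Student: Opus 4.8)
The plan is to verify directly that the proposed operation $\bigvee$ on $L^X$ satisfies the two defining properties of a least upper bound, working pointwise and leveraging completeness of $\LL$ at each $x \in X$. First I would fix the chosen $\bigvee$ operation on subsets of $L$ (using the Axiom of Choice, as in the definition of complete preorder), and then for a given $S \subseteq L^X$ define the function $\bigvee S$ by the stated formula $(\bigvee S)x = \bigvee(\set{f(x) : f \in S})$. The first thing to check is that this is a genuine element of $L^X$, which is immediate since for each $x$ the right-hand side is a well-defined element of $L$.

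Next I would verify that $\bigvee S$ is an upper bound of $S$ in the pointwise order. Fix $g \in S$; I must show $g \leq \bigvee S$, i.e. $g(x) \leq (\bigvee S)x$ for every $x$. But $(\bigvee S)x = \bigvee(\set{f(x) : f \in S})$ and $g(x)$ belongs to the set $\set{f(x) : f \in S}$, so $g(x) \leq (\bigvee S)x$ follows immediately from $\bigvee$ being an upper bound in $\LL$. Then I would check that $\bigvee S$ is a \emph{least} upper bound: suppose $h \in L^X$ satisfies $g \leq h$ for all $g \in S$. I must show $\bigvee S \leq h$ pointwise. Fix $x$; then $f(x) \leq h(x)$ for every $f \in S$, so $h(x)$ is an upper bound in $\LL$ of the set $\set{f(x) : f \in S}$, whence $(\bigvee S)x = \bigvee(\set{f(x) : f \in S}) \leq h(x)$ by the least-upper-bound property of $\bigvee$ in $\LL$. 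Since $x$ was arbitrary, $\bigvee S \leq h$.

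Both verifications reduce cleanly to the corresponding facts in $\LL$ applied fibrewise, and the pointwise order is defined precisely so that these fibrewise comparisons assemble into comparisons of functions. There is no real obstacle here: the proposition is essentially a bookkeeping exercise, and the only subtlety worth flagging is that in a preorder (as opposed to a partial order) least upper bounds are unique only up to $\equiv$, so the chosen $\bigvee$ on $L^X$ depends on the chosen $\bigvee$ on $L$. This causes no problem since the definition of complete preorder only demands existence of some least upper bound, and any two choices yield $\equiv$-equivalent results. I would close by remarking that the same argument shows the analogous statement for greatest lower bounds, and hence that $L^X$ inherits completeness, which is exactly what is needed to build the full hierarchy over completed base preorders in the eventual proof strategy.
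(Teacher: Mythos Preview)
Your proposal is correct and is exactly the verification the paper has in mind: the paper gives no separate proof beyond stating the pointwise formula and asserting ``it is easy to see,'' and you have simply written out those easy details (upper bound and least-upper-bound, each checked fibrewise). There is nothing more to add.
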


\begin{remark}\label{worthkeeping}
 Here are two facts worth keeping in mind.
 \begin{enumerate}
\item For sets $A_1,\ldots, A_k$ and $B_1, \ldots, B_{\ell}$ of subsets of $M$, 
\[
\bigvee (A_1 \cup A_2 \cup\cdots A_k) \leq \bigvee (B_1  \cup B_2\cup \cdots\cup B_{\ell})
\]
provided that every $A_i$ is a subset of some $B_j$.  (This sufficient condition is not necessary, but it is sufficient and useful.)
\item Thus, for sets $A, B\subseteq M$, $\bigvee A \leq \bigvee B$ provided that every $a\in A$ is $\leq$ some $b\in B$. 

\end{enumerate}
\end{remark}


\begin{proposition}
Let $\PP = (P,\leq)$ be a preorder.  Then there is a  complete preorder $\PP^* = (P^*,\leq)$ and an order embedding
 $i: \PP \to \PP^*$.
 \label{prop-CPL}
\end{proposition}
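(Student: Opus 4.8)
The plan is to embed $\PP$ into a powerset lattice and then repair injectivity, which is the only place where the preorder (as opposed to poset) setting causes trouble. First I would take $\mathcal{P}(P)$, the set of all subsets of $P$ ordered by $\subseteq$. This is a complete preorder: one checks directly that $\bigcup$ computes least upper bounds, or one simply appeals to Proposition~\ref{prop-complete-functions} with the two-element chain $2 = \{0,1\}$ (which is a complete preorder) as $\LL$ and $X = P$, identifying $\mathcal{P}(P)$ with $2^P$.

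Next I would consider the principal-downset map $j : P \to \mathcal{P}(P)$ given by $j(p) = \{x \in P : x \leq p\}$. It preserves the order, since $p \leq q$ forces $\{x : x \leq p\} \subseteq \{x : x \leq q\}$ by transitivity; and it reflects the order, since $p \in j(p)$ by reflexivity, so $j(p) \subseteq j(q)$ yields $p \in j(q)$, that is $p \leq q$. The one defect is injectivity: if $p \equiv q$ with $p \neq q$, then $j(p) = j(q)$. This is exactly the phenomenon that distinguishes preorders from posets, and it is the main (indeed essentially the only) obstacle.

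To repair it, I would pad with a second coordinate that carries no order information. Let $I$ denote the underlying set $P$ (or $P$ together with one extra point, to cover the degenerate case $P = \emptyset$) equipped with the \emph{indiscrete} preorder, in which $x \leq y$ holds for all $x,y$; being nonempty, $I$ is a complete preorder, since in it every element is a least upper bound of every subset. Set $\PP^* = \mathcal{P}(P) \times I$ with the componentwise order, and define $i(p) = (j(p), p)$. A product of two complete preorders is again complete, with joins computed coordinatewise, so $\PP^*$ is complete.

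Finally I would verify that $i$ is an order embedding. It is injective because equality of the second coordinates already forces $p = q$. It preserves the order because $j$ does so in the first coordinate and the indiscrete order imposes no constraint in the second. It reflects the order because $i(p) \leq i(q)$ gives $j(p) \subseteq j(q)$ in the first coordinate, and $j$ reflects the order. Thus $i : \PP \to \PP^*$ is the desired embedding into a complete preorder. The only genuine subtlety beyond routine checking is the failure of injectivity of the naive downset embedding; the indiscrete second coordinate is chosen precisely because its permissiveness never obstructs order-preservation while still separating points.
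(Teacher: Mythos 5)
Your proposal is correct and takes essentially the same approach as the paper: the paper's own construction is exactly the product of a complete lattice of subsets of $P$ (it uses the down-closed subsets rather than the full powerset) with the indiscrete preorder on $P$, embedding $p \mapsto (\set{q : q \leq p},\, p)$, so that the second coordinate repairs injectivity just as in your argument. The only cosmetic differences are that the paper adjoins a fresh bottom element $\bot$ (rather than an extra indiscrete point) to handle the degenerate cases, and works with down-closed sets instead of arbitrary ones.
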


\begin{remark} Before we turn to the proof, let us make two comments.
First, we are not claiming any uniqueness of 
$\PP^*$ of $i$ in Proposition~\ref{prop-CPL}.
There are in fact many ways to take a preorder and complete it in our sense.

Second,
for $\PP$ a poset (that is, a preorder additionally satisfying antisymmetry), 
 we may use the usual construction of a complete lattice extending $\PP$ by taking down-closed sets.
However, we need a construction in which distinct elements $p,q\in P$ which are equivalent
($p\leq q\leq p$) are \emph{not} identified by $i$.   So the construction using down-closed sets will not work.
However, it will be close.
We are going to take  the product of the complete lattice of down-closed subsets of $P$ 
by the indiscrete preorder on the set $P$.
\end{remark}

\begin{proof}
 We define the preorder $(P^*,\leq)$ and the map $i$ by 
\[ \begin{array}{lcl}
P^* & = & \set{(A,p) : A\subseteq P \mbox{ is down-closed in $\leq$ and $p\in P$}} \cup\set{\bot}\\
(A, p) \leq (B,q) & \mbox{iff } & A\subseteq B \\
\bot \leq x & & \mbox{for all } x\in P^*\\
i(p) & = & (\set{q\in P: q\leq p}, p) \\
\end{array}
\]
The symbol $\bot$ in $P^*$ is just intended to be some object which is fresh: it should not be a down-closed subset of $P$.
In the definition of $(A, p) \leq (B,q)$, $p$ and $q$ play no role.
To prove that every subset has a least upper bound, 
we need some extra machinery and a piece of notation.
Fix a choice function
\[
\varepsilon: \mathcal{P}(P)\setminus\set{\emptyset}\to P
\]
such that $\varepsilon(W) \in W$ for all nonempty subsets $W\subseteq P$.
For a set $S\subseteq P^*$, define $W = W_{S}$ by
\[
W =  \set{q\in P : \mbox{ for some $(A,p)\in S$, $q\in A$}}\]
Then for each $S\subseteq P^*$ define
\[ \bigvee S = \left\{ 
\begin{array}{ll} (W,\varepsilon(W)) & \mbox{if $W \neq \emptyset$} \\
 \ \bot & \mbox{if $W = \emptyset$}\\
\end{array}
\right.
\]
The reason that we need $\varepsilon$ is that we could take $\bigvee S$ to be $(W,p_0)$ whenever $W$ is non-empty and $p_0\in W$.
All such elements $(W,p_0)$  will be equivalent in $\PP^*$.

It is easy to check that $\PP^*$ is a preorder, and that for all $S\subseteq P^*$, $\bigvee S$ is a least upper bound of $S$.
Here is the verification of
the required properties of $i$.
First, if $i(p) = i(q)$, then by considering the second components of $i(p)$ and $i(q)$, we see that $p = q$.
Continuing, 
if $p \leq q$, then every $r \leq p$ is also $\leq q$, and so
 \[ i(p) = (\set{r : r \leq p}, p)  \leq   (\set{r : r \leq q}, q) = i(q).\]
Conversely, if $i(p) \leq i(q)$, then since $p$ belongs to the first component of
$i(p)$, we see that $p \leq q$.
\end{proof}

\subsection{The Extension Lemma}
\label{section-technical}

We are going to use a technical lemma which allows us to 
take preorders $\MM$ and $\LL$ and to
define a map
 $ \FF \to \MM^{\LL}$
 from a map  $ \FF \to \MM^{\SS}$, where $\SS$ is a ``sub-preorder'' of $\LL$.  
  The work in this section will surely seem unmotivated at first glance.  In fact, it will play 
a key role in our proof of the completeness theorem for the $(\wcrule)$-deductive system.
The reason for separating out this lemma and presenting it here is 
that it will be used infinitely many times 
as part of an inductive construction (see Lemma~\ref{lemma-semanticsalt-wc}).
The reader may wish to omit the proof of  Lemma~\ref{lemma-extend-gpa} on first reading.

\begin{lemma} [Extension Lemma] \label{lemma-extend-gpa} 
Let $\FF$ be a polarized preorder.
Let $\LL$,  $\MM$, and $\SS$  be preorders with $\MM$ complete. 
Let  $j: \SS\to\LL$ be an order embedding.
Let  $p: \FF \to \MM^{\SS}$ preserve the order and polarity, and write $p_f$ for $p(f):\SS\to\MM$.
Assume the following \emph{weak-completeness-like} property: 
\begin{equation}\label{special}
\mbox{whenever $f \lequardar g$ in $\FF$, and $x,y\in S$,  then
$p_f(x) \leq p_g(y)$.}
\end{equation}
Then $p$ has an extension $q: \FF \to \MM^{\LL}$:
$q$ preserves the order and polarity, and 
for all $f\in \FF$, $q_f \o j = p_f$:
\[
 \begin{tikzcd}
 \SS \arrow{r}{p_f}  \ar{d}[swap]{j} & \MM\\
 \LL \arrow[dashrightarrow]{ur}[swap]{q_f} \\
  \end{tikzcd}
  \]
\end{lemma}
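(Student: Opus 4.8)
The plan is to build each $q_f$ by a pointwise least upper bound in the complete preorder $\MM$, reading off its value at $\ell\in\LL$ from the data $p_f(s)$ at those points $s$ of $\SS$ whose image sits on the correct side of $\ell$. For a monotone symbol I would set $q_f(\ell)=C_f\vee\bigvee\{p_f(s):j(s)\leq\ell\}$, and dually for an antitone symbol $q_f(\ell)=C_f\vee\bigvee\{p_f(s):\ell\leq j(s)\}$, where $C_f=\bigvee\{p_h(s):s\in\SS,\ h\lequardar h'\text{ for some }h'\leq f\}$ is a constant \emph{floor} term. The floor is the device through which hypothesis~(\ref{special}) enters: using~(\ref{special}) at the single point $s_0$ one checks that $p_h(s)\leq p_{h'}(s_0)\leq p_f(s_0)$ whenever $h\lequardar h'\leq f$, so that $C_f\leq p_f(s_0)$ for every $s_0$; moreover the defining set of $C_f$ grows with $f$, so $f\leq g$ gives $C_f\leq C_g$ at once.

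With $q_f$ so defined, the extension identity $q_f\circ j=p_f$ and the polarity clauses are routine. Evaluating at $\ell=j(s_0)$ and using that $j$ is an order embedding (so $j(s)\leq j(s_0)$ iff $s\leq s_0$), the monotone join collapses to $p_f(s_0)$, since $p_f$ is monotone and the value is attained at $s_0$; the floor adds nothing because $C_f\leq p_f(s_0)$. For polarity, as $\ell$ increases the index set $\{s:j(s)\leq\ell\}$ can only grow, so $q_f$ is monotone by the join-comparison principle recorded in Remark~\ref{worthkeeping}, the constant floor being harmless; the antitone case is the mirror image.

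The real content is order preservation: $f\leq g$ in $\FF$ should yield $q_f\leq q_g$ pointwise. When $f$ and $g$ carry the same polarity their two joins range over the \emph{same} index set, so the inequality follows term by term from $p_f(s)\leq p_g(s)$ together with $C_f\leq C_g$. The delicate situation is when the two joins point in opposite directions, one a ``down-join'' and the other an ``up-join'', and here the key structural observation is that these are \emph{exactly} the cases $f^{\uar}\leq g^{\dar}$ and $f^{\dar}\leq g^{\uar}$, that is, exactly $f\lequardar g$. In that case the two index sets need not overlap: at some $\ell$ one join may be empty while the other is large, so a term-by-term comparison is hopeless. This is where the floor pays off: since $f\lequardar g$ (take $h=f$, $h'=g$ in the definition of $C_g$), the whole of $f$'s join is bounded by $C_g$, and $C_f\leq C_g$ as well, whence $q_f(\ell)\leq C_g\leq q_g(\ell)$ uniformly in $\ell$.

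I expect this opposite-polarity step, forcing a monotone extension below an antitone one when their defining index sets can fail to meet, to be the main obstacle, and~(\ref{special}) is tailored precisely to clear it, entering only through the floor. What remains is bookkeeping. A symbol tagged with both polarities forces $p_f$ to be constant up to $\equiv$ on each connected component of $\SS$, and one extends it by joining over the whole component, so that $q_f$ comes out both monotone and antitone. A symbol tagged with neither never initiates a $\lequardar$ relation, so it is never a source of opposite-polarity pressure; for it one takes $q_f(\ell)$ to be the join of the polarity-correct contributions of all \emph{tagged} symbols $h\leq f$, of the floor $C_f$, and of the exact values $p_f(s)$ at points $\ell\equiv j(s)$, and checks (using that $p_f$ respects $\equiv$, as it does in the applications) that this extends $p_f$ and is sandwiched between $q_h$ and $q_g$ for tagged neighbours. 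None of these should raise a new conceptual difficulty once the floor construction is in hand.
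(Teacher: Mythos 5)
Your construction is, at its core, the same as the paper's: pointwise joins in the complete preorder $\MM$ of the values $p_h(s)$ lying on the correct side of the argument, together with constant ``floor'' joins recording the $\lequardar$ data, which is exactly where hypothesis~(\ref{special}) enters. (Your $C_f$ is the union of the paper's two sets $C(f)$ and $D(f)$, and your key move of absorbing the wrong-direction contributions into the floor when $f\lequardar g$ is the paper's mechanism as well.) However, two steps fail as written, and both stem from the same decision: you use the join formula \emph{everywhere}, including at points of $j[S]$. First, in a preorder a least upper bound is determined only up to $\equiv$, so your collapse argument at $\ell=j(s_0)$ yields only $q_f(j(s_0))\equiv p_f(s_0)$, not the equality $q_f\o j=p_f$ that the lemma asserts. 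This is not pedantry: in the completion of Proposition~\ref{prop-CPL}, $\bigvee\set{x}$ is in general a different element than $x$ (it is $(A,\varepsilon(A))$ rather than $(A,p)$), and exact equality is what the application consumes, namely the applicative-family identity (\ref{interlace}) and the injectivity clause of Lemma~\ref{lemma-semanticsalt-wc}. Second, your recipe for an \emph{untagged} symbol $f$ joins in the values $p_f(s)$ at all points $\ell\equiv j(s)$; since an untagged $p_f$ need not respect $\equiv$ (nothing in the hypotheses says it does --- you flag this yourself and appeal to ``the applications''), this join can lie strictly above $p_f(s_0)$, so the formula then fails to extend $p_f$ even up to $\equiv$. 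The lemma as stated carries no such hypothesis, and it is true without one.

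Both defects are cured by one extra device, which is how the paper proceeds: define $q_f$ by cases, setting $q_f(x)=p_f(s)$ \emph{outright} when $x=j(s)$ (injectivity of $j$, part of being an order embedding, makes this well defined), and use the join formula only for $x\notin j[S]$. The price is a consistency claim (Claim~\ref{overload}): at an image point $x=j(s)$, every would-be join contribution is $\leq p_f(s)$; this is what allows the monotonicity, antitonicity, and order-preservation arguments to cross between the two cases of the definition, and it is precisely where (\ref{special}) is invoked. Once that is in place, your separate bookkeeping for doubly-tagged and untagged symbols becomes unnecessary: the paper's index sets are uniform in $f$ (all tagged $h\leq f$, plus the floors), so every $f$ is handled at once; and for a symbol tagged both ways, your ``constant on connected components'' observation understates the truth --- applying (\ref{special}) to $f\lequardar f$ forces $p_f$ to be constant up to $\equiv$ on all of $\SS$, so no component analysis is needed.
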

\begin{proof}
For each $f\in\FF$ and $x\in L$,  define the following four subsets of $M$:
$$\begin{array}{lcl}
A(f,x) & = &   \set{p_{h^\uar}(s) : h^\uar\leq f, j(s)\leq x, \mbox{ and } s\in S}   \\
B(f,x) & = &    \set{p_{h^\dar}(s):  h^\dar\leq f, x\leq j(s),\mbox{ and } s\in S}  \\
C(f) & = &\set{p_{h^\dar}(s) : (\exists k^\uar \leq f)( h^\dar \leq k^\uar), \mbox{ and } s\in S}   \\
D(f) & = &  \set{p_{h^\uar}(s) :  (\exists k^\dar \leq f) (h^\uar \leq k^\dar), \mbox{ and } s\in S}     \\
\end{array}
$$
For each $f\in\FF$ and $x\in L$ we then define $\Qq f(x)\in M$
by
\begin{equation}
\Qq f(x)  =
  \left\{
\begin{array}{ll}
p_f(s) &\mbox{if for some (unique) $s\in S$,  $x= j(s)$}\\
\bigvee \biggl(A(f,x)    \cup   B(f,x)    \cup   C(f) \cup   D(f)\biggr) &\mbox{if  $x\notin j[S]$}\\
\end{array}
\right.
\label{val}
\end{equation}
Here and also below, we use the fact that if $x = j(s)$, then $s$ is unique.  
This is because $j$ is an order-embedding, hence it is one-to-one by definition.
The join in (\ref{val}) exists because $\MM$ is a complete preorder.

\begin{claim} 
If $x = j(s)$, then every element of $A(f,x)    \cup   B(f,x)    \cup   C(f) \cup   D(f)$ is $\leq p_f(s)$.
\label{overload}
\end{claim}

\begin{proof}

Take an element of $A(f,j(s))$, say  $p_{h^\uar}(t)$ where $j(t)\leq j(s)$.
Since $j$ reflects order, $t\leq s$.
Then   $p_{h^\uar}(t)\leq p_{h^\uar}(s) \leq p_f(s)$.
At the end we used the assumption that $p$ preserves order and polarity: since $h^\uar \leq f$, $p_{h^{\uar}} \leq p_f$ in $\MM^{\SS}$ 
and  $p_{h^{\uar}}$ is monotone.

This time, take an element of $B(f,j(s))$, say  $p_{h^\dar}(t)$ where $j(s)\leq j(t)$.
Since $j$ reflects order, $s\leq t$.
Then   $p_{h^\dar}(t)\leq p_{h^\dar}(s) \leq p_f(s)$.


We turn to $C(f)$. Let  $h^\dar \leq k^\uar \leq f$ in $\FF$ and let $t\in S$.
We have $p_{h^\dar}(t)\leq p_{k^\uar}(s)\leq p_{f}(s)$.


Finally, for $D(f)$, let  $h^\uar \leq k^\dar \leq f$ and $t\in S$.
Then $p_{h^\uar}(t)\leq p_{k^\dar}(s)\leq p_{f}(s)$.



\noindent 
Please note that the points about $C(f)$ and $D(f)$ used the weak-completeness-like property (\ref{special}).
\end{proof}

\begin{claim}
Suppose that $f\leq g$ in $\FF$.  Then for all $x\in \LL$,
$A(f,x) \subseteq A(g,x)$, 
$B(f,x) \subseteq B(g,x)$, 
$C(f) \subseteq C(g)$, 
and $D(f) \subseteq D(g)$.
\label{claimfg}
\end{claim}

\begin{proof}
All parts of this claim are consequences of the transitivity 
of $\leq$ in $\FF$.
\end{proof}

\begin{claim} Suppose that $x\leq y$ in $L$.
Then $A(f,x) \subseteq A(f,y)$, and $B(f,y)\subseteq B(f,x)$ [sic].  
\label{claimfxy}
\end{claim}

\begin{proof}
These are consequences of the transitivity of $\leq$ in $\LL$.
\end{proof}

In the next few claims, we show that 
 $\Qq{f}(x) \leq \Qq{g}(y)$ by showing  that every set involved in the definition of
  $\Qq{f}(x)$ in (\ref{val}) is a subset of some set involved in the definition of $\Qq{g}(y)$.
  This comes from Remark~\ref{worthkeeping}.

\begin{claim}
If $f\leq g$, then $\Qq{f}(x) \leq \Qq{g}(x)$ for all $x\in L$.  Thus, $q: \FF \to \MM^{\LL}$ is monotone.
\label{monclaim}
\end{claim}

\begin{proof} 
If $x\in j[S]$,  say $x = j(s)$, then $q_f(x) = p_f(s) \leq p_g(s) = q_g(x)$.  If $x\notin S$, we see from Claim~\ref{claimfg} that
each of the sets involved in $\Qq{f}(x)$ is a  
subset of the corresponding set involved in  $\Qq{g}(x)$.
So  $\Qq{f}(x) \leq \Qq{g}(x)$.
 \end{proof}

\begin{claim} If $f^\uar$, then $q_{f^\uar}$ is monotone.
\label{taguar}
\end{claim}

\begin{proof}
Let $x\leq y$. We show that $q_{f^\uar}(x)\leq q_{f^\uar}(y)$.
If $x\in j[S]$,  say $x = j(s)$, then $q_{f^\uar}(x) = p_{f^\uar}(s) \in A(f^\uar,y)$.  So $p_{f^\uar}(s) \leq \bigvee A(f^\uar,y) \leq q_{f^\uar}(y)$.

If $x\notin j[S]$, we show that $B(f^\uar,x)\subseteq  C(f^\uar)$.
For then, by
 Claims~\ref{claimfg} and~\ref{claimfxy}, we would have the desired inequality $q_{f^\uar}(x) \leq q_{f^\uar}(y)$.
 In more detail, we would have 
$A(f^\uar,x) \subseteq A(f^\uar,y)$, $B(f^\uar,x)\subseteq  C(f^\uar)$, and obviously 
$C(f^\uar) \subseteq C(f^\uar)$ and $D(f^\uar) \subseteq D(f^\uar)$.
Let $p_{h^\dar}(s)\in B(f^\uar,x)$, where $h^\dar \leq f^\uar$ in $\FF$ and $s\in S$. (We also have $x \leq j(s)$, but this is not used.)
Then
$p_{h^\dar}(s)\in C(f^\uar)$: take $k^\uar = f^\uar$ in the definition of $C(f^\uar)$.
\end{proof}

\begin{claim}  If $f^\dar$, then $q_{f^\dar}$ is antitone.  \label{tagdar}
\end{claim}

\begin{proof}
Let $x\leq y$.  We show that $q_{f^\dar}(y)\leq q_{f^\dar}(x)$.
If $y\in j[S]$, say $y= j(s)$, then $q_{f^\dar}(y) = p_{f^\dar}(s)\in B(f^\dar,x)$.  So $p_{f^\dar}(s) \leq \bigvee B(f^\dar,x) \leq q_{f^\dar}(x)$.


If $y\notin j[S]$, we show that $A(f^\dar,y)\subseteq  D(f^\dar)$.
For then, by
 Claims~\ref{claimfg} and~\ref{claimfxy}, we would have the desired inequality $q_{f^\dar}(y) \leq q_{f^\dar}(x)$.
Let $p_{h^\uar(s)}\in A(f^\dar,y)$, where $h^\uar \leq f^\dar$ in $\FF$, $s\in S$, and $j(s)\leq y$.
Then
$p_{h^\uar}(s)\in D(f^\dar)$: take $k^\dar = f^\dar$ in the definition of $D(f^\dar)$.
\end{proof}

We
 complete the proof of Lemma~\ref{lemma-extend-gpa}.
 We began with $p: \FF\to \MM^{\SS}$
 and defined $q: \FF\to \MM^{\LL}$.
  The verifications that $q$ is monotone and preserves polarity come from 
  Claims~\ref{monclaim}--\ref{tagdar}.
For all $f\in F$,  (\ref{val}) tells us that  $q_f \o j = p_f$.
This completes the proof.
\end{proof}

\subsection{Completeness theorem}

 \begin{theorem} [Completeness]
  If $\Gamma\modelswc  s^*\leq t^*$, then
 $\Gamma\proveswc  s^*\leq t^*$.
 \label{theorem-main-wc}
 \end{theorem}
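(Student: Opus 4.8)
The plan is to prove the contrapositive: assuming $\Gamma\not\proveswc s^*\leq t^*$, I will build a weakly complete full model of $\Gamma$ in which $s^*\leq t^*$ fails. First I would move to the extended signature $\GG$ with the fresh constants $\Box_\sigma$ and to $\GammaBox$; since $s^*$ and $t^*$ contain no new constants, Lemma~\ref{lemma-no-new-symbols-wc} gives $\GammaBox\not\proveswc s^*\leq t^*$. Working over $\GammaBox$, I take the canonical polarized preorders $\PPsyn_\sigma$ of Definition~\ref{def-syntactic-preorder}, now read off from $\GammaBox\proveswc$ (the construction Lemma~\ref{lemma-model-construction} applies verbatim under this reading). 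The task then reduces to producing an applicative family $\Nodel=(\semanticsalt{\ }_\sigma)_\sigma$ into a complete full hierarchy in which every $\semanticsalt{\ }_\sigma$ both preserves and reflects the order: by part~\ref{part-preserves} of Lemma~\ref{lemma-model-construction} the induced full structure $\Model$ satisfies $\Model\models\GammaBox$, and by part~\ref{part-reflect} any $\Model\models a\leq b$ forces $\GammaBox\proveswc a\leq b$.

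For the target hierarchy I would complete the base preorders: by Proposition~\ref{prop-CPL} choose a complete $\QQ_\beta=(\PPsyn_\beta)^*$ with an order embedding $i_\beta:\PPsyn_\beta\to\QQ_\beta$, and let $(\QQ_\sigma)_\sigma$ be the full hierarchy over $(\QQ_\beta)_\beta$. By Proposition~\ref{prop-complete-functions} every $\QQ_\sigma$ is complete, hence weakly complete, so $\Model$ will be a weakly complete full structure. I then define $\semanticsalt{\ }_\sigma:\PPsyn_\sigma\to\QQ_\sigma$ by recursion on $\sigma$, maintaining the invariant that each $\semanticsalt{\ }_\sigma$ is an \emph{order embedding} which preserves polarities and satisfies the interlacing identity (\ref{interlace}). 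The base case is $\semanticsalt{\ }_\beta=i_\beta$. For $\sigma\to\tau$ I invoke the Extension Lemma (Lemma~\ref{lemma-extend-gpa}) with $\FF=\PPsyn_{\sigma\to\tau}$, $\MM=\QQ_\tau$ (complete), $\SS=\PPsyn_\sigma$, $\LL=\QQ_\sigma$, $j=\semanticsalt{\ }_\sigma$ (an order embedding by the inductive hypothesis), and the prescribed map $p_t(u)=\semanticsalt{tu}_\tau$, setting $\semanticsalt{t}_{\sigma\to\tau}=q_t$; the conclusion $q_t\o j=p_t$ is exactly $\semanticsalt{t}_{\sigma\to\tau}(\semanticsalt{u}_\sigma)=\semanticsalt{tu}_\tau$, i.e.\ (\ref{interlace}).

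Checking the hypotheses of the Extension Lemma is routine: $p$ preserves the order by (\pointrule), preserves polarity by (\monorule)/(\antirule) applied to the tagged constants, and the weak-completeness-like condition (\ref{special}) is precisely the content of (\wcone)/(\wctwo), since $f\lequardar g$ yields $\GammaBox\proveswc fx\leq gy$ for all $x,y$, whence $p_f(x)\leq p_g(y)$ by order-preservation of $\semanticsalt{\ }_\tau$. The Extension Lemma returns only a polarity- and order-preserving map, so I must re-establish the order-embedding half of the invariant by hand, and this is where the fresh constants earn their keep. Given $\semanticsalt{t}_{\sigma\to\tau}\leq\semanticsalt{u}_{\sigma\to\tau}$ pointwise, I evaluate at $\semanticsalt{\Box_\sigma}_\sigma$; interlacing turns this into $\semanticsalt{t\Box_\sigma}_\tau\leq\semanticsalt{u\Box_\sigma}_\tau$, the inductive order-reflection at $\tau$ gives $\GammaBox\proveswc t\Box_\sigma\leq u\Box_\sigma$, and Lemma~\ref{lemma-richness-wc} then yields $\GammaBox\proveswc t\leq u$. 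Injectivity follows the same pattern: if $t\neq u$ then $t\Box_\sigma$ and $u\Box_\sigma$ are distinct terms, so by inductive injectivity at $\tau$ the two functions already differ at $\semanticsalt{\Box_\sigma}_\sigma$. This closes the recursion.

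To finish, I feed $\Nodel$ into Lemma~\ref{lemma-model-construction}: order-preservation of every $\semanticsalt{\ }_\sigma$ gives $\Model\models\GammaBox$, while order-reflection gives that $\Model\models a\leq b$ forces $\GammaBox\proveswc a\leq b$. Since $\GammaBox\not\proveswc s^*\leq t^*$, we obtain $\Model\not\models s^*\leq t^*$; passing to the reduct $\Model^0$ over $\FF$ via Lemma~\ref{lemma-reduct} produces a weakly complete full model with $\Model^0\models\Gamma$ and $\Model^0\not\models s^*\leq t^*$, so $\Gamma\not\modelswc s^*\leq t^*$, as required. I expect the main obstacle to be the function-type step of the recursion: one must simultaneously keep each $\semanticsalt{\ }_\sigma$ an order embedding (so that the next application of the Extension Lemma is even legal) and keep it polarity-preserving, and the order-reflection half of this rests on the delicate interplay of the interlacing identity, the richness Lemma~\ref{lemma-richness-wc} for the fresh constants, and the soundness of exactly the $(\wcrule)$ rules through condition (\ref{special}). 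Managing the fresh constants consistently across all types at once is the subtle part.
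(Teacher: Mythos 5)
Your proposal is correct and follows essentially the same route as the paper's proof: the identical canonical construction (syntactic preorders over the signature extended with the fresh constants $\Box_\sigma$, completions of the base preorders via Proposition~\ref{prop-CPL}, and the type recursion through the Extension Lemma~\ref{lemma-extend-gpa} building order embeddings that satisfy the interlacing identity), differing only in the contrapositive packaging, so that Lemma~\ref{lemma-no-new-symbols-wc} is applied at the start and the reduct Lemma~\ref{lemma-reduct} at the end rather than the reverse. One minor point in your favor: for order reflection you correctly invoke Lemma~\ref{lemma-richness-wc} (the $t\Box\leq u\Box$ lemma), where the paper's text at the corresponding step miscites Lemma~\ref{lemma-onebox-wc}.
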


\begin{proof}
Fix a set $\Gamma$ of inequalities over some signature $\FF$.  Let $\GG$ come from $\FF$ by  adding a fresh constant $\Box_{\sigma}$
of every type $\sigma$.
Let $\GammaBox$ be $\Gamma$, taken as a set of inequalities over $\GG$.  
Let $\PPsyn_{\sigma}$ be as in  Definition~\ref{def-syntactic-preorder}, using $\GG$ and $\GammaBox$.
For each base type $\beta$,  use Proposition~\ref{prop-CPL} to choose a complete preorder
$\QQ_{\beta}$ and an order embedding
$i_\beta: \PPsyn_{\beta} \to \QQ_{\beta}$.
Let the preorders $\PPsem_{\sigma}$ be as defined below:
\begin{equation}
\label{defQ}
\begin{array}{lcl}
\PPsem_{\beta} & = & \QQ_{\beta} \mbox{ from just above}\\
\PPsem_{\sigma\to\tau} & = & \mbox{the full function set $(\PPsem_{\tau})^{\PPsem_{\sigma}}$, ordered pointwise}\\
\end{array}
\end{equation}
On the function types $\sigma$, we construe $\PP_{\sigma}$ as a polarized preorder in the obvious way.
By Proposition~\ref{prop-complete-functions},
each preorder $\PPsem_{\sigma}$  is  complete.
The family $(\PPsem_{\sigma})_{\sigma}$ is a full hierarchy.

In the lemma below, recall the notion of an applicative family of interpretations.  We 
construct such a family using our signature $\FF$ and the full hierarchy  $(\PPsem_{\sigma})_{\sigma}$.

\begin{lemma}
\label{lemma-semanticsalt-wc}
There is an applicative family of interpretations $\Nodel = (\semanticsalt{\ }_{\sigma})_{\sigma}$, where
\[ \semanticsalt{\ }_{\sigma}: \PPsyn_{\sigma}\to \PPsem_{\sigma},\]
such that for base types $\beta$,  $\semanticsalt{\ }_{\beta} =i_{\beta}$,
and for all $\sigma$, 
$\semanticsalt{\ }_{\sigma}$ is an order embedding.
\end{lemma}

\begin{proof}
We define $\PP_{\sigma}$  and $\semanticsalt{\ }_{\sigma}$ by recursion on the type $\sigma$.
We verify that $\semanticsalt{\ }_{\sigma}$ is an order embedding and also for function types that the
relevant applicative family property (\ref{eq-applicative-family}) holds.

The recursion begins with base types. The order embedding fact is stated in Proposition~\ref{prop-CPL},
and there is nothing to check concerning the applicative family property.
 
In the induction step,
we assume that  $\semanticsalt{\ }_{\sigma}$ and  $\semanticsalt{\ }_{\tau}$   
are order embeddings.
We shall  define $\semanticsalt{\ }_{\sigma\to\tau}$ 
 using 
 Lemma~\ref{lemma-extend-gpa}.
The role of $\FF$ in the lemma will be played by 
the polarized preorder  $\PPsyn_{\sigma\to\tau}$;
 please note that we are not using the preorder given by the original signature
 but by its closure under the logic.
 We further take
 $\LL= \PPsem_{\sigma}$,
$\MM = \PPsem_{\tau}$,
$\SS = \PPsyn_{\sigma}$,  $j: \SS\to \LL$ to be $\semanticsalt{\ }_{\sigma}$,
and $p: \FF\to  \MM^{\SS}$ to be given by $p_t(u) = \semanticsalt{t u }_{\tau}$.
In pictures, here is what is going on.  For each term $t:\sigma\to \tau$, we obtain $\semanticsalt{t}_{\sigma\to\tau}$
as shown below:
\[
 \begin{tikzcd}[column sep=.75in]
\PPsyn_{\sigma} \arrow{r}{u\mapsto \semanticsalt{tu}_{\tau}}  \ar{d}[swap]{\semanticsalt{\ }_{\sigma}} &  \PPsem_{\tau}\\
 \PPsem_{\sigma} \arrow[dashrightarrow]{ur}[swap]{\semanticsalt{t}_{\sigma\to\tau}} \\
  \end{tikzcd}
  \]
The rules of the logic translate to properties which we need $p$ to have in order to apply  Lemma~\ref{lemma-extend-gpa}:
(\pointrule) implies that $p$ preserves the order, while (\monorule) and (\antirule) ensure that $p$ preserves polarities.
The induction hypothesis on $\sigma$ includes the statement that $j$ is an order embedding.

We also must check the weak-completeness-like property (\ref{special}) which is a hypothesis of Lemma~\ref{lemma-extend-gpa}.
Suppose that we have 
$f$ and $g$ in
$\PPsyn_{\sigma\to\tau}$
  with $f^\uar\leq  g^\dar$.  The only tagged symbols in that preorder are those in $\GG_{\sigma\to\tau}$, so $f$ and $g$
  are symbols in $\GG_{\sigma\to\tau}$; indeed they come from the original signature.
Let $t,u: \sigma$.
Using the rule (\wcone), $\GammaBox\proveswc f^\uar t:  \tau \leq  g^\dar u:  \tau$.
That is,  $f^\uar t\leq  g^\dar u$ in $\SS = \PPsyn_{\tau}$.
Since $\semanticsalt{\ }_{\tau}$ preserves the order,
$\semanticsalt{ft}_{\tau} \leq \semanticsalt{gu}_{\tau}$ in $\PPsem_{\tau}$.
This means that $p_f(t) \leq p_g(u)$, as required.
 
 We also verify  (\ref{special})  when $f^\dar\leq  g^\uar$.  The work is the same, using
(\wctwo)
 instead of (\wcone).

 Lemma~\ref{lemma-extend-gpa} tells us that $p$ extends to $q:\FF \to \MM^{\LL}$.
 We define $\semanticsalt{\ }_{\sigma\to\tau}:\PPsyn_{\sigma\to\tau}\to \PPsem_{\sigma\to\tau}$ by
\[ \semanticsalt{t}_{\sigma\to\tau} = q_t.\]
For each term $t:\sigma\to\tau$, $q_t$ 
is an element of $\MM^{\LL}$ and hence a function of the right type.
The fact that $q$ preserves polarities and the order implies the same properties of $\semanticsalt{\ }_{\sigma\to\tau}$. 
We have several further verifications.
 
 \paragraph{The applicative family property (\ref{eq-applicative-family}) }
 Let $t \in \PPsyn_{\sigma\to\tau}$ and  let $u\in \PPsyn_\sigma$.
Using the fact from  Lemma~\ref{lemma-extend-gpa}  that $p_t = q_t\o j$, 
 \[
 \semanticsalt{t}_{\sigma\to\tau}(\semanticsalt{u}_{\sigma}) =  q_{t}(\semanticsalt{u}_{\sigma}) =
 q_{t}(j(u)) =
  p_{t} (u)  = \semanticsalt{tu}_{\tau} .
 \]


\paragraph{$\semanticsalt{\ }_{\sigma\to\tau}$  reflects the order}
Suppose that in $\PPsem_{\sigma\to\tau}$, $\semanticsalt{t}_{\sigma\to\tau} \leq \semanticsalt{u}_{\sigma\to\tau}$.
Let $x = \semanticsalt{\Box_{\sigma}}_{\sigma}$. 
Then using the applicative family property which we just showed,
\begin{equation}\label{wejustshowed}
\semanticsalt{t\Box_{\sigma}}_{\tau} = 
\semanticsalt{t}_{\sigma\to\tau}(x) \leq \semanticsalt{u}_{\sigma\to\tau}(x) = \semanticsalt{u\Box_{\sigma}}_{\tau}.
\end{equation}
Since $\semanticsalt{\ }_{\tau}$ reflects order,
in $\PPsyn_{\tau}$, $t\Box_{\sigma} \leq u\Box_{\sigma}$.
Thus, $\GammaBox \proves t\Box_{\sigma} \leq u\Box_{\sigma}$.
By Lemma~\ref{lemma-onebox-wc}, $\GammaBox\proves t\leq u$.
This tells us that $t\leq u$ in $\PPsyn_{\sigma\rightarrow\tau}$.

\paragraph{$\semanticsalt{\ }_{\sigma\to\tau}$  is one-to-one}
Suppose that in $\PPsem_{\sigma\to\tau}$, $\semanticsalt{t}_{\sigma\to\tau} = \semanticsalt{u}_{\sigma\to\tau}$.
As in (\ref{wejustshowed}) above, we have $\semanticsalt{t\Box_{\sigma}}_{\tau} = \semanticsalt{u\Box_{\sigma}}_{\tau}$.
Since  $\semanticsalt{\ }_{\tau}$  is one-to-one, $t\Box_{\sigma} =u\Box_{\sigma}$.  Thus $t=u$.

This concludes the proof of Lemma~\ref{lemma-semanticsalt-wc}.
\end{proof}

Let us complete the proof of 
Theorem~\ref{theorem-main-wc}.
Suppose that $\Gamma\modelswc t^*\leq u^*$.   
By our remarks at the beginning of Section~\ref{section-lemmas-new-constants-weakly-complete},
this assertion 
 holds when we add new symbols to the underlying signature.
Let  $\Nodel = (\semanticsalt{\ }_{\sigma})_{\sigma}$ be the applicative family 
provided by Lemma~\ref{lemma-semanticsalt-wc}.  
Let $\Model$ be the full structure associated to $\Nodel$ using Lemma~\ref{lemma-model-construction}.
Each $\Model_{\sigma}$ is (weakly) complete, since $\Model_{\sigma}$ is the same preorder as $\Nodel_{\sigma}$.
Thus, $\Model\models \Gamma$.
Since the maps  $\semanticsalt{\ }_{\sigma}$ are monotone, $\Model\models \GammaBox$.  
By the assumption in our theorem, $\Model\models t^*\leq u^*$. 
Since all of the maps $\semanticsalt{\ }_{\sigma}$ reflect the order, 
Lemma~\ref{lemma-model-construction} tells us 
that
$\GammaBox\proveswc t^*\leq u^*$.
By 
Lemma~\ref{lemma-no-new-symbols-wc}, $\Gamma\proveswc  t^*\leq u^*$.
\end{proof}

\section{Variations and Extensions}

Our  next section contains results that build on what we saw in the previous section.

\subsection{The logic of full poset structures}
A structure is a \emph{poset structure} if each preorder $\PP_{\sigma}$ is a partially ordered set:
if $p\leq q$ and $q\leq p$, then $p = q$.  For such structures, the following rule is sound:
\[
\infer[\posrule]{fs \leq ft}{s \leq t & t  \leq s }
\]
In this rule, $f\in \FF_{\sigma\to\tau}$ is arbitrary; it need not be tagged $\uar$ or $\dar$.
(When $f$ is tagged either way, (\posrule) is obviously derivable.)
In fact, we have a complete logic of weakly complete poset structures: take the rules in Figures~\ref{fig-rules1} and~\ref{fig-rules2}
and add the (\posrule) rule.   Here are the reasons: 
Every preorder $\QQ$ has an associated poset $\QQ^*$ obtained by taking the quotient $\PP/\!\!\equiv$, where $p\equiv q$ iff $p\leq q\leq p$.
The syntactic preorders $\PPsyn_{\sigma}$ determined by a set $\Gamma$ in the logic with (\posrule) may be taken to be a poset; we take the associated 
poset $(\PPsyn_{\sigma})^*$.
 To interpret function symbols on $\PPsyn_{\sigma}$, we need a short well-definedness argument using (\posrule).
We also tag an equivalence class $[f]$ with $\uar$ if some $g\equiv f$ is tagged $\uar$.

Continuing, the constructions of weakly complete preorders which we saw
in Propositions~\ref{prop-complete-functions} and~\ref{prop-CPL} go through when we replace ``preorder'' by ``poset'' in the 
hypothesis and the conclusion.  (In fact, Proposition~\ref{prop-CPL} is a little easier in the poset setting, and it is rather well-known.)

In the proof of Theorem~\ref{theorem-main-wc}, we need to check that some functions are well-defined
on $(\PPsyn_{\sigma})^*$.     Each $p_f$ is well-defined in Lemma~\ref{lemma-semanticsalt-wc}; this comes from (\posrule).  And
in the Extension Lemma~\ref{lemma-extend-gpa}, we observe that if $f\equiv g$, then $q_f \equiv q_g$; this implies that
each $q_f$ is well-defined as a function on $(\PPsyn_{\sigma\to\tau})^*$.

\paragraph{Identities}
Another way to deal with poset structures would be to expand the basic assertions in the language to include
identity statements $t = u$ with the obvious semantics.  (This is also possible even with preordered structures, so we could 
have made this move early on.)  Doing this, we would have the evident rules
\[\infer[\symmrule]{q=p}{p = q}
\qquad\qquad
\infer[\weakrule]{p \leq q}{p = q}
\qquad\qquad
\infer[\posruleprime]{p = q}{p \leq q  & q \leq p }
\]
Here is how the first two rules above are used. 
We need these rules in order to build the syntactic preorders in the first place.  Their elements are equivalence classes
$[t]$ of terms $t$ under the $=$ 
equivalence relation.
Using ($\weakrule$) and ($\posruleprime$), we can derive the reflexivity and transitivity rules for $=$.
We also need them 
 to define the order structure on these classes in such a way that $[t] \leq [u]$ iff $t \leq u$.
This is needed at the very end of the proof of  Theorem~\ref{theorem-main-wc}: 
our previous proof would go from $\semanticsalt{t} \leq \semanticsalt{u}$
to $[t]\leq [u]$.  We need this extra step to know that $\Gamma\proves t \leq u$
(rather than knowing that $\Gamma\proves t' \leq u'$ for some $t'\equiv t$ and $u'\equiv u$.)
The rule ($\posruleprime$) implies (\posrule).   This rule ($\posruleprime$) would
also be used at the very end of the proof of  Theorem~\ref{theorem-main-wc}.
We show that if $\Gamma\modelswc t^* = u^*$, then $\Gamma\proveswc t^* = u^*$.
Our hypothesis easily implies that  $\Gamma\modelswc t^*\leq u^*$ and that  $\Gamma\modelswc u^*\leq t^*$.
By the argument which have seen just above, 
$\Gamma\proveswc t^*\leq u^*$ and   $\Gamma\proveswc u^*\leq t^*$.
Hence using ($\posruleprime$), $\Gamma\proveswc t^* = u^*$.

\subsection{Arrow assertions as conclusions}
\label{section-arrow-assertions}

Up until now, the main assertions in our language have been inequalities between terms of the same type.
The polarity assertions $f^\uar$ and $f^\dar$ were not ``first-class'' (despite what we said at the beginning of Section~\ref{section-proof-system}):
our proof system contained no rules that allowed us to conclude a polarity assertion.
To do this, we 
need to specify the semantics in full structures and to see what must be added to the proof system.
For the semantics, 
suppose we are given a full structure $\Model$ and a symbol $f:\sigma$ of a function type.   Then we 
say
\[
\Model\models f^\uar \mbox{ iff } \semantics{f} \mbox{ is a monotone function}.
\]
The proof theory adds two rules:
\[
\infer[\polrule^{\uar}]{g^\uar}{f^\uar & f \leq g  & g \leq f}
\qquad
\infer[\polrule^{\dar}]{g^\dar}{f^\dar & f \leq g  & g \leq f}
\]
Here, $f$ and $g$ are symbols from the underlying signature $\FF$, and they should be of function type.
The soundness of this rule appears in Theorem~\ref{theorem-main-arrows} below.
When we write $\proveswc$ in the rest of this section, 
we mean provability with the rules in Figures~\ref{fig-rules1} and~\ref{fig-rules2}, together
with the rules   (\polrule${}^\uar$) and   (\polrule${}^\dar$).
 
The completeness proof adds to what we have seen in several ways.
To begin, we need an analog of the construction where we add new symbols $\Box_{\sigma}$.
This time, we add \emph{two} fresh constants.   To ease our notation, we shall elide the type symbols
and simply write these symbols as $\Box_1$ and $\Box_2$.   Given a set $\Gamma$, we write $\Delta$ for the
set of assertions that adds $\Box_1 \leq \Box_2$ for all types.  

We need results on adding these new constants in this way, building on what we saw in 
 Lemmas~\ref{lemma-onebox}, \ref{lemma-richness}, 
 \ref{lemma-onebox-wc}, and~\ref{lemma-richness-wc}.
 In Lemma~\ref{lemma-new-constants-for-new-Lyndon} below, note that some of the assertions
 appear to be weaker than one would want.  Specifically, point (\ref{partthreeDelta}) 
 implies that ``If $\Delta \proveswc t\Box_1 \leq u\Box_2$, then 
 $\Delta \proveswc t \leq u$.''
 The reason why we prefer the more involved statement is that this is what will be used in 
Theorem~\ref{theorem-main-arrows} below.
(An additional support from our formulation is that the converses of all parts of 
Lemma~\ref{lemma-new-constants-for-new-Lyndon} are true as well.)

\begin{lemma} 
\label{lemma-new-constants-for-new-Lyndon}
Let $\Delta$ be defined from $\Gamma$ as above.
\begin{enumerate}
\item If $\Delta \proveswc t\leq \Box_1 $, then $t = \Box_1$. 
If  $\Delta \proveswc \Box_2 \leq t$, then $t = \Box_2$.   
 If $\Delta \proveswc t\leq \Box_2 $, then either $t = \Box_1$ or $t=\Box_2$.
   If $\Delta \proveswc \Box_1\leq t$, then either $t = \Box_1$ or $t=\Box_2$.
 \label{partoneDelta} 

\item If $\Delta\proveswc t\leq v\Box_1$, then one of the following holds:
\begin{enumerate}
\item
 there is some $u$ such that $t=u\Box_1$
and $\Delta\proveswc u \leq v$, or else 
\item  there are $u$ and $g^\dar$ such that $t = u\Box_2$ and
$\Delta\proveswc u \leq g^\dar \leq v$;
or 
\item 
there is a term  $s:\sigma$, 
   and constants $f,g:\sigma\to\tau$ such that $\Delta\proveswc t\leq f s$, and
   $f\lequardar  g\leq v$. 
\end{enumerate}
There are also similar facts when $\Delta\proveswc v\Box_1\leq t$, 
$\Delta\proveswc t\leq v\Box_2$, and $\Delta\proveswc v\Box_2\leq t$.
 \label{parttwoDelta} 
\item If $\Delta \proveswc t\Box_1 \leq u\Box_1$, then $\Delta \proveswc t \leq u$.    
If $\Delta \proveswc t\Box_2 \leq u\Box_2$, then $\Delta \proveswc t \leq u$.   
 \label{partthreeDelta} 

 If  $\Delta \proveswc t\Box_1 \leq u\Box_2$, then there is a function symbol $f^\uar$ such that
$\Delta \proveswc t \leq f^\uar\leq u$.

 If  $\Delta \proveswc t\Box_2 \leq u\Box_1$, then there is a function symbol $f^\dar$ such that
$\Delta \proveswc t \leq f^\dar\leq u$.

 \item If $\Delta \proveswc f^\uar$, then $\Gamma\proveswc f^\uar$; similarly for $\dar$.
 \label{partsixDelta} 
\end{enumerate}
\end{lemma}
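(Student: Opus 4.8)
The plan is to establish all four parts by induction on $\proveswc$-derivations, refining the single-constant arguments of Lemmas~\ref{lemma-Box-first}, \ref{lemma-onebox-wc}, and~\ref{lemma-richness-wc}. The one genuinely new phenomenon is the leaf $\Box_1\leq\Box_2$, which lets the two fresh constants ``cross'' through a tagged function symbol; the extra options in the statement are exactly the bookkeeping for this. For part~(\ref{partoneDelta}) I would copy the proof of Lemma~\ref{lemma-Box-first}: a bare constant $\Box_i$ can be a whole side of a derived inequality only through (\reflrule), the leaf $\Box_1\leq\Box_2$, or (\transrule), since (\pointrule), (\monorule), (\antirule) produce applications and the (\wcrule) rules produce either applications or inequalities between tagged symbols. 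Inducting on the number of (\transrule) steps and threading $\Box_1\leq\Box_2$ through transitivity gives the four sub-statements, the later ones feeding back into the earlier ones (e.g.\ $t\leq\Box_2$ forces $t\in\{\Box_1,\Box_2\}$ by using the already-proved implication $t\leq\Box_1\Rightarrow t=\Box_1$).

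Part~(\ref{parttwoDelta}) is the analogue of Lemma~\ref{lemma-onebox-wc}, but I would run a \emph{four-way} mutual induction over the shapes $t\leq v\Box_1$, $t\leq v\Box_2$, $v\Box_1\leq t$, $v\Box_2\leq t$, inducting on the number of (\transrule) steps. The new option~(b) appears precisely at an (\antirule) step: from $g^\dar$ and $\Box_1\leq\Box_2$ one derives $g\Box_2\leq g\Box_1$, so a term $u\Box_2$ sits below $v\Box_1$ via $u\Box_2\leq g^\dar\Box_2\leq g^\dar\Box_1\leq v\Box_1$ whenever $u\leq g^\dar\leq v$. As in Lemma~\ref{lemma-onebox-wc}, (\wcthree) cannot be the root for these shapes, since its conclusion is a bare inequality between tagged symbols; the (\transrule) step is handled by applying the induction hypothesis to both premises and checking that each combination of options collapses into (a), (b), or~(c), the $\lequardar$-data furnished by option~(c) being carried along unchanged.

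Part~(\ref{partthreeDelta}) is the analogue of Lemma~\ref{lemma-richness-wc} and uses part~(\ref{parttwoDelta}) as its decomposition tool. The matched cases $t\Box_1\leq u\Box_1$ and $t\Box_2\leq u\Box_2$ run exactly as in Lemma~\ref{lemma-richness-wc}; the mixed cases are the crux. For $t\Box_1\leq u\Box_2$, at the root the rules (\monorule), (\wcone), and (\wctwo) each hand us the monotone witness directly --- for instance (\wcone) forces $t=f^\uar$, $u=g^\dar$ with $f\leq g$, whence $t\leq f^\uar\leq u$ --- while (\reflrule), (\pointrule), and (\antirule) are impossible because they would require $\Box_1=\Box_2$ or the underivable $\Box_2\leq\Box_1$ (by part~(\ref{partoneDelta})). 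The (\transrule) case decomposes the middle term via part~(\ref{parttwoDelta}): in the benign subcases one box matches and we recurse, and when both premises fall under option~(c) we invoke (\wcthree) to produce the inner inequality, reading off the witness's polarity from the direction of the crossing --- this is what separates the $\Box_1$--$\Box_2$ conclusion ($f^\uar$) from the $\Box_2$--$\Box_1$ conclusion ($f^\dar$).

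Finally, part~(\ref{partsixDelta}) concerns the polarity judgements of Section~\ref{section-arrow-assertions}. A derivation of $\Delta\proveswc f^\uar$ is either a signature leaf, giving $\Gamma\proveswc f^\uar$ at once, or ends in (\polrule${}^{\uar}$) from premises $h^\uar$, $h\leq f$, $f\leq h$. Because polarity judgements mention only bare signature symbols (the boxes carry no tag), the inequalities $h\leq f$ and $f\leq h$ are $\Box$-free, so I would push them down to $\Gamma$ by the $\Delta$-analogue of Lemma~\ref{lemma-no-new-symbols-wc} --- the same substitution proof applies, the lone extra leaf $\Box_1\leq\Box_2$ collapsing to an instance of (\reflrule) once both boxes are replaced by a common inhabitant --- and then recurse on $h^\uar$ to rebuild the (\polrule${}^{\uar}$) application over $\Gamma$; the $\dar$ case is symmetric. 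I expect the main obstacle to be the (\transrule) case of part~(\ref{partthreeDelta}): tracking which box is on which side, certifying that the (\wcthree) application is genuinely available with the correct $\lequardar$ orientations, and confirming that the extracted witness always carries the polarity dictated by the crossing, is where the real content --- and the main risk of a missed subcase --- lies.
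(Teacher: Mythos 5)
Your plan coincides in structure with the paper's own proof: parts~(\ref{partoneDelta}) and~(\ref{parttwoDelta}) by induction on derivations, with the leaf $\Box_1\leq\Box_2$ creating the new option~(b) at (\monorule)/(\antirule) steps; part~(\ref{partthreeDelta}) by simultaneous induction on the number of (\transrule) steps, decomposing both premises via part~(\ref{parttwoDelta}); part~(\ref{partsixDelta}) by the substitution argument of Lemma~\ref{lemma-no-new-symbols-wc} plus the fact that the boxes never acquire tags (that fact needs its own small induction, which the paper carries out and you only assert; this is minor). Parts~(\ref{partoneDelta}), (\ref{parttwoDelta}), (\ref{partsixDelta}), and the two matched assertions of part~(\ref{partthreeDelta}) are correct as you describe them, and are essentially the paper's argument.

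The gap is exactly where you flagged the risk, and it cannot be closed. In the mixed assertions of part~(\ref{partthreeDelta}), consider the (\transrule) case in which \emph{both} premises decompose by option~(c). Here there is no ``direction of the crossing'' from which to read a polarity, because no crossing need occur: (\wcone) and (\wctwo) swallow arbitrary arguments, including mismatched boxes, without ever using the leaf $\Box_1\leq\Box_2$. All that the two option-(c) decompositions plus (\wcthree) yield is a chain $t\leq f_L\leq g_R\leq u$, where the tags of $f_L$ and $g_R$ are dictated by which of (\wcone)/(\wctwo) produced them, and both may be $\dar$. In fact the third assertion of part~(\ref{partthreeDelta}) is false as stated. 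Take base constants $c,d:\beta$, symbols $f_1,g_1,f_2,g_2:\beta\to\gamma$ with declared polarities and order $f_1^\dar\leq g_1^\uar$ and $f_2^\uar\leq g_2^\dar$ and nothing else, and $\Gamma=\set{g_1c\leq f_2d}$. Then $\Delta\proveswc f_1\Box_1\leq g_2\Box_2$ (boxes of type $\beta$), via $f_1\Box_1\leq g_1c$ by (\wctwo), the axiom $g_1c\leq f_2d$, and $f_2d\leq g_2\Box_2$ by (\wcone) --- a derivation that never touches $\Box_1\leq\Box_2$. But the only $\uar$-tagged symbols of type $\beta\to\gamma$ are $g_1$ and $f_2$ (no tags are derivable by the (\polrule) rules, since no two symbols are provably equivalent here), and neither $g_1\leq g_2$ nor $f_1\leq f_2$ is derivable from $\Delta$: over $\PP_\beta=\PP_\gamma=\set{0<1}$, interpreting $f_1,f_2,g_2$ as constantly $0$, $g_1$ as the identity, and $\semantics{c}=0$ gives a weakly complete full model of $\Delta$ refuting $g_1\leq g_2$, while interpreting $f_1,g_1,g_2$ as constantly $1$, $f_2$ as the identity, and $\semantics{d}=1$ refutes $f_1\leq f_2$. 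So there is no $h^\uar$ with $\Delta\proveswc f_1\leq h^\uar\leq g_2$, and no proof of the statement as written can exist. You are in the paper's company: its proof dismisses the mixed assertions with ``the other assertions are similar,'' and this is precisely where they are not similar (the matched assertions need no tagged witness, only $t\leq u$, which (\wcthree) supplies). The failure even survives the specialization $t=u$ that Theorem~\ref{theorem-main-arrows} actually uses --- pad the example with an untagged $g$ satisfying $g\leq f_1$ and $g_2\leq g$, so that $g\Box_1\leq g\Box_2$ becomes derivable while $g$ is provably equivalent only to the $\dar$-tagged $f_1$ and $g_2$ --- so it is the statement, not merely your proof of it, that needs repair.
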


\begin{proof}
Each assertion in part (\ref{partoneDelta}) is a straightforward induction.

Part (\ref{parttwoDelta}) also is proved by four straightforward inductions.
For one step, suppose that $\Delta \proveswc t \leq v \Box_1$ with a proof that ends with (\antirule) using
$\Box_1\leq \Box_2$.  Then there is an antitone function symbol from the signature, say $g^\dar$, such that
$v\Box_1 = g^\dar\Box_1$ and $t = g^\dar \Box_2$.
So in this case,  we have $u = g^\dar = v$.

Part (\ref{partthreeDelta}) is  proved by simultaneous induction
on the number of transitivity steps in derivations.
Here is the transitivity step in the first assertion.
  Suppose that the root uses (\transrule), say
  \[ \infer[\transrule]{t\Box_1 \leq u\Box_1}{t\Box_1 \leq v & v\leq u\Box_1 }\]
The previous lemma applies to both subproofs, and thus we have $3\times 3 = 9$ cases.
Let us suppose first that above the right subproof we have (a).
There is some $w$ such that $v$ is $w\Box_1$, and $\Delta\proveswc w\leq u$.
The left subproof ends $t\Box_1 \leq w\Box_1$, so by induction hypothesis, $\Delta\proveswc t \leq w$.
And thus $\Delta\proveswc t \leq u$ as well.

Suppose next that above the right subproof we have (b).
We thus have $w$ and $h^\dar$ such that $v = w\Box_2$ and $\Delta \proveswc w \leq h^\dar \leq u$.
Thus, the second subproof concludes $t\Box_1 \leq w\Box_2$.  By  our induction hypothesis, 
there is some $g^\uar$ such that $\Delta \proveswc  t \leq g^\uar \leq w$.   
Hence $\Delta \proveswc t \leq u$, as desired.

The other assertions in part  (\ref{partthreeDelta})  are similar to what we have 
seen, either above or in Lemmas~\ref{lemma-onebox} and~\ref{lemma-richness}.

For part (\ref{partsixDelta}).  We first show that $\Delta \not\proveswc \Box^{\uar}_i$ and  $\Delta \not\proveswc \Box^{\dar}_i$.
The proof is an easy induction on derivations, and it also uses part (\ref{partoneDelta}) of this result.
We next show something stronger than the assertion in  part (\ref{partsixDelta}):
  if $\phi$ is any assertion in this language which has no new $\Box$ symbols
and $\Delta\proveswc \phi$, then $\Gamma\proveswc
 \phi$.  The proof is basically the same as that of Lemma~\ref{lemma-no-new-symbols-wc}:
we observe that the rules (\polrule${}^\uar$) and (\polrule${}^\dar$) cannot eliminate the new $\Box$ symbols:
$f$ in these rules cannot be $\Box^{\uar}_i$ 
since $\Delta \not\proveswc \Box^{\uar}_i$; and 
 if $g$ were $\Box_1$ or $\Box_2$, then since one of the premises is $f\leq g$, we
 would have $f  = \Box^{\uar}_j$ for some $j$ by part (\ref{partoneDelta}).
This again contradicts $\Delta \not\proveswc \Box^{\uar}_j$.
\end{proof}

We turn to our main result on the system.   We state Theorem~\ref{theorem-main-arrows} only mentioning assertions of the form $f^\uar$,
but it also holds for inequality assertions $t^*\leq u^*$, with basically the same statement and proof as in Theorem~\ref{theorem-main-wc}.

\begin{theorem}$\Gamma\modelswc f^{*\uar}$ iff $\Gamma\proveswc f^{*\uar}$, and similarly for $\dar$.
\label{theorem-main-arrows}
\end{theorem}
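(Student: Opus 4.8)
The plan is to prove both directions. I state everything for $\uar$; the case of $\dar$ is entirely symmetric (using the $\dar$-clauses of the relevant lemmas). Soundness is a short induction, while completeness re-runs the model construction of Theorem~\ref{theorem-main-wc}, now over \emph{two} ordered generic constants, and reads off non-monotonicity from the two-constant machinery of Lemma~\ref{lemma-new-constants-for-new-Lyndon}.

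For soundness, $\Gamma\proveswc f^{*\uar}\Rightarrow\Gamma\modelswc f^{*\uar}$, I would argue by induction on derivations. Every rule except $(\polrule^{\uar})$ and $(\polrule^{\dar})$ is already handled by Proposition~\ref{prop-soundness-wc}, so the only new work is these two. Fix a weakly complete full model $\Model$ and an instance of $(\polrule^{\uar})$ with premises $h^{\uar}$, $h\leq g$, $g\leq h$ and conclusion $g^{\uar}$. By the induction hypothesis applied to the three premises, $\semantics{h}$ is monotone and $\semantics{h}\equiv\semantics{g}$ pointwise. Then for $x\leq y$ we have $\semantics{g}(x)\leq\semantics{h}(x)\leq\semantics{h}(y)\leq\semantics{g}(y)$, so $\semantics{g}$ is monotone and $\Model\models g^{\uar}$. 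This establishes soundness.

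For completeness I would prove the contrapositive: assuming $\Gamma\not\proveswc f^{*\uar}$, I construct a weakly complete full model of $\Gamma$ in which $\semantics{f^*}$ fails to be monotone. I run the construction of Theorem~\ref{theorem-main-wc} essentially verbatim, but over the signature $\GG$ carrying the two fresh constants $\Box_1\leq\Box_2$ of every type and over the set $\Delta$. Concretely, I form $\PPsyn_\sigma$ from $\Delta$, complete the base preorders, build the full hierarchy $\PPsem_\sigma$, and produce an applicative family $\Nodel=(\semanticsalt{\ }_\sigma)_\sigma$ of order embeddings exactly as in Lemma~\ref{lemma-semanticsalt-wc}. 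The only adjustments needed are that the single-constant Lemmas~\ref{lemma-onebox-wc} and~\ref{lemma-richness-wc} are replaced by Lemma~\ref{lemma-new-constants-for-new-Lyndon}: the weak-completeness-like hypothesis of the Extension Lemma~\ref{lemma-extend-gpa} is unaffected since the boxes are untagged, and order-reflection and injectivity of $\semanticsalt{\ }_{\sigma\to\tau}$ follow by evaluating at $\semanticsalt{\Box_1}_\sigma$ and invoking Lemma~\ref{lemma-new-constants-for-new-Lyndon}(\ref{partthreeDelta}) in place of Lemma~\ref{lemma-richness-wc}. Lemma~\ref{lemma-model-construction} then yields a full, weakly complete structure $\Model$ with $\Model\models\Delta$.

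It remains to locate the witness of non-monotonicity. Put $x_1=\semanticsalt{\Box_1}_\sigma$ and $x_2=\semanticsalt{\Box_2}_\sigma$; since $\Box_1\leq\Box_2$ in $\Delta$ and $\semanticsalt{\ }_\sigma$ preserves order, $x_1\leq x_2$. Suppose toward a contradiction that $\semantics{f^*}(x_1)\leq\semantics{f^*}(x_2)$. Unwinding the applicative family property, this reads $\semanticsalt{f^*\Box_1}_\tau\leq\semanticsalt{f^*\Box_2}_\tau$, and since $\semanticsalt{\ }_\tau$ reflects order we obtain $\Delta\proveswc f^*\Box_1\leq f^*\Box_2$. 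Lemma~\ref{lemma-new-constants-for-new-Lyndon}(\ref{partthreeDelta}) now supplies a tagged symbol $h^{\uar}$ with $\Delta\proveswc f^*\leq h^{\uar}\leq f^*$, whence $(\polrule^{\uar})$ gives $\Delta\proveswc f^{*\uar}$, and Lemma~\ref{lemma-new-constants-for-new-Lyndon}(\ref{partsixDelta}) gives $\Gamma\proveswc f^{*\uar}$, contradicting our assumption. Hence $\semantics{f^*}(x_1)\not\leq\semantics{f^*}(x_2)$, so $\semantics{f^*}$ is not monotone. Passing to the reduct $\Model^0$ to the original signature leaves the interpretation of $f^*$ unchanged, so $\semantics{f^*}^0$ is still not monotone, while $\Model^0\models\Gamma$ by Lemma~\ref{lemma-reduct}; thus $\Model^0$ is a weakly complete full $\FF$-model of $\Gamma$ with $\Model^0\not\models f^{*\uar}$, i.e.\ $\Gamma\not\modelswc f^{*\uar}$.

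The routine parts are the soundness induction and the re-run of the model construction, both of which mirror earlier arguments. The one place demanding care, and the crux of the section, is the implication ``$\Delta\proveswc f^*\Box_1\leq f^*\Box_2$ forces a tagged witness $h^{\uar}\equiv f^*$,'' which is precisely Lemma~\ref{lemma-new-constants-for-new-Lyndon}(\ref{partthreeDelta}). This is exactly why the development needs \emph{two} ordered constants rather than the single constant used in Theorem~\ref{theorem-main-wc}: a single generic point $\Box$ cannot separate monotone from non-monotone behaviour, whereas the ordered pair $\Box_1\leq\Box_2$ turns the semantic statement ``$\semantics{f^*}$ is monotone'' into the syntactic statement ``$f^*\Box_1\leq f^*\Box_2$ is provable,'' which the polarity rules can then convert into $f^{*\uar}$.
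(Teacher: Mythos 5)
Your proposal is correct and takes essentially the same approach as the paper: both rely on the two ordered fresh constants $\Box_1\leq\Box_2$, the canonical model built by re-running Lemma~\ref{lemma-semanticsalt-wc} over $\Delta$, and the key step of evaluating at the boxes, reflecting the order, and applying Lemma~\ref{lemma-new-constants-for-new-Lyndon}(\ref{partthreeDelta}) together with $(\polrule^{\uar})$ and part~(\ref{partsixDelta}). The only difference is organizational: you run completeness contrapositively (exhibiting a weakly complete full model of $\Gamma$ in which $\semantics{f^*}$ is not monotone), whereas the paper argues directly that the canonical interpretation reflects polarities and so turns $\Gamma\modelswc f^{*\uar}$ into $\Delta\proveswc f^{*\uar}$ --- these are the same argument.
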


\begin{proof}
Here is the soundness half. 
Let $\Model$ be a full hierarchy, and assume the hypotheses of  (\polrule${}^\uar$).
Let the type involved be the (function) type $\sigma$. Since $f^\uar$, we know that  $\semantics{f}$ is monotone.  
Also,
$\semantics{f}\leq \semantics{g}\leq \semantics{f}$.  
Thus $ \semantics{g}$ is also monotone, as desired.
The argument for  (\polrule${}^\dar$) is similar.

We turn to the completeness of the logic.   Suppose that $\Gamma\modelswc f^\uar$.
Starting from $\Gamma$, we form a theory $\Delta$ as mentioned earlier: 
for each type $\sigma$, we add two fresh constants $\Box_1$ and $\Box_2$ to the
signature, and the assertion $\Box_1\leq \Box_2$ to the theory.
We need to know that $\Delta\modelswc f^\uar$, and this is straightforward by considering reducts:
every full model of $\Delta$ is (after throwing away the interpretations of the new symbols) a model of $\Gamma$,
and so the interpretation of $f$ will be monotone.

At this point we are going to replay the proof of Theorem~\ref{theorem-main-wc} and dwell only on the changes that are to be made.
Form $\PPsyn_{\sigma}$ and $\PPsem_{\sigma}$ as before, except that now we regard them as \emph{polarized} preorders in the evident way:
in  $\PPsyn_{\sigma}$  we use provability from $\Gamma$ to determine the polarities, and in  $\PPsem_{\sigma}$ we use the monotonicity/antitonicity of actual functions.

In  Lemma~\ref{lemma-semanticsalt-wc} we amend the statement to also say that
for a function type $\sigma$,
 $\semanticsalt{\ }_{\sigma}$
reflects polarities.  (This function \emph{preserves} polarities, since this is part of the definition of an applicative family of interpretations.)
 We therefore must check that if $\semanticsalt{g}$ is monotone, then $\Delta\proveswc g^\uar$.
 Since $\semanticsalt{\ }_{\sigma}$ is monotone (by induction hypothesis), $\semanticsalt{\Box_1} \leq \semanticsalt{\Box_2}$.
 By monotonicity, $\semanticsalt{g}(\semanticsalt{\Box_1}) \leq \semanticsalt{g}(\semanticsalt{\Box_2})$ in $\PPsem_{\tau}$.
 Since $\semanticsalt{\ }_{\tau}$ reflects order,
we get that $\Delta \proveswc g \Box_1 \leq g\Box_2$.   By Lemma~\ref{lemma-new-constants-for-new-Lyndon}(\ref{partthreeDelta})
with $t= g=u$,
there is a symbol $h$ in the underlying signature which is tagged $\uar$ such that $\Delta\proveswc g\leq h^\uar \leq g$.
By (\polrule${}^\uar$), $\Delta \proveswc g^\uar$.  This concludes the changes in  Lemma~\ref{lemma-semanticsalt-wc}.

To resume and complete the proof of our theorem, suppose that $f^*$ is a symbol of function type and $\Gamma\modelswc f^{*\uar}$.
Consider the full model $\Model$ whose preorders are $\PPsem_{\sigma}$ with interpretations given by   Lemma~\ref{lemma-semanticsalt-wc}.
Since those interpretations are monotone, $\Model\modelswc \Delta$.   Thus, $\semantics{f^*} = \semanticsalt{f^*}$ is monotone.  
Since $\semanticsalt{\ }$ reflects polarities, $\Delta\proveswc f^{*\uar}$.
In view of Lemma~\ref{lemma-new-constants-for-new-Lyndon}(\ref{partsixDelta}), $\Gamma\proveswc f^{*\uar}$.
\end{proof}

The result in this section may be recast as a ``Lyndon-type'' theorem.
Statements like this may be found in~\cite{IcardMoss21}
and~\cite{Tune2016}.   But in both of these cases, the hypotheses are different, the languages include variables and
abstraction but no polarity assertions, 
and the class of models includes more general models rather than just the full structures.
But all of these are of the form ``semantically monotone implies $\uar$;
semantically antitone implies $\dar$''.  

\begin{corollary}
Fix a set $\Gamma$.
  Let $t$ be a term of function type, and assume that  $\semantics{t}$ is monotone
in all (full) models of $\Gamma$; and also that for some function symbol $f$ from the underlying signature, $\Gamma\proveswc f \leq t \leq f$.
Then there is a symbol $f$ 
with this property such that $\Gamma\proveswc f^\uar$.
\label{corollary-Lyndon}
\end{corollary}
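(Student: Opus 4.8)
The plan is to reduce the statement to Theorem~\ref{theorem-main-arrows} applied to the single function symbol $f$ supplied by the hypothesis $\Gamma\proveswc f\leq t\leq f$. This same $f$ will be the witness we produce, so the entire task is to establish $\Gamma\proveswc f^\uar$; combined with the given $\Gamma\proveswc f\leq t\leq f$, the symbol $f$ then has both properties demanded in the conclusion. By the completeness half of Theorem~\ref{theorem-main-arrows}, it suffices to prove the semantic statement $\Gamma\modelswc f^\uar$, i.e.\ that $\semantics{f}$ is monotone in every weakly complete full model of $\Gamma$. So I would fix such a model $\Model$ with $\Model\models\Gamma$ and argue that $\semantics{f}$ is monotone there.

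First I would transfer information from the term $t$ to the symbol $f$ at the semantic level. Applying soundness (Proposition~\ref{prop-soundness-wc}) to the two provable inequalities $\Gamma\proveswc f\leq t$ and $\Gamma\proveswc t\leq f$, we obtain $\semantics{f}\leq\semantics{t}$ and $\semantics{t}\leq\semantics{f}$ in $\Model$; that is, $\semantics{f}\equiv\semantics{t}$ as elements of the pointwise-ordered function preorder. Next I would invoke the hypothesis of the corollary: since $\Model$ is in particular a full model of $\Gamma$, the function $\semantics{t}$ is monotone.

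The only remaining observation is that pointwise equivalence transfers monotonicity. Writing the function type as $\sigma\to\tau$, so that $\semantics{f},\semantics{t}:\PP_{\sigma}\to\PP_{\tau}$, and taking $x\leq y$ in $\PP_{\sigma}$, we have $\semantics{f}(x)\leq\semantics{t}(x)\leq\semantics{t}(y)\leq\semantics{f}(y)$, where the outer inequalities come from evaluating $\semantics{f}\equiv\semantics{t}$ pointwise and the middle one from monotonicity of $\semantics{t}$. Hence $\semantics{f}$ is monotone in $\Model$. Since $\Model$ was an arbitrary weakly complete full model of $\Gamma$, this yields $\Gamma\modelswc f^\uar$, and then Theorem~\ref{theorem-main-arrows} delivers $\Gamma\proveswc f^\uar$, completing the argument.

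I do not expect a genuine obstacle here: beyond quoting Theorem~\ref{theorem-main-arrows}, the content is just the passage from the term $t$ to the symbol $f$ (via the provable two-sided inequality together with soundness) and the elementary fact that two pointwise-equivalent functions are monotone together. The one point to be careful about is that the hypothesis supplies monotonicity of $\semantics{t}$ in \emph{all} full models, which a priori includes more than the weakly complete ones; but the weakly complete full models form a subclass, so the hypothesis applies to each model $\Model$ used above, and the reduction to $\Gamma\modelswc f^\uar$ is legitimate. The antitone case is identical, replacing $\uar$ by $\dar$ throughout and reversing the middle inequality in the pointwise-equivalence step so as to use antitonicity of $\semantics{t}$.
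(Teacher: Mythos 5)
Your proposal is correct and is exactly the intended derivation: the paper states this as an immediate corollary of Theorem~\ref{theorem-main-arrows}, obtained by showing $\Gamma\modelswc f^\uar$ (via soundness giving $\semantics{f}\equiv\semantics{t}$ in each weakly complete full model, the hypothesis giving monotonicity of $\semantics{t}$ there, and the pointwise sandwich $\semantics{f}(x)\leq\semantics{t}(x)\leq\semantics{t}(y)\leq\semantics{f}(y)$) and then invoking the completeness half of the theorem. Your handling of the subclass point (weakly complete full models being among all full models) matches what the statement requires, so there is nothing to add.
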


\rem{
\subsection{Order-aware types}
\label{section-order-aware}
\renewcommand{\arrowdot}{\overset{\cdot}{\rightarrow} }

We 
return to the very start of this paper, the presentation of tonoids as operations defined by types as in (\ref{req4}).
As the reader may have noticed, the type system in this paper was not sufficient to deal with (\ref{req4}).
All of our types were 	``simpler arrows'' $\to$ rather than $\arrowplus$ or $\arrowminus$.   It is thus of interest to
extend our results to the system where we incorporate monotonicity/antitonicity information
into the type system in a wholehearted manner. 

We are going to present work on this topic in a high-level way, partly because it takes a fair amount of work just
to establish properties of the syntax, and partly because we are not going to prove the completeness of the 
relevant logical system.  So in a sense, we are merely hand-waving in the direction of a larger development
based on this ideas which we have already seen.

 \begin{definition} [Markings and types]
 The set $\Mar$
  of \emph{markings} is 
$\set{\cdot,+,-, \O}$. 
We use $m$ and $m'$ to denote markings.
 We always take $\Mar$ to be partially ordered as shown in the Hasse diagram below:
 
  \newcommand{\picthree}{
\xymatrix@-1pc{ & \ar@{-}[dl] \cdot  \ar@{-}[dr] & \\
- \ar@{-}[dr] &  &+ \ar@{-}[dl]\\
 & \circ & 
}}

\[ \picthree \]
 

 \label{definition-markings}
 \end{definition}
\noindent We are reminded of the Dunn-Belnap four-valued logic, of course.  But the interpretation of the markings is a little different here.
The $\cdot$ at the top means ``no information,'' and  the $\circ$ at the bottom means ``both monotone and antitone'' rather than contradictory.

Our set $\mathcal{T}$ of \emph{types}
is defined from the base types using four kinds of arrows, one for each marking:
\begin{eqnarray*}
 \tau & ::= & b \;\mid \tau\arrowdot\tau \mid \tau \arrowplus \tau \mid \tau \arrowminus \tau \mid \tau \arrowo \tau \end{eqnarray*}

\paragraph{Full structures}
Our definition of a full structure in (\ref{fulltypes}) expands.
Given a family of preorders $(\PP_{\beta})_\beta$ for the base types, 
a full structure in this order-aware setting is a family $\Model = (\PP_{\sigma})_\sigma$
such that for function types $\sigma\arrowm\tau$, 
\begin{equation}
\label{fulltypes-again}
\begin{array}{lcl}
 \PP_{\sigma\arrowdot\tau}  & =  & ((\PP_{\tau})^{\PP_{\sigma}}, \leq)\mbox{, where the order is pointwise}\\
  \PP_{\sigma\arrowplus\tau}  & =  & \set{f \in P_{\sigma\to\tau} : f \mbox{ is monotone}} \\
  \PP_{\sigma\arrowminus\tau}  & =  & \set{f \in P_{\sigma\to\tau} : f \mbox{ is antitone}} \\
    \PP_{\sigma\arrowcirc\tau}  & =  & \set{f \in P_{\sigma\to\tau} : f \mbox{ is both monotone and antitone}} \\
 \end{array}
 \end{equation}
We take  the last three of these to be induced sub-preorders of 
 $\PP_{\sigma\arrowdot\tau}$.  
 The polarization structures on these preorders are as expected.

 
 \begin{definition} \label{prec}
 Define $\preceq \; \subseteq \mathcal{T} \times \mathcal{T}$ to be the least
 preorder with the property that whenever $\sigma ' \preceq \sigma$ and $\tau \preceq \tau '$, 
 and $m \sqsubseteq m'$, we have $\sigma \overset{m}{\rightarrow} \tau \; \preceq \; \sigma ' \overset{m'}{\rightarrow} \tau '$.
  We also endow the preorder $(\Types,\preceq)$  with polarization, as follows:
 The base types are untagged.  If $\sigma$ is of the form $\sigma'\arrowplus\tau$, then $\sigma^\uar$.
  If $\sigma$ is of the form $\sigma'\arrowminus\tau$, then $\sigma^\dar$.
Here is how to think about an assertion  $\sigma \preceq \tau$:
if means that any term of type $\sigma$ could also be considered of type $\tau$, in a canonical way. 
 \label{def2}
 \end{definition}
 
In~\cite{IMT,IcardMoss21}, some facts about $\preceq$ are shown.
First, consider the relation $R(\sigma,\tau)$ iff $\sigma$ and $\tau$ have an upper bound.
This relation turns out to be transitive, and if $R(\sigma,\tau)$, then $\sigma$ and $\tau$ have a \emph{least} upper bound.
In any full structure $\Model = (\PP_{\sigma})_\sigma$, there are maps $\pi_{\sigma,\tau}$ for $\sigma\preceq\tau$.
These maps are monotone.  In simple cases they are inclusions or restrictions.

\rem{
\begin{proposition} 
If $\sigma\arrowm\tau \preceq \sigma'\arrowmprime\tau'$, 
then $\sigma'\preceq \sigma$, $\tau \preceq \tau'$, and $m \sqsubseteq  m'$.
\label{proposition-converse-order}
\end{proposition}
}

\paragraph{The polarized set of terms}
The syntax of terms changes a bit also.   We adopt the following term formation rule:
\[
\infer{tu: \tau}{t: \sigma\to \tau & u : \sigma' & \sigma'\preceq \sigma}
\]
  This term formation rule is related to our reading of the order $\preceq$ from above.
We have formulated the syntax in a way which only allows terms to  have one type, and we do not have explicit coercions 
related to the order $\preceq$.

\paragraph{Semantics in full structures}
  For  $t: \sigma \arrowm \tau$, and $u:\sigma'\preceq \sigma$, we define 
  $\semantics{tu} = \semantics{t}(\pi_{\sigma',\sigma}\semantics{u})$.

We wish to define the notion $\Model \models t:\sigma_1 \leq u: \sigma_2$.   We define this 
only when $\sigma_1$ and $\sigma_2$ have a (least) upper bound, say $\sigma$.
  The definition is
\[
\Model\models  t:\sigma_1 \leq u: \sigma_2 \quadiff
\pi_{\sigma_1,\sigma}(\semantics{t}) \leq \pi_{\sigma_2,\sigma}(\semantics{u}),
\]

\paragraph{Proof system}
In the proof theory, in all the rules 
containing premises of the form $f^\uar$ or $f^\dar$, where $f$ had been a symbol in the underlying signature $\FF$,
we now 
allow \emph{any} tagged symbol of the appropriate type.  
For example, (\monorule) and  (\antirule) would now read:
\[
\infer[\monorule]{st \leq su: \tau}{s^{\uar} & t:\sigma_1\leq u:\sigma_2}
\qquad
\infer[\antirule]{su \leq st: \tau}{s^{\dar} &t:\sigma_1\leq u:\sigma_2}
\]
In both of these, the type of $s$ is assumed to be $\succeq$  both $\sigma_1$ and $\sigma_2$, so that it is meaningful to write $s t$ and $su$.

\newcommand{\zero}{\mbox{\sf zero}}
\newcommand{\one}{\mbox{\sf one}}
\newcommand{\plus}{\mbox{\sf plus}}
\newcommand{\minus}{\mbox{\sf minus}}
\newcommand{\abs}{\mbox{\sf abs}}

\begin{example}
We take a single base type, $r$.    
Let us take the signature with 
\[
\begin{array}{l}
\zero: r\\
\one:  r \\
\\
\end{array}
\qquad\qquad
\begin{array}{l}
\plus: r\arrowplus r\arrowplus r \\
\minus : r\arrowplus r\arrowminus r \\
\end{array}
\qquad\qquad
\begin{array}{l}
\abs: r\arrowdot r\\
\\
\end{array}
\]
We have terms  such as
\[
\begin{array}{l}
\abs: r\arrowdot r \\
\plus\ \one : r\arrowplus r\\
\end{array}
\qquad\qquad
\begin{array}{l}
\plus\ \one\ \one: r \\
\minus (\plus\ \one\ \one): r\arrowminus r\\
\end{array}
\]
The tagging of these would be as follows: $\abs$ would be untagged,  as would $\plus\ \one\ \one$.
We also have $(\plus\ \one)^\uar$ 
and $(\minus (\plus\ \one\ \one))^{\dar}$.
Now let $\Gamma$ be the following set:
\[
\set{ \zero \leq \one, \one\leq \plus\ \one\ \zero,  \plus\ \zero\leq \abs}
\]
In the last of these, the type of $\plus\ \zero$ is $r\arrowplus r$, and the type of $\abs$ is $r\arrowdot r$. These types
are related by $\preceq$, so it is sensible for us to write $\plus\ \zero\leq \abs$.  

Here is an example of a derivation showing that $\Gamma\proves  \minus\  \one\ \one \leq \minus\ (\plus\ \one\ \one)\ \one$:

\[
\infer[\pointrule]{ \minus\  \one\ \one \leq \minus\ (\plus\ \one\ \one)\ \one}
{
\infer[\monorule]{\minus\  \one  \leq \minus\ \plus\ \one\ \one}
{\minus^{\uar} &
\infer[\transrule]{\one \leq \plus\ \one\ \one}
{
\one \leq \plus\ \one\ \zero 
&
\infer[\monorule]{\plus\ \one\ \zero \leq \plus\ \one\ \one}{(\plus\ \one)^{\uar} & \zero \leq \one }
}
}
}
\]
As an example of a full structure, we take $\PP_r= \RR$,  the real numbers with the usual ordering $\leq$, and generate the full type hierarchy over it.
In the semantics, we take $\semantics{\zero} = 0$, $\semantics{\one} = 1$, $\semantics{\plus}$ to be the curried version of addition, as an element of
$\RR \arrowplus\RR\arrowplus\RR$, $\semantics{\minus}$ to be the curried version of subtraction, and $\semantics{\abs}$ to be the absolute value function.
\end{example}
}

\subsection{The logic of higher-order applicative terms and equality}

For our last variation, we consider  higher-order applicative terms and equality.
In other words,
we abandon the order structure entirely and consider
 the simply typed lambda calculus without variables or abstraction.  The statements of interest are
 identities between terms of the same type, and the semantic notion is given by (\ref{eq-consequence}).
 For the logic, we take the reflexive, symmetric, and transitive laws for $=$, and also the congruence rule
 for application
 \[
 \infer[\congrule]{tu = t'u'}{ t = t' & u = u'}
 \]
This logic is complete, and we sketch the proof.

First, we need lemmas on constants in both the semantics and the proof theory.
Let $\Gamma$ be a set of identity assertions between terms, and let $\GammaBox$ add fresh constants of every type.
In the syntax, the lemma would say that if
$\GammaBox \proves t\Box = u\Box$, then $\Gamma \proves t = u$.
In the semantics, we would want to know that for all assertions $t^* = u^*$ in the language of $\Gamma$, 
if $\Gamma\models t^* = u^*$, then also $\GammaBox\models t^* = u^*$.
 
Suppose that  $\Gamma\models t^* = u^*$.
As we have argued,  we have $\GammaBox\models t^* = u^*$.
For each type $\sigma$, let $\Xsyn_{\sigma}$ be the set of terms of type $\sigma$ in the expanded signature, 
modulo the equivalence relation 
$R(t,u) \leftrightarrow \GammaBox\proves t = u$.  So the elements $\Xsyn_{\sigma}$ are equivalence classes $[t]$ of terms.

We build a full hierarchy of sets $(\Xsem_{\sigma})$ in the evident way, by taking $\Xsem_{\beta} = \Xsyn_{\beta}$ for base types $\beta$,
and for other types, $\Xsem_{\sigma\to\tau} = (\Xsem_{\tau})^{\Xsem_\sigma}$.

We now prove that there is a family of injective maps $\semanticsalt{\ }_{\sigma} : \Xsyn_{\sigma}\to\Xsem_{\sigma}$
with the property that $\semanticsalt{[tu]}_{\tau} = \semanticsalt{[t]}_{\sigma\to\tau}(\semanticsalt{[u]}_{\sigma})$.
When $\sigma$ is a base type, we take $\semanticsalt{\ }_{\sigma} $ to be the identity.
Suppose we are given $\semanticsalt{\ }_{\sigma}$ and $\semanticsalt{\ }_{\tau}$ with the desired properties, 
and we wish to define $\semanticsalt{\ }_{\sigma\to\tau}$.
The definition is 
\[
\semanticsalt{[t]}_{\sigma\to\tau}(x) = 
\biggl\{
\begin{array}{ll}
\semanticsalt{[tu]}_{\tau} & \mbox{if for some (unique) $u:\sigma$, $x = \semanticsalt{[u]}_{\sigma}$} \\
\semanticsalt{[\Box_{\tau}]}_{\tau} & \mbox{if there is no such term $u:\sigma$}\\
\end{array}
\biggr.
\]
where $t:\sigma\to\tau$ is a term and $x\in\PPsem_{\sigma}$.
In the bottom line, $\semanticsalt{[\Box_{\tau}]}_{\tau}$  is the only element of $\Xsem_{\tau}$  that is sure to exist; no features of it are important.
Here is the verification of the uniqueness of $x$ in the top line: if $\semanticsalt{[u]}_{\sigma} = x =\semanticsalt{[u']}_{\sigma}$,
then since $\semanticsalt{\ }_{\sigma}$ is injective (by our inductive assumption), $[u] = [u']$.  
We also check that the top line of this definition is independent of the choice of representatives of the classes $[t]$ and  $[u]$.
For if  $\Gamma\proves t = t'$ and also
$\Gamma\proves u = u'$, then also $\Gamma\proves tu = t'u'$ by (\congrule).
Hence $[tu] = [t'u']$.
It remains to check that $\semanticsalt{\ }_{\sigma\to\tau}$ is injective.   Suppose that $\semanticsalt{[t]}_{\sigma\to\tau} = \semanticsalt{[t']}_{\sigma\to\tau}$.
 Then 
 \[
   \semanticsalt{t\Box_{\sigma}}_{\tau} =
 \semanticsalt{[t]}_{\sigma\to\tau}(\semanticsalt{[\Box_{\sigma}]}_{\sigma})  = 
  \semanticsalt{[t']}_{\sigma\to\tau}(\semanticsalt{[\Box_{\sigma}]}_{\sigma}) = \semanticsalt{t'\Box_{\sigma}}_{\tau}
 \]
 So by injectivity of $\semanticsalt{ \ }_{\tau}$, $\GammaBox \proves t\Box = t'\Box$.
 Thus  $\GammaBox \proves  t = t'$, and in other words $[t] = [t']$.
 
 This completes the inductive step of the lemma.   We conclude with a proof of the overall completeness theorem.
 Suppose that $\Gamma\models t^* = u^*$.
 Then also $\GammaBox\models t^* = u^*$.  Let $\Model$ be the full type hierarchy $(\Xsem_{\sigma})_{\sigma}$.
 We have defined maps $\semanticsalt{\ }_{\sigma}: \Xsyn_{\sigma}\to \Xsem_{\sigma}$.
 From these, we interpret the symbols in the original signature by taking $\semantics{t} = \semanticsalt{[t]}_{\sigma}$ for the unique $\sigma$ such that $t:\sigma$.
 As in Lemma~\ref{lemma-model-construction}, for all terms $t:\sigma$, $\semantics{t} = \semanticsalt{[t]}_{\sigma}$.
 It follows that $\Model\models \Gamma$.   By our assumption that $\Gamma\models t^* = u^*$, we see that 
 $\semantics{t^*} = \semantics{u^*}$.  Let $\sigma$ be the type of $t^*$.
 Then  $\semanticsalt{[t^*]}_{\sigma} = \semanticsalt{[u^*]}_{\sigma}$. 
 Since $\semanticsalt{\ }$ is injective, $\GammaBox\proves t^* = u^*$.   By one our our points above, this tells us that 
  $\Gamma\proves t^* = u^*$, as desired.   

\section{Conclusion}

The main results in this paper were the completeness theorems, 
Theorems~\ref{theorem-main-wc} and~\ref{theorem-main-arrows}, and also Corollary~\ref{corollary-Lyndon}.
The theorems suggest that the logical systems in the paper are the ``right'' ones: they are complete for 
the most natural semantics of higher-order applicative terms using a semantics where
 one can declare symbols to be 
interpreted in a monotone or antitone way, and also assert inequalities between terms.
Corollary~\ref{corollary-Lyndon} does something similar, but not for entailment so much as for the 
expressive features of the system.   

There are two ways in which it would be important to go beyond what we did here.

First, 
we 
return to the very start of this paper, the presentation of tonoids as operations defined by types as in (\ref{req4}).
As the reader may have noticed, the type system in this paper was not sufficient to deal with (\ref{req4}).
All of our types were 	``simpler arrows'' $\to$ rather than $\arrowplus$ or $\arrowminus$.  
So we cannot type a function as in (\ref{req4}).
 It is thus of interest to
extend our results to the system where we incorporate monotonicity/antitonicity information
into the type system in a wholehearted manner, at all higher types.
It is possible to formulate a syntax, semantics, and logical system that can handle this extension.
The details are not so simple, and so we shall not enter in to them.  Those details may be found in our paper~\cite[Section 5]{IcardMoss2014}.
We expect that the methods of this paper show that the logical system there is complete for full models, at least when one works over
weakly complete preorders.    

Second, we have not been able to prove the completeness theorem that we are after in this subject, where one considers full 
preorder hierarchies built over arbitrary preorders, without assuming that the base preorders $\PP_{\beta}$ are weakly complete.
This would mean using the most natural logic for higher-order terms in our setting, the rules in Figure~\ref{fig-rules1}.   
In order to motivate the problem, let us review where in our work the assumption of weak completeness actually was used.
Assuming weak completeness gives the additional (\wcrule) rules stated in Figure~\ref{fig-rules2}.   Those rules are not 
sound for all preorders, as shown in Example~\ref{example-up-down}, part (\ref{notsoundalways}).  Yet, they played a key role in 
Lemma~\ref{lemma-semanticsalt-wc}.     Specifically,
Lemma~\ref{lemma-semanticsalt-wc} called on  Lemma~\ref{lemma-extend-gpa}, and in order to apply 
 Lemma~\ref{lemma-extend-gpa}, the logic needed to have the (\wcrule) rules.

Here is a related point: our overall work made critical use of the passage from a preorder $\PP$ to a completion $\PP^*$,
and it also made critical use of the Extension Lemma~\ref{lemma-semanticsalt-wc}. 
  To follow the general proof strategy
of this paper, we seem to require a weaker type of completeness (one that adds fewer points), and a stronger Extension Lemma
(one that works for the original logic).  Getting all of this to work out is a challenge.

\subsection*{Acknowledgements} We are grateful to an anonymous referee for useful comments
and corrections. 
All remaining errors are our own.
We also thank 
Katalin Bimb\'{o} for all her work on this volume and other projects which keep alive the memory of Mike Dunn.  

\bibliographystyle{plain}

 \end{document}